
\documentclass[12pt]{amsart}
\topmargin -15mm
\textheight 24truecm
\textwidth 16truecm
\oddsidemargin 5mm
\evensidemargin 5mm
\usepackage{amsfonts, amssymb}
\usepackage{amsmath}
\usepackage{amsthm}
\usepackage{hyperref}
\hypersetup{colorlinks=true
urlcolor=green
linkcolor=blue
citecolor=red}
\usepackage{xcolor}
\newtheorem{theorem}{Theorem}
\newtheorem{corollary}{Corollary}[theorem]
\newtheorem{lemma}{Lemma}
\newtheorem{remark}{Remark}

\begin{document}

\title[Integrable Differential Systems for Deformed Laguerre-Hahn O.P.]{Integrable Differential Systems for Deformed Laguerre-Hahn Orthogonal Polynomials}

\author{Maria das Neves Rebocho$^1$, Nicholas Witte$^{2,3}$}

\address{$^1$
Departamento de Matem\'atica and CMA-UBI, Universidade da Beira Interior, Rua Marqu\^{e}s d’\'Avila e Bolama, 6201-001 Covilh\~{a}, Portugal}
\address{$^2$ Department of Computer Science, Texas Tech University, Lubbock TX 79409-3104, USA}
\address{$^3$ School of Mathematics and Statistics, Victoria University of Wellington,Wellington 6012, New Zealand}
\email{\tt mneves@ubi.pt, n.s.witte@protonmail.com}

\date{\today}

\textcolor{red}{}
\begin{abstract}
Our work studies sequences of orthogonal polynomials $ \{P_{n}(x)\}_{n \geq 0}$ of the Laguerre-Hahn class,
whose Stieltjes functions satisfy a 
 Riccati type differential equation with polynomial coefficients,
{which} are subject to a deformation parameter $t$.
We derive systems of differential equations and give Lax pairs,
yielding non-linear differential equations in $t$ for the recurrence relation coefficients and Lax matrices of the orthogonal polynomials.
A specialisation to a non semi-classical case obtained via a M\"{o}bius transformation of a Stieltjes function related to a  deformed Jacobi weight is studied in detail,
showing this system is governed by a differential equation of the Painlev\'e type P$_\textrm{VI}$.
The particular case of P$_\textrm{VI}$ arising here has the same four parameters as the solution found by Magnus \cite{magnus-jcam} but differs in the boundary conditions.
\end{abstract}

\subjclass[2000]{33C47, 39A99}
\keywords{Orthogonal polynomials; matrix Sylvester equations; deformed weight; semi-classical weight; Painlev\'e equations}

\maketitle

\section{Introduction}
\subsection{Integrable Probability and Mathematical Physics}

Integrability, in one form or another, is central in soluble statistical mechanics models \cite{Baxter_1982},\cite{WMTB_1976},
the one-dimensional impenetrable Bose gas  \cite{FFGW_2003a}, \cite{JMMS_1980},
the enumerative combinatorics of the longest increasing subsequences of random permutations \cite{BDJ_1999},
the Gaussian, Laguerre and Jacobi unitary ensembles of random matrix theory \cite{TW_1994}, \cite{TW_1994b}, \cite{TW_1994a},
non-equilibrium particle hopping models such as the Asymmetric Simple Exclusion Process \cite{TW_2008}
and models in the Kardar-Parisi-Zhang universality class \cite{PS_2004}, \cite{SFP_2005},
and the modelling of L-functions in analytic number theory  \cite{CRS_2006}, \cite{KS_2000a}.
In particular, for the ensembles of random matrices mentioned above, which have joint eigenvalue probability distribution function given in terms of certain modifications of the classical Gaussian, Laguerre or Jacobi weights, it is well-known that certain statistical quantities related to the eigenspectrum can be expressed in terms of semi-classical polynomials or of the corresponding Hankel determinants, and are, in turn, characterised by  certain special  non-linear differential equations 
(see \cite[Chap. 8]{Forr_2010}, \cite{ble-its,F-W2,F-W3, F-W4} and their references).
Additional connections with integrable Hamiltonians, B{\"a}cklund transformations, Lax pairs, Toda lattices and Painlev\'e equations are well-known in the literature
and we refer the interested reader to the reference list of \cite[Sec. 1]{magnus-jcam}.

Integrability arises in these models and applications either explicitly as a semi-classical weight $ w $
(here a deformation of one of the classical weights in the Askey Tableaux \cite{Koekoek+Lesky+Swarttouw_2010} according to a precise definition given below),
or equivalently by $w$ being characterised as the solution of a Pearson equation - the first order homogeneous differential equation
\begin{equation}
	A(x)\partial_{x}w(x) = C(x)w(x)\,,
\label{PearsonEq}
\end{equation}
with $A, C$ being co-prime polynomials.
Alternatively, this characterisation may be implied by the fact that the Stieltjes transform of the weight $ w $ with support $I$
\begin{equation}
	f(z) := \int_{I}\frac{w(x)}{z-x}\,dx ,\qquad z \in \mathbb{C}\backslash I\,, \label{eq:Sf-w}
\end{equation}
satisfies the inhomogeneous form of the Pearson equation,
\begin{equation}
	A(x)\partial_{x} f(x) = C(x)f(x)+D(x) .
\label{StieltjesFn_SCclass}
\end{equation}
This latter form can also be taken as the defining characteristic \cite{shohat}.

However it is known within approximation theory that the semi-classical class is subsumed within a broader class - known as the Laguerre-Hahn class -
where the defining equation is the Riccati generalisation of \eqref{StieltjesFn_SCclass}, namely \cite{magnus-ric}
\begin{equation}
	A(x)\partial_{x}f(x) = B(x)f^2(x)+C(x)f(x)+D(x)\,.
\label{eq:ric-f}
\end{equation}
When the leading coefficient $ B$ vanishes we are back to the semi-classical class.
The class of Laguerre-Hahn orthogonal polynomials - the ones corresponding to Stieltjes functions satisfying (\ref{eq:ric-f}),
contains not only the semi-classical polynomials but also their standard modifications
\cite{ask,garcia-etal,gros1,gros2,peher-perturb,ronv-bel,ronv-marc,ronv-va,slim,zed}.
In the viewpoint of orthogonal polynomial theory, special emphasis has been given to the problem of classification, that is, describing the recurrence relation coefficients of the orthogonal polynomial system, given the polynomials $A, B, C, D$ in (\ref{eq:ric-f}) (see, amongst many others, \cite{filipuk-MNR-JMAA,filipuk-MNR-ITSF,magnus-snul}).

In the present paper we take Stieltjes functions satisfying (\ref{eq:ric-f}) subject to a dependence on an additional variable, say, a deformation parameter $t$.
In general terms, the main goal is to analyse the recurrence relation coefficients of the corresponding orthogonal polynomials under such a dependence.
These kind of problems have been analysed from different perspectives and we refer the interested reader to \cite{AB-MNR-deformed} and its references;
see also \cite{peher-toda}, where the equivalence between the equations of the infinite Toda (Volterra) lattice and the Riccati
differential equation for the Stieltjes function is established.

The framework of our study relies on the matrix differential systems satisfied by the orthogonal polynomials.
Here we adopt, to the Laguerre-Hahn setting,
a technique that can be traced back to the work of E. Laguerre in \cite{lag} - the so-called Laguerre method. Essentially, this method  gives a mechanical way to obtain differential equations satisfied by orthogonal
polynomials provided one knows the linear differential equation for the Stieltjes function. It has been revisited in recent literature on semi-classical polynomials, see, for instance,
 \cite{bonan} and \cite[Sec. 3]{magnus-jcam}, amongst many others.
 In the present paper, by using a similar algorithm to  \cite{lag}, we deduce that the  systems of orthogonal polynomials corresponding to (\ref{eq:ric-f}) satisfy difference-differential equations (see Theorem \ref{theo:magnus}),
which are then coupled into the form of matrix Sylvester equations for $ Y_{n}(x,t) $ defined in \eqref{eq:Yn-til} below
\begin{equation*}
	\partial_x Y_n=\mathcal{A}_nY_n-Y_n\mathcal{C}\,, \quad n \geq 0\,,
\end{equation*}
with $\mathcal{A}_n$ and $\mathcal{C}$  matrices with rational functions of $x$ as entries.
Similarly, one has a differential equation for $ f(x,t) $ with respect to the deformation variable $t$, given as
\begin{equation*}
	\widehat{A} \partial_t f=\widehat{B}f^2+\widehat{C}f+\widehat{D}\,, \label{eq:ric-dif-S-t}
\end{equation*}
and a system of   Sylvester equations
\begin{equation*}
	\partial_t Y_n=\mathcal{B}_nY_n-Y_n\widehat{\mathcal{C}}\,, \quad n \geq 0\,,
\label{eq:sylvester-Yn-t}
\end{equation*}
where $\mathcal{B}_n$ and $\widehat{\mathcal{C}}$ are  matrices with rational entries in the spectral variable $x$.
The consistency of the two differential equations for $Y_n$ combined with the consistency of the two differential equations for the Stieltjes function yields
(under the assumption that $f$ is not algebraic) the compatibility conditions deduced in Theorem \ref{teo:compat},
which are comparable to the Schlesinger equations in the semi-classical setting \cite{chud}.

In Section \ref{sec:3}, we specialize the results in Section \ref{sec:2} to a sequence of Laguerre-Hahn polynomials.
It is our aim to show by way of that
specific example that a natural deformation of an appropriate Laguerre-Hahn class system directly related to the Jacobi weight
does lead to the sixth Painlev{\' e} equation,
in fact to a solution of P$_{\textrm{VI}}$ with an identical parameter set to that of the deformed (semi-classical) Jacobi weight and only differing in the boundary conditions.
Concluding remarks are given in Section \ref{sec:4}.

\subsection{Main Results}

To present the main results of the paper some facts about  the sixth Painlev\'e equation and its Hamiltonian formulation are required.
Painlev\'e VI - usually denoted by P$_{\textrm{VI}}$ - is commonly presented as the  sixth  equation in the Painlev\'e list \cite{assche-book},
consisting of the six non-linear differential equations having the special property that all movable singularities are poles.

P$_{\textrm{VI}}$
conventionally refers to the four-parameter
second-order nonlinear differential equation
\begin{multline}
	y''=\frac{1}{2}\left(\frac{1}{y}+\frac{1}{y-1}+\frac{1}{y-t}\right)(y')^2-\left(\frac{1}{t}+\frac{1}{t-1}+\frac{1}{y-t}\right)y'\\
	+\frac{y(y-1)(y-t)}{t^2(t-1)^2}\left(\delta_1+\frac{\delta_2 t}{y^2}+\frac{\delta_3(t-1)}{(y-1)^2}+\frac{\delta_4t(t-1)}{(y-t)^2}\right)\,.
\label{eq:P6}
\end{multline}
Here, and in the sequel, the dash denotes derivative with respect to $t$.
In the Hamiltonian formulation of P$_{\textrm{VI}}$ the Hamiltonian $H(q,p, t; {v_k})$, where ${v_k}$ are parameters, is given by Okamoto \cite{oka} as
\begin{multline}
	t(t-1)H = q(q-1)(q-t)p^2-\left[(v_3+v_4)(q-1)(q-t)+(v_3-v_4)q(q-t)\right.\\
	\left.-(v_1+v_2)q(q-1)\right]p+(v_3-v_1)(v_3-v_2)(q-t)\,.
\label{eq:H6}
\end{multline}
The Okamoto theory in the Hamiltonian approach to P$_{\textrm{VI}}$ gives that, after eliminating $p$ in the Hamilton equations
\begin{equation}
	q'=\frac{\partial H}{\partial p}\,, \;\;p'=-\frac{\partial H}{\partial q}\,,
\label{eq:hamilton-eqs}
\end{equation}
then $q$ satisfies (\ref{eq:P6}) with parameters
\begin{equation*}
	\delta_1=\frac{1}{2}(v_1-v_2)^2\,,\; \delta_2=-\frac{1}{2}(v_3+v_4)^2\,,\; \delta_3=\frac{1}{2}(v_3-v_4)^2\,,\; \delta_4=\frac{1}{2}(1-(1-v_1-v_2)^2) \,.
\label{eq:par-v}
\end{equation*}

Let us now introduce some notations (the complete data is given in Sections \ref{sec:2} and \ref{sec:3}).
We start by considering the transformation giving a new Stieltjes function $\tilde{f}$ (see (\ref{eq:tilde-f}))
\begin{equation}
	\tilde{f}(x)=x-\beta_0-\frac{w_0}{f(x)}\,,
\label{eq:til-f}
\end{equation}
where $w_0$ is the zero moment of the {deformed} Jacobi weight
\begin{equation}
	w(x,t) = x^{\alpha}(1-x)^{\beta}(x- t)^{\mu}\,, \quad \alpha, \beta, \mu >-1\,,
\label{eq:jacobi-w}
\end{equation}
 the function $f$ is the Stieltjes transform of $w(x,t)$,
and $\beta_0$ is the recurrence coefficient of the sequence of monic orthogonal polynomials corresponding to $w(x,t)$.

The sequences of orthogonal polynomials corresponding to the weight (\ref{eq:jacobi-w}) were the subject of analysis in several works,
showing relations with Painlev\'e equations in the context of random matrix theory and linear isomonodromy problems.
We refer the interested reader to \cite{F-W1,magnus-jcam}.
In the present paper we are mainly concerned with the analysis of the orthogonal polynomials related to the Stieljtes function $\tilde{f}$ defined by (\ref{eq:til-f}).
Essentially, we will deduce differential systems for these polynomials, and analyse their dependence on the deformation parameter $t$ from the weight (\ref{eq:jacobi-w}).
At this stage it is important to emphasise that, to the best of our knowledge, a closed formula for the weight function $\tilde{w}$ corresponding to $\tilde{f}$ is not known.

We shall denote by $\{\tilde{P}_n\}_{n\geq 0}$  the sequence of monic orthogonal polynomials correspon\-ding to $\tilde{f}$,
and $\tilde{\beta}_n, \tilde{\gamma}_n$ its recurrence relation coefficients.
The matrix $\tilde{Y}_n$ given by
\begin{equation}\tilde{Y}_n=\left[
\begin{matrix}
	\tilde{P}_{n+1} & \tilde{P}_n^{(1)} \\
	\tilde{P}_n & \tilde{P}_{n-1}^{(1)} \\
\end{matrix}\right]\,,
\label{eq:Yn-til}
\end{equation}
where  $\{\tilde{P}_n^{(1)}\}$ is the sequence of monic associated polynomials (see {\eqref{eq:def-Pn1},} Section \ref{sec:2}), satisfies the matrix Sylvester equation
 \begin{equation*}
\partial_x {\tilde{Y}}_n={\tilde{\mathcal{A}}}_n\tilde{Y}_n-\tilde{Y}_n{\tilde{\mathcal{C}}}\,,
\end{equation*}
with
\begin{equation*}{\tilde{\mathcal{A}}}_n=\frac{1}{\tilde{A}}
	\begin{bmatrix}
	 \tilde{l}_{n} & \tilde{{\Theta}}_{n} \\
	-\tilde{\Theta}_{n-1}/{\tilde{\gamma}}_n &\;\;
	-\tilde{l}_{n}
	\end{bmatrix},
	\;\;{\tilde{\mathcal{C}}}=\frac{1}{\tilde{A}}
	\begin{bmatrix}
	\tilde{C}/2 & -\tilde{D}/\tilde{w}_0 \\
	\tilde{w}_0 \tilde{B} & -\tilde{C}/2
	\end{bmatrix}, \;\; 
\end{equation*}
with $\tilde{A}=x(x-1)(x-t)$ and $\tilde{B}, \tilde{C}, \tilde{D}$ the remaining  polynomial coefficients of the $x$-Riccati equation for $\tilde{f}$
(see (\ref{eq:eg-pol-B1})-(\ref{eq:eg-pol-D1})),
and where $\tilde{w}_0$ is the zero moment in the asymptotic expansion around $\infty $ of $\tilde{f}$.
The matrix $\tilde{\mathcal{A}}_n$ is parameterised in terms of the polynomials $\tilde{l}_n, \tilde{\Theta}_n$, which now have degrees two and one, respectively,
by virtue of the bounds $\deg(\tilde{B})=1, \deg(\tilde{C})=2, \deg(\tilde{D})=1$  (see Lemma \ref{lemma:data-systems}),
\begin{equation*}
	\tilde{l}_n(x)=\frac{(\nu_n-1)}{2}x^2+\tilde{l}_{n,1}x+\tilde{l}_{n,0}\,, \;\; \frac{\tilde{\Theta}_{n}(x)}{\tilde{\gamma}_{n+1}}=-\nu_n x- \vartheta_n\,,
\end{equation*}
with
\begin{equation*}
	\nu_n=2n+5+\alpha+\beta+\mu\,, \;\;\;
	\vartheta_n=\nu_n(\tilde{\beta}_{n+1}-1-t)+2\sum_{j=0}^{n}\tilde{\beta}_j+\tilde{\beta}_{n+1}+\mu t+\beta+2\beta_0\,.
\end{equation*}
As we shall see below in Section \ref{sec:3}, the zero of the function $\tilde{\Theta}_n$,
i.e., the unique zero of the  $ (1,2)$ entry of the matrix $\tilde{\mathcal{A}}_n$ from the corresponding Sylvester equations,
that is, $q:=-\vartheta_n/\nu_n$, will be the Painlev\'e transcendent.

Let us also define  $\varphi$ - a polynomial in the variable $x$ - in terms of the polynomials $\tilde{A}, \tilde{B}, \tilde{C}, \tilde{D}$ and $\tilde{w}_0$, as
\begin{equation*}
	\varphi=\frac{\tilde{D}}{\tilde{w}_0}\left(\tilde{A}+\tilde{w}_0\tilde{B}\right)-\frac{\tilde{C}^2}{4}\,.
\end{equation*}
The evaluation of $\varphi$ at the zeroes of $\tilde{A}$ will have a fundamental role in the derivation of P$_{\textrm{VI}}$.
Our main result can now be stated as follows.
\begin{theorem}\label{teo:P6}
Under the above notations, let $v_1, v_2, v_3 , v_4$ be given in terms of the function $\varphi $ as 
\begin{equation}\label{eq:parameter-varphi}
	(v_1-v_2)^2 = 1\,, \;
	(1-v_1-v_2)^2 = -4\frac{\varphi(t)}{t^2(t-1)^2}\,,\;
	(v_3+v_4)^2 = -4\frac{\varphi(0)}{t^2}\,,\;
	(v_3-v_4)^2 = -4\frac{\varphi(1)}{(t-1)^2}\,. \;
\end{equation}
In terms of the coefficients $\tilde{\beta}_n, \tilde{\gamma}_n$ in the three-term recurrence relation satisfied by the sequence of monic orthogonal polynomials $\{\tilde{P}_n\}_{n\geq 0}$, let $\nu_n$ and  $\vartheta_n$ be defined by the above equations,
\begin{align*}
	\nu_n&=2n+5+\alpha+\beta+\mu\,,
\\
	\vartheta_n&=\nu_n(\tilde{\beta}_{n+1}-1-t)+2\sum_{j=0}^{n}\tilde{\beta}_j+\tilde{\beta}_{n+1}+\mu t+\beta+2\beta_0\,.
\end{align*}
The Hamilton equations are satisfied by
\begin{equation}
	q = -\vartheta_n/\nu_n\,,
\label{eq:hamilt-q}
\end{equation}
and $p$, which is given by
\begin{equation}
	p = \frac{\nu_nq^2+2q\tilde{l}_{n,1}-q+2\tilde{l}_{n,0}}{2q(q-1)(q-t)}+\frac{v_3+v_4}{2q}+\frac{v_3-v_4}{2(q-1)}-\frac{v_1+v_2}{2(q-t)}\,.
\label{eq:2nd-hamilt-p}
\end{equation}
\end{theorem}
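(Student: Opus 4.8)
The plan is to read the two matrix Sylvester systems for $\tilde Y_n$ from \eqref{eq:Yn-til} as a Lax pair and to verify the Hamilton equations \eqref{eq:hamilton-eqs} by direct substitution, taking the compatibility conditions of Theorem~\ref{teo:compat} as the source of the required $t$-evolution. Those conditions play here the role of the Schlesinger (or Toda) equations: they express $\partial_t\tilde\beta_n$ and $\partial_t\tilde\gamma_n$, and hence $\partial_t\vartheta_n$ together with the $t$-derivatives of the entries of $\tilde{\mathcal A}_n$, in closed form. Because $\nu_n=2n+5+\alpha+\beta+\mu$ does not depend on $t$, the candidate \eqref{eq:hamilt-q} gives $q'=-\vartheta_n'/\nu_n$, so once these $t$-derivatives are in hand the verification reduces to an algebraic comparison with the two sides of \eqref{eq:hamilton-eqs}.

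The parameters $v_1,\dots,v_4$ are fixed by the definitions \eqref{eq:parameter-varphi}, but their structural meaning guides the computation. At a zero $a\in\{0,1,t\}$ of $\tilde A$ the polynomial $\varphi$ reduces to $\tilde D(a)\tilde B(a)-\tilde C(a)^2/4$, the determinant of the numerator of $\tilde{\mathcal C}$ there; using $\tilde A'(0)^2=t^2$, $\tilde A'(1)^2=(t-1)^2$ and $\tilde A'(t)^2=t^2(t-1)^2$, the right-hand sides of \eqref{eq:parameter-varphi} become minus four times the determinants of the residues of $\tilde{\mathcal C}$ at $0,1,t$, i.e.\ the squared local exponents. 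By the spectral-determinant relation $\det(\tilde A\tilde{\mathcal A}_n)\equiv\varphi\pmod{\tilde A}$ the residues of $\tilde{\mathcal A}_n$ carry the same exponents, so the coordinates $q,p$ read off from $\tilde{\mathcal A}_n$ are attached to exactly these parameters. A step I would carry out before differentiating $p$ is to confirm that the four combinations in \eqref{eq:parameter-varphi} are constant in $t$, equivalently that the deformation is isomonodromic; this means checking that $\varphi(0)/t^2$, $\varphi(1)/(t-1)^2$ and $\varphi(t)/[t^2(t-1)^2]$ have vanishing $t$-derivative, using the $t$-dependence of $\tilde B,\tilde C,\tilde D$ recorded in Lemma~\ref{lemma:data-systems}.

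The first Hamilton equation is the more tractable one. Differentiating \eqref{eq:H6} gives $t(t-1)\,\partial H/\partial p=2q(q-1)(q-t)p-[(v_3+v_4)(q-1)(q-t)+(v_3-v_4)q(q-t)-(v_1+v_2)q(q-1)]$, and substituting the expression \eqref{eq:2nd-hamilt-p} for $p$ collapses the bracketed terms, leaving $\partial H/\partial p=[2\tilde l_n(q)+q(q-1)]/[t(t-1)]$. The target is therefore the identity $t(t-1)\,q'=2\tilde l_n(q)+q(q-1)$, which I would establish by computing $q'=-\vartheta_n'/\nu_n$ from the $t$-evolution of the $\tilde\beta_j$ supplied by Theorem~\ref{teo:compat} and simplifying with $q=-\vartheta_n/\nu_n$ and the explicit form of $\tilde l_n$.

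The second equation $p'=-\partial H/\partial q$ is where the real difficulty lies, and I expect it to be the main obstacle. Now $-\partial H/\partial q$ is quadratic in $p$ and carries all four parameters, while $p'$ must be obtained by differentiating the rational expression \eqref{eq:2nd-hamilt-p}, which couples $q'$ to the $t$-derivatives of the subleading coefficients $\tilde l_{n,0}$ and $\tilde l_{n,1}$ (and, a priori, of the $v_i$, which is where the constancy established in the second step is used). The crux is thus to extract $\partial_t\tilde l_{n,0}$ and $\partial_t\tilde l_{n,1}$ in usable form from the compatibility conditions and to show, after a considerable but routine simplification, that the result matches $-\partial H/\partial q$ term by term. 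A structurally cleaner route I would keep in reserve is to avoid the recurrence coefficients altogether and instead evaluate the zero-curvature form of the deformation equation at the point $x=q$ where $(\tilde{\mathcal A}_n)_{12}$ vanishes, reading off $q'$ from its $(1,2)$-component and $p'$ from its $(1,1)$-component; this is the classical Jimbo--Miwa passage to the P$_{\textrm{VI}}$ Hamiltonian and bypasses the heaviest algebra.
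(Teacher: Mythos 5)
Your treatment of the first Hamilton equation is correct and is in substance identical to the paper's: the identity you reduce it to, $t(t-1)\,q'=2\tilde{l}_n(q)+q(q-1)$, is precisely the paper's equation \eqref{eq:der-vartheta-q} (via $\xi_n/\nu_n=-q\tilde{l}_{n,1}-\tilde{l}_{n,0}$), and the paper likewise obtains \eqref{eq:2nd-hamilt-p} by solving $q'=\partial H/\partial p$ for $p$. Your reading of \eqref{eq:parameter-varphi} through the residues of $\tilde{\mathcal{C}}$ and the relation $\det\breve{\mathcal{A}}_n\equiv\varphi \pmod{\tilde{A}}$ is also sound; the paper settles the constancy question you raise not by differentiating but by direct evaluation, $\varphi(0)/t^2=-\alpha^2/4$, $\varphi(1)/(t-1)^2=-\beta^2/4$, $\varphi(t)/[t^2(t-1)^2]=-\mu^2/4$ (equation \eqref{eq:quant-varphi}), which is simpler since the $\beta_0(t)$-dependence of $\tilde{B},\tilde{C},\tilde{D}$ cancels identically at the zeros of $\tilde{A}$.

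The genuine gap is in the part you yourself flag as the main obstacle, and it is twofold. First, the paper never verifies $p'=-\partial H/\partial q$ directly: it differentiates \eqref{eq:der-vartheta-q} to get $q''$ (equation \eqref{eq:2ndderq}), substitutes the expression \eqref{eq:der-xi-q} for $(\xi_n/\nu_n)'$, and arrives at P$_{\textrm{VI}}$ in the form \eqref{eq:P6-q}; the Hamiltonian claim then follows from Okamoto's equivalence, because $p$ has been \emph{defined} by the first Hamilton equation. Your plan of a term-by-term match with $-\partial H/\partial q$ is equivalent in content but you do not carry it out, and, more substantively, the route you prescribe for it would stall as stated: the compatibility (Schlesinger-type) conditions alone do not yield $\partial_t\tilde{l}_{n,0}$ "in usable form". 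Equation \eqref{eq:der-ln0} involves the extra unknowns $\vartheta_{n-1}$ and $\tilde{\gamma}_{n+1}$, which are eliminated only through the determinant identity \eqref{eq:eg-det} evaluated at the zeros of $\tilde{A}$ — i.e.\ \eqref{eq:auxaux-det-x=0}--\eqref{eq:auxaux-det-x=1} — and it is the resulting algebraic relation \eqref{eq:eg-ident-ln0} (equivalently \eqref{eq:eg-ln0-q}), expressing $\tilde{l}_{n,0}$ through $q$, $\xi_n$ and the values $\varphi(0),\varphi(1),\varphi(t)$, that collapses the a priori three-dimensional system $(q,\tilde{l}_{n,1},\tilde{l}_{n,0})$ onto a single pair $(q,p)$. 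These bilinear relations come from the $x$-Lax pair and the three-term recurrence, not from Theorem~\ref{teo:compat}; in your proposal the determinant identity appears only as a device for reading off local exponents, so its essential dynamical role is missing. Your reserve route (evaluating the zero-curvature equation at the zero $x=q$ of $(\tilde{\mathcal{A}}_n)_{12}$, in the Jimbo--Miwa manner) is legitimate and would bypass some algebra, but it is not the paper's argument and it, too, needs the determinant identity to pass from the $(1,1)$-entry data to the momentum variable.
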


\begin{remark}
It turns out that the quantities $\varphi(0), \varphi(1), \varphi(t)$ are such that
\begin{equation*}
	\;\;\frac{\varphi(0)}{t^2} = -\tfrac{1}{4}\alpha^2\,,
	\;\;\frac{\varphi(1)}{(t-1)^2} = -\tfrac{1}{4}\beta^2\,,
	\;\;\frac{\varphi(t)}{t^2(t-1)^2} = -\tfrac{1}{4}\mu^2\,,
\end{equation*}
and hence, $q$ satisfies (\ref{eq:P6}) with parameters
$(\delta_1, \delta_2, \delta_3, \delta_4)= (\tfrac{1}{2}, -\tfrac{1}{2}\alpha^2,  \tfrac{1}{2}\beta^2, \tfrac{1}{2}(1-\mu^2))$.
\end{remark}

\section{Lax Pairs for Laguerre-Hahn Orthogonal Polynomials}\label{sec:2}

\subsection{Orthogonal polynomials, Stieltjes functions, and the Laguerre-Hahn class}
Aiming at giving a more self-contained overview from the viewpoint of orthogonal polynomials, we now  introduce further notations and preliminary results to be used in the sequel.

We shall consider weight functions (possibly deformed and complex - however this is not relevant for our current argument and is notationally suppressed)
supported on an interval $ I\subset \mathbb{R} $, for which all the moments
$$w_n=\int_{I}x^nw(x)\, dx\,, \quad n\geq 0\,,$$ exist and satisfy the regularity condition that the Hankel determinants
   $$\Delta_n:=\det[w_{j+k-2}]_{j, k=1, \dots, n}\,, \quad \Delta_0:=1\,, $$ are such that \begin{equation*}\Delta_n \neq 0, \;\;\; n \geq 0\,.\label{eq:hankel}\end{equation*}

The corresponding sequences of orthogonal polynomials, which we will take monic,
$\{P_n(x)=x^n+\textrm{lower degree terms}\}_{n \geq 0}$, defined via the orthogonality relation
\begin{equation}
	\int_I P_n(x)P_m(x)w(x)\,dx=h_n\delta_{m,n}\,,\quad n,m \geq 0\,, \;\; h_n \neq 0\,, \quad h_n =\displaystyle\frac{\Delta_{n+1}}{\Delta_{n}}\,, \quad
\label{eq:ort}
\end{equation}
are fully characterised by the three-term recurrence relation \cite{szego}
\begin{equation}
	xP_n(x)=P_{n+1}(x)+\beta_nP_n(x)+\gamma_nP_{n-1}(x)\,,\quad n = 1, 2, \dots\,,
\label{eq:ttrr-Pn}
\end{equation}
with the initial conditions $P_0(x)=1, \;P_1(x)=x-\beta_0$.
The numbers $\beta_n, \gamma_n,$ commonly called the recurrence relation coefficients, can be expressed as
\begin{equation}
	\beta_n=\frac{1}{h_n}\int_I xP_n^2(x) w(x)\,dx\,, \;  n\geq 0\,, \;\;\;
	\gamma_{n}=\frac{1}{h_{n-1}}\int_I P^2_{n}(x)w(x)\,dx\,, \;\; n\geq 1\,.
\label{eq:3trr}
\end{equation}
Thus, the norms $h_n$ in  (\ref{eq:ort}) satisfy
\begin{equation}
	h_0=w_0\,, \;\; \gamma_n=h_n/h_{n-1}\,, \; n \geq 1\,,
\label{eq:gamma-hn}
\end{equation}
hence, the recurrence coefficients $\gamma_n$ are given in terms of the Hankel determinants as
\begin{equation*}
	\gamma_n=\frac{\Delta_{n-1}\Delta_{n+1}}{\Delta^2_{n}}\,, \quad n \geq 1\,.
\label{eq:gamma-via-det}
\end{equation*}

Given the sequence of monic polynomial $\{P_n\}_{n\geq 0}$ corresponding to a weight $w$,  one defines the sequence of monic associated polynomials,
denoted by $\{P_n^{(1)}\}_{n\geq 0}$, as \cite{chihara,szego}
\begin{equation}
	P_n^{(1)}(x) := \frac{1}{w_0}\int_{0}^{1}\frac{P_{n+1}(x)-P_{n+1}(y)}{x-y}w(y)\,dy\,, \quad n=0,1, \dots \,.
\label{eq:def-Pn1}
\end{equation}
As an immediate consequence of (\ref{eq:ttrr-Pn}), it follows that $\{P_n^{(1)}\}_{n\geq 0}$ satisfies the three-term recurrence relation
\begin{equation}
	xP^{(1)}_{n-1}(x) = P^{(1)}_{n}(x)+\beta_nP^{(1)}_{n-1}(x)+\gamma_nP^{(1)}_{n-2}(x)\,, \quad n=1, 2, \dots \,,
\label{eq:ttrr-Pn1}
\end{equation}
with  initial conditions $P^{(1)}_{-1}(x)=0, \;P^{(1)}_0(x)=1$.
The following relation is well-known \cite{szego}:
\begin{equation}
	P_n^{(1)}(x)P_n(x)-P_{n+1}(x)P_{n-1}^{(1)}(x)=\prod_{k=1}^{n}\gamma_k\,, \quad n\geq 1\,.
\label{eq:liou}
\end{equation}

The  Stieltjes transform of the weight is a  central object in the theory. In the most general situation, it is taken as the formal series defined in terms of the moments as
\begin{equation}
	f(x)=f(w,x):=\displaystyle\sum_{n=0}^{+\infty}\frac{w_n}{x^{n+1}}\,.
\label{eq:expan-S}
\end{equation}

The polynomials $P_n, P_n^{(1)}$ and the Stieltjes function $f$ are related via \cite{szego}
\begin{equation*}
	P_{n+1}(x)f(x)-w_0P_n^{(1)}(x) = \mathcal{O}(x^{-n-1})\,, \quad x\to \infty,\;\;\; n \geq 0\,.
\label{eq:Herm-Pade}
\end{equation*}
This formula is commonly referred to as the Pad\'e formula, 
and will be written as
\begin{equation}
	P_{n+1}(x)f(x)-w_0P_n^{(1)}(x) =: \varepsilon_{n+1}(x)\,, \;\;\; n \geq 0\,.
\label{eq:H-P}
\end{equation}
Note that the functions $\varepsilon_n$ 
satisfy the same recurrence relation (\ref{eq:ttrr-Pn1}) with $\varepsilon_0=f$.
The interplay between orthogonal polynomials, the associated polynomials, and the corresponding Stieltjes function has been analysed in \cite{assche-survey}.

The function $f$ has a Jacobi continued fraction expansion given by (see \cite[pp. 85, eq. (4.1)]{chihara} combined with the notation $w_0=\gamma_0$ from \cite[Th. 4.4., Ch. I]{chihara})
\begin{equation*}
	f(x)=\frac{w_0}{x-\beta_0-\displaystyle\frac{\gamma_1}{x-\beta_1-\frac{\gamma_2}{\ddots}}}\,,
\label{eq:fcont-S}
\end{equation*}
where the $\gamma_{n}$'s and the $\beta_{n}$'s are the recurrence relation coefficients of the corresponding sequence of monic orthogonal polynomials.
Recalling that the continued fraction expansion for the Stieltjes function corresponding to the sequence $\{P_n^{(1)}\}_{n\geq 0}$ is
\begin{equation*}
	\tilde{f}(x)=\frac{\gamma_1}{x-\beta_1-\displaystyle\frac{\gamma_2}{x-\beta_2-\frac{\gamma_3}{\ddots}}}\,,
\end{equation*}
an immediate relation between $f$ and $\tilde{f}$ is
\begin{equation}
f(x)=\frac{w_0}{x-\beta_0-\tilde{f}(x)}\,.
\label{eq:tilde-f}
\end{equation}
Also, we note the asymptotic expansion of $\tilde{f}$ in terms of the corresponding moments,
\begin{equation}
	\tilde{f}(x)=\displaystyle\sum_{n=0}^{+\infty}\frac{\tilde{w}_n}{x^{n+1}}\,.
\label{eq:expan-S1}
\end{equation}
A consequence of (\ref{eq:tilde-f}), taking into account the asymptotic expansions of $f$ and $\tilde{f}$ given by (\ref{eq:expan-S}) and (\ref{eq:expan-S1}), is the relation for the moments
\begin{equation}
	\tilde{w}_n=\frac{w_{n+2}-\beta_0 w_{n+1}-\sum_{k=1}^{n}w_k \tilde{w}_{n-k}}{w_0}\,, \quad n \geq 0\,,
\label{eq:moments}
\end{equation}
with the convention $\sum_{i}^{j}\cdot=0$ whenever $i >j$.

In the present work we deal with Stieltjes functions  belonging to the so-called Laguerre-Hahn class that, under the conditions and definitions above,
 satisfy  a Riccati type differential equation (\ref{eq:ric-f}) with
 polynomial  coefficients. The transformation $\tilde{f}$ studied in Section~3 is a member of such a class, its Laguerre-Hahn character is displayed from equations
(\ref{eq:ric-f1-x})-(\ref{eq:eg-pol-D1}).
  In the framework of orthogonal polynomials, the Laguerre-Hahn class has been analysed from different perspectives.
Early refe\-rences on Laguerre-Hahn Stieltjes functions put emphasis on the connection between the Riccati equation
and the distributional equation for the corresponding linear functional - the so-called Laguerre-Hahn linear functionals or simply Laguerre-Hahn forms  (see \cite[Chap. 1]{MNR-survey}). Another direction of study proceeds with the analysis of differential equations of the Riccati type equation
(\ref{eq:ric-f})
focusing on transformations of orthogonal polynomials corresponding to rational Stieltjes spectral transformations  \cite{dehesa-etal,zed} (see also \cite{magnus-ric,peher-perturb}).
Further details on these topics may be found in the survey paper \cite{MNR-survey}.

\subsection{The Laguerre Method}\label{sec:2.1}

To pursue our goals we start by
deducing difference-differential relations involving the sequences of orthogonal polynomials and associated polynomials corresponding to Laguerre-Hahn Stieltjes functions.
Here, we  adopt the so-called Laguerre method \cite{lag} to obtain the Equations (\ref{eq:eqdifPn}) and (\ref{eq:eqdifPn1}) below.

\begin{theorem}
\label{theo:magnus}
Let $f$ be a Stieltjes function given by (\ref{eq:expan-S}),
let $\{P_n\}_{n \geq 0}$ and $\{P_n^{(1)}\}_{n\geq 0}$ be the corresponding sequences of monic orthogonal polynomials and associated polynomials, respectively.
If $f$ satisfies
\begin{equation}
	A\partial_{x}f=Bf^2+Cf+D\,,
\label{eq:ric-dif-f-x}
\end{equation}
with $A, B, C ,D $ co-prime polynomials, then the difference-differential equations hold, for all $n \geq 0$,
\begin{gather}
	A\partial_{x}P_{n+1}=(l_n-C/2)P_{n+1}-w_0BP_n^{(1)}+\Theta_nP_{n}\,,
\label{eq:eqdifPn}\\
	A\partial_{x}P^{(1)}_{n}=\frac{D}{w_0}P_{n+1}+(l_n+C/2)P^{(1)}_{n}+ \Theta_{n}P^{(1)}_{n-1}\,,
\label{eq:eqdifPn1}
\end{gather}
with $l_n, \Theta_n$ polynomials of $x$ with degrees given by
\begin{equation*}
	\deg(l_n)=\max\{\max\{\deg(A), \deg(B)\}-1, \deg(C)\}\,,\quad \deg(\Theta_n)=\deg(l_n)-1,
\end{equation*}
and satisfying the initial conditions
\begin{equation}
	\frac{\Theta_{-1}}{w_0}=D\,,\; \Theta_0=A+(x-\beta_0)(\frac{C}{2}-l_0)+w_0B\,,\;  l_{-1}=\frac{C}{2}\,,\; l_0=-\frac{C}{2}-(x-\beta_0)\frac{D}{w_0}\,.
\label{eq:initialcond}
\end{equation}
Here, $\beta_0$ is the recurrence relation coefficient of $\{P_n\}_{n \geq 0}.$
\end{theorem}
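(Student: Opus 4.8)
The plan is to carry the whole argument on the Hermite--Pad\'e remainders $\varepsilon_{n+1}=P_{n+1}f-w_0P_n^{(1)}$ of \eqref{eq:H-P}, which are $\mathcal{O}(x^{-n-1})$ at infinity and, by the remark following \eqref{eq:H-P}, obey the same recurrence \eqref{eq:ttrr-Pn1} as $P_n^{(1)}$. First I would differentiate $\varepsilon_{n+1}$, multiply by $A$, and replace $A\partial_xf$ using the Riccati equation \eqref{eq:ric-dif-f-x}, obtaining
\[
A\partial_x\varepsilon_{n+1}=A(\partial_xP_{n+1})f+P_{n+1}\bigl(Bf^2+Cf+D\bigr)-w_0A\,\partial_xP_n^{(1)}.
\]
The key observation is that the quadratic piece $BP_{n+1}f^2$ is exactly cancelled by $Bf\,\varepsilon_{n+1}=BP_{n+1}f^2-w_0BP_n^{(1)}f$, so that
\[
\Psi_n:=A\partial_x\varepsilon_{n+1}-Bf\,\varepsilon_{n+1}=\mathfrak{U}f+\mathfrak{V}
\]
is a genuine \emph{linear} form in $(f,1)$, with explicit polynomial coefficients $\mathfrak{U}=A\partial_xP_{n+1}+CP_{n+1}+w_0BP_n^{(1)}$ and $\mathfrak{V}=DP_{n+1}-w_0A\partial_xP_n^{(1)}$. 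This cancellation of $f^2$, which is what makes $A\partial_xP_{n+1}$ expressible without reference to $f$, is precisely the place where the Riccati structure is used.

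The second step is to re-expand $\Psi_n$ on the pair $\{\varepsilon_{n+1},\varepsilon_n\}$. Regarding the definitions of $\varepsilon_{n+1},\varepsilon_n$ as a $2\times2$ linear system in $(f,1)$, its determinant equals, by the Liouville-type identity \eqref{eq:liou}, a nonzero constant multiple of $\prod_{k=1}^n\gamma_k$; hence the inverse has \emph{polynomial} entries and there are uniquely determined polynomials $g_n,h_n$ solving
\[
g_nP_{n+1}+h_nP_n=\mathfrak{U},\qquad g_nP_n^{(1)}+h_nP_{n-1}^{(1)}=-\mathfrak{V}/w_0 .
\]
Writing $g_n=l_n+C/2$ and $h_n=\Theta_n$, the first identity rearranges into \eqref{eq:eqdifPn} and the second, after multiplying by $w_0$, into \eqref{eq:eqdifPn1}; equivalently these two relations are the statement $\Psi_n=(l_n+C/2)\varepsilon_{n+1}+\Theta_n\varepsilon_n$ read off coefficient by coefficient. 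The coefficient matching is legitimate because the regularity condition $\Delta_n\neq0$ forces $1$ and $f$ to be linearly independent over the polynomials: a nontrivial relation $Uf+V=0$ with polynomial $U,V$ would produce a nontrivial solution of a homogeneous Hankel system in the moments, contradicting $\Delta_n\neq0$.

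For the degree assertions I would argue at infinity. Since $\varepsilon_{n+1}=\mathcal{O}(x^{-n-1})$ gives $A\partial_x\varepsilon_{n+1}=\mathcal{O}(x^{\deg A-n-2})$ and $Bf\,\varepsilon_{n+1}=\mathcal{O}(x^{\deg B-n-2})$, one gets $\Psi_n=\mathcal{O}(x^{\max\{\deg A,\deg B\}-n-2})$. Comparing with $\Psi_n=(l_n+C/2)\varepsilon_{n+1}+\Theta_n\varepsilon_n$ and the orders $\varepsilon_{n+1}=\mathcal{O}(x^{-n-1})$, $\varepsilon_n=\mathcal{O}(x^{-n})$, a leading-term count forces $\deg g_n\le\max\{\deg A,\deg B\}-1$ and $\deg\Theta_n\le\deg g_n-1$, whence $\deg l_n=\max\{\max\{\deg A,\deg B\}-1,\deg C\}$ and $\deg\Theta_n=\deg l_n-1$, as claimed. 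The initial data \eqref{eq:initialcond} I would obtain by specialising to small indices: at $n=0$ one has $P_0=1,\ P_1=x-\beta_0,\ P_0^{(1)}=1,\ P_{-1}^{(1)}=0$, so \eqref{eq:eqdifPn1} collapses to $0=\tfrac{D}{w_0}(x-\beta_0)+(l_0+C/2)$, yielding $l_0$, and \eqref{eq:eqdifPn} then gives $\Theta_0$; while $l_{-1}=C/2$ and $\Theta_{-1}=w_0D$ come from evaluating $\Psi_n$ at $n=-1$, where $\Psi_{-1}=Cf+D$ by the Riccati equation and $\varepsilon_0=f$.

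I expect the main obstacle to be the degree bookkeeping rather than the existence of the relations: once $\Psi_n$ is known to be a linear form, the decomposition into $\varepsilon_{n+1},\varepsilon_n$ is automatic from the non-vanishing Wronskian, so the real content is that $l_n,\Theta_n$ have degrees \emph{independent of $n$}. This uniformity is exactly what the order-at-infinity estimate on $\Psi_n$ secures, forcing all the $n$-dependent top-order contributions of $\mathfrak{U},\mathfrak{V}$ to cancel against the leading behaviour of $\varepsilon_{n+1},\varepsilon_n$; one must also stay in the regime $\deg C\le\max\{\deg A,\deg B\}-1$ in which $\deg\Theta_n=\deg l_n-1$ is consistent, which is the situation relevant to Section \ref{sec:3}.
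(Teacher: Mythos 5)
Your construction is correct and reaches the theorem, but it is organised genuinely differently from the paper's proof. The paper substitutes $f=(\varepsilon_{n+1}+w_0P_n^{(1)})/P_{n+1}$ from \eqref{eq:H-P} directly into the Riccati equation, producing the quadratic identity \eqref{eq:aux1-theo-magnus} whose left-hand side is a polynomial $\check{\Theta}_n$ of $n$-independent bounded degree; it then inserts \eqref{eq:liou}, rearranges into the product identity \eqref{eq:aux3-theo-magnus}, and invokes the coprimality of $P_{n+1}$ and $P_n^{(1)}$ to conclude that both sides equal $l_nP_n^{(1)}P_{n+1}$ --- so $\Theta_n$ is produced first and $l_n$ second, by a divisibility argument. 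You instead keep everything linear: the observation that $\Psi_n=A\partial_x\varepsilon_{n+1}-Bf\,\varepsilon_{n+1}=\mathfrak{U}f+\mathfrak{V}$ eliminates the $f^2$ term at the outset, and \eqref{eq:liou} enters not through coprimality but as the non-vanishing constant determinant of your $2\times2$ system, so Cramer's rule yields $(g_n,h_n)=(l_n+C/2,\Theta_n)$ simultaneously, with uniqueness for free. The price is that you must know $1$ and $f$ are linearly independent over the polynomials --- a step the paper's route never needs --- and your Hankel-determinant justification of this is correct. What your route buys is a cleaner mechanism (no divisibility step, both relations \eqref{eq:eqdifPn}--\eqref{eq:eqdifPn1} at once) and an honest account of where existence and uniqueness of $(l_n,\Theta_n)$ come from; your derivation of the initial data \eqref{eq:initialcond}, including $\Psi_{-1}=Cf+D$, is also fine.

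One step needs tightening: the degree bookkeeping. From $\Psi_n=g_n\varepsilon_{n+1}+\Theta_n\varepsilon_n$ alone a ``leading-term count'' does not bound $\deg g_n$ and $\deg\Theta_n$ separately, because the two terms may a priori cancel to arbitrary depth when $\deg g_n=\deg\Theta_n+1$ --- and such deep cancellation genuinely occurs when $\deg C\ge\max\{\deg A,\deg B\}$, the case your closing remark quietly excludes although the theorem's statement covers it. The repair lives inside your own framework: the Casoratian $\varepsilon_{n+1}P_n-\varepsilon_nP_{n+1}=-w_0\prod_{k=1}^n\gamma_k$, an immediate consequence of \eqref{eq:liou}, lets you extract the coefficients outright, e.g.
\begin{equation*}
	w_0\Bigl(\prod_{k=1}^n\gamma_k\Bigr)\Theta_n=\Psi_nP_{n+1}-\mathfrak{U}\,\varepsilon_{n+1}\,,
	\qquad
	w_0\Bigl(\prod_{k=1}^n\gamma_k\Bigr)g_n=\mathfrak{U}\,\varepsilon_n-\Psi_nP_n\,,
\end{equation*}
after which the order estimates give the degree bounds with no cancellation issue. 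Note also that to obtain $\deg\Theta_n=\deg l_n-1$ in general you need the sharp remainder order $\varepsilon_{n+1}=\mathcal{O}(x^{-n-2})$ coming from orthogonality, not merely the $\mathcal{O}(x^{-n-1})$ you quote; the paper uses this sharper order implicitly when it bounds $\check{\Theta}_n$ by $\max\{\deg A-2,\deg B-2,\deg C-1\}$, while the weaker order suffices only in the regime $\deg C\le\max\{\deg A,\deg B\}-1$ that you flagged (which does cover the application in Section \ref{sec:3}). These are repairs of bookkeeping, not of the idea.
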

\begin{proof}
Using (\ref{eq:H-P}) in  (\ref{eq:ric-dif-f-x}) we
obtain, after standard computations,
\begin{multline}
	 A\left(\partial_{x}P_{n}^{(1)} P_{n+1}-\partial_{x}P_{n+1} P^{(1)}_{n}\right)-Bw_0(P^{(1)}_{n})^2-CP^{(1)}_{n}P_{n+1}-\frac{D}{w_0}P_{n+1}^2\\
	 =-\frac{A}{w_0}\left(\partial_{x}\varepsilon_{n+1} P_{n+1}-\partial_{x}P_{n+1} \varepsilon_{n+1}\right)+2BP_n^{(1)}\varepsilon_{n+1}+\frac{C}{w_0}\varepsilon_{n+1}P_{n+1}\,.
\label{eq:aux1-theo-magnus}
\end{multline}
Thus, the left hand side of the previous equation is a polynomial, say $\check{\Theta}_n$,
\begin{equation}
	A\left(\partial_{x}P_{n}^{(1)} P_{n+1}-\partial_{x}P_{n+1} P^{(1)}_{n}\right)-Bw_0(P^{(1)}_{n})^2-CP^{(1)}_{n}P_{n+1}-\frac{D}{w_0}P_{n+1}^2=\check{\Theta}_n\,,
\label{eq:aux2-theo-magnus}
\end{equation}
and its degree is bounded by a constant, as
  $$\check{\Theta}_n=-\frac{A}{w_0}\left(\partial_{x}\varepsilon_{n+1} P_{n+1}-\partial_{x}P_{n+1} \varepsilon_{n+1}\right)+2BP_n^{(1)}\varepsilon_{n+1}+\frac{C}{w_0}\varepsilon_{n+1}P_{n+1}$$
is bounded by a power less or equal than $\max\{\deg(A)-2, \deg(B)-2,\deg(C)-1\}$.

From (\ref{eq:liou}), replace $\check{\Theta}_n$ in (\ref{eq:aux2-theo-magnus}) by
$\displaystyle \frac{P_n^{(1)}P_n-P_{n+1}P_{n-1}^{(1)}}{\prod_{k=1}^{n}\gamma_k}\check{\Theta}_n$, and after rearranging and using the notation $\Theta_n=-\check{\Theta}_n/(\prod_{k=1}^{n}\gamma_k)$, we get
\begin{multline}
	P_n^{(1)}\left[A\partial_{x}P_{n+1}+\frac{C}{2}P_{n+1}-\Theta_nP_n+w_0BP_n^{(1)}\right]\\=
	P_{n+1}\left[A\partial_{x}P_{n}^{(1)}-\frac{C}{2}P_{n}^{(1)}-\Theta_nP_{n-1}^{(1)}-\frac{D}{w_0}BP_{n+1}\right]\,.
\label{eq:aux3-theo-magnus}
\end{multline}
As $P_n^{(1)}$ and $P_{n+1}$ do not have common zeroes (this property follows, for instance, from (\ref{eq:liou})),
then both hand sides of (\ref{eq:aux3-theo-magnus}) must therefore have the form $l_nP_n^{(1)}P_{n+1}$, where $l_n$ is a new auxiliary polynomial of bounded degree, thus we have
\begin{gather}
	A\partial_{x}P_{n+1}+\tfrac{1}{2}CP_{n+1}-\Theta_nP_n+w_0BP_n^{(1)}=l_nP_{n+1}\,,
\label{eq:aux5-theo-magnus}
\\
	A\partial_{x}P_{n}^{(1)}-\tfrac{1}{2}CP_{n}^{(1)}-\Theta_nP_{n-1}^{(1)}-\frac{D}{w_0}BP_{n+1}=l_nP_n^{(1)}\,,
\label{eq:aux6-theo-magnus}
\end{gather}
hence we get (\ref{eq:eqdifPn})-(\ref{eq:eqdifPn1}).

Note that, using again (\ref{eq:liou}) in (\ref{eq:aux5-theo-magnus})-(\ref{eq:aux6-theo-magnus}), we deduce the polynomial $l_n$ is given by
\begin{multline}
	\left(\prod_{k=1}^{n}\gamma_k\right)l_n = A\left(\partial_{x}P_n^{(1)}P_n-\partial_{x}P_{n+1}P_{n-1}^{(1)}\right)
	-\tfrac{1}{2}C\left(P_n^{(1)}P_n+P_{n+1}P_{n-1}^{(1)}\right)\\
	-\frac{D}{w_0}P_{n+1}P_n-w_0BP_n^{(1)}P_{n-1}^{(1)}\,.
\label{eq:ln-def}
\end{multline}
This concludes the proof.
\end{proof}

Using the three-term recurrence relations of $\{P_n\}_{n \geq 0}$ and  $\{P^{(1)}_n\}_{n \geq 0}$,
(\ref{eq:ttrr-Pn}) and (\ref{eq:ttrr-Pn1}), in (\ref{eq:eqdifPn})-(\ref{eq:eqdifPn1}), two further equations that
give $A\partial_{x}P_{n+1}$ and $A\partial_{x}P^{(1)}_{n}$ in terms of $P_n, P_{n-1}, P^{(1)}_{n-1} $ and $P^{(1)}_{n-2} $ are obtained,
and such four equations yield a system that can be coupled into the form of matrix Sylvester equations, as given in the next theorem. 

\begin{theorem}\label{teo:sylv-x}
(see \cite{AB-sylv}) Let the notations of the previous theorem hold.
Let the Stieltjes function $f$  given by (\ref{eq:expan-S}) satisfy the Riccati type differential equation (\ref{eq:ric-dif-f-x}), $A\partial_{x}f=Bf^2+Cf+D$.
Then, the following matrix Sylvester equations hold:
\begin{equation}
	\partial_{x}Y_n=\mathcal{A}_nY_n-Y_n\mathcal{C}\,, \quad n \geq 0\,, \quad
	 Y_n=\left[
	\begin{matrix}
		  P_{n+1} & P_n^{(1)} \\
		  P_n & P_{n-1}^{(1)} \\
	\end{matrix}\right]\,,
\label{eq:sylvester-Yn-x}
\end{equation}
where
\begin{equation}
	\mathcal{A}_n=\frac{1}{A}
	\begin{bmatrix}
		 l_{n} & \Theta_{n} \\
		-\Theta_{n-1}/\gamma_n &\;\;
		l_{n-1}+(x-\beta_n)\Theta_{n-1}/\gamma_n
	\end{bmatrix}\,,
	\;\;\mathcal{C}=\frac{1}{A}
	\begin{bmatrix}
		 C/2 & -D/w_0 \\
		w_0B & -C/2
	\end{bmatrix}\,.
\end{equation}
\end{theorem}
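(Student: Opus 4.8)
The plan is to read off the four scalar equations for the entries of $Y_n$ directly from Theorem \ref{theo:magnus} and then repackage them as a single matrix identity. The two difference-differential equations (\ref{eq:eqdifPn})--(\ref{eq:eqdifPn1}) already supply $A\partial_x$ of the top row of $Y_n$, namely $A\partial_x P_{n+1}$ and $A\partial_x P_n^{(1)}$, expressed as linear combinations of the four entries $P_{n+1}, P_n^{(1)}, P_n, P_{n-1}^{(1)}$ together with the coefficients $B, C, D$. The task is therefore to produce the analogous expressions for the bottom row, $A\partial_x P_n$ and $A\partial_x P_{n-1}^{(1)}$, written in the same basis of entries.

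First I would lower the index by one in (\ref{eq:eqdifPn})--(\ref{eq:eqdifPn1}), replacing $n$ by $n-1$, which is legitimate since those equations hold for all $n\geq 0$. This yields
\[
	A\partial_x P_n = (l_{n-1}-\tfrac{C}{2})P_n - w_0 B P_{n-1}^{(1)} + \Theta_{n-1}P_{n-1}
\]
and
\[
	A\partial_x P_{n-1}^{(1)} = \frac{D}{w_0}P_n + (l_{n-1}+\tfrac{C}{2})P_{n-1}^{(1)} + \Theta_{n-1}P_{n-2}^{(1)}.
\]
These involve $P_{n-1}$ and $P_{n-2}^{(1)}$, which are \emph{not} entries of $Y_n$. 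The next step is to eliminate them using the three-term recurrence relations (\ref{eq:ttrr-Pn}) and (\ref{eq:ttrr-Pn1}): solving each for its lowest-index term gives $\gamma_n P_{n-1} = (x-\beta_n)P_n - P_{n+1}$ and $\gamma_n P_{n-2}^{(1)} = (x-\beta_n)P_{n-1}^{(1)} - P_n^{(1)}$. Substituting these back expresses both bottom-row derivatives purely in terms of the entries of $Y_n$; in particular the coefficient of $P_n$ in $A\partial_x P_n$ becomes $l_{n-1}-\tfrac{C}{2}+(x-\beta_n)\Theta_{n-1}/\gamma_n$, and a coefficient $-\Theta_{n-1}/\gamma_n$ of $P_{n+1}$ is generated, which is precisely what is needed to match the $(2,1)$ and $(2,2)$ entries of $\mathcal{A}_n$.

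Finally I would collect the four scalar identities --- (\ref{eq:eqdifPn}), (\ref{eq:eqdifPn1}), and the two just derived --- and arrange $A\partial_x Y_n$ as a $2\times 2$ array. Separating the terms carrying $l_n, l_{n-1}, \Theta_n, \Theta_{n-1}$ (which act by left multiplication on the columns of $Y_n$) from those carrying $B, C, D$ (which act by right multiplication on the rows of $Y_n$) exhibits the right-hand side as $(A\mathcal{A}_n)Y_n - Y_n(A\mathcal{C})$ with exactly the stated $\mathcal{A}_n$ and $\mathcal{C}$; dividing through by $A$ then gives (\ref{eq:sylvester-Yn-x}). The only real obstacle is bookkeeping rather than conceptual: one must check that the $C$-terms split correctly between the left and right multiplications (the diagonal $\pm C/2$ in $\mathcal{C}$ absorbing the appropriate halves while the remainders fold into $l_n, l_{n-1}$ in $\mathcal{A}_n$), and that the recurrence substitution feeds precisely into the $(2,2)$ entry $l_{n-1}+(x-\beta_n)\Theta_{n-1}/\gamma_n$ without disturbing the $(1,1)$ and $(1,2)$ entries, which are inherited unchanged from Theorem \ref{theo:magnus}.
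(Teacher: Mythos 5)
Your proof is correct and follows essentially the same route the paper sketches (and attributes to \cite{AB-sylv}): combining the difference--differential relations of Theorem \ref{theo:magnus} with the three-term recurrences (\ref{eq:ttrr-Pn})--(\ref{eq:ttrr-Pn1}) to obtain the two missing scalar equations, then coupling all four into the matrix Sylvester form. The only point worth adding is that your index shift $n\mapsto n-1$ produces the bottom-row equations only for $n\geq 1$; the boundary case $n=0$ claimed in the theorem follows by direct verification from the initial data (\ref{eq:initialcond}) together with the conventions $P_{-1}^{(1)}=0$ and $\gamma_0=w_0$.
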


The differential systems of type (\ref{eq:sylvester-Yn-x}) will play a relevant role in the sequel.
Before continuing, let us remark the differences regarding the semi-classical case. Let us use the notation
\begin{equation}\breve{\mathcal{A}}_n=\begin{bmatrix}
 l_{n} & \Theta_{n} \\
-\Theta_{n-1}/\gamma_n &\;\;
l_{n-1}+(x-\beta_n)\Theta_{n-1}/\gamma_n\end{bmatrix}\,.
\label{eq:breve-An}
\end{equation}
Firstly, we stress that the sequences
$ \mathcal{Q}_n:=\left[
\begin{array}{cc}
  \varepsilon_{n+1} \\
  \varepsilon_{n}
\end{array}
\right], \; n \geq 0$,
with the $\varepsilon_n$ defined in (\ref{eq:H-P}), corresponding to Stieltjes functions such that  (\ref{eq:ric-dif-f-x}) holds,
satisfy the differential system \cite[Th. 1]{AB-sylv}
\begin{equation}
	A\partial_{x}\mathcal{Q}_n=\left(\breve{\mathcal{A}}_n+(Bf+C/2)I\right)\mathcal{Q}_n\,, \quad n \geq 0\,,\label{eq:Qn-LH}
\end{equation}
where $I$ denotes the $2\times 2$ identity matrix.
In the semi-classical case, taking $B \equiv 0$ in  (\ref{eq:Qn-LH}) and combining with the spectral differential equation for the weight, $ \partial_{x}\ln w=C/A,$
then a second solution is obtained,
\begin{equation*}
	A\partial_{x}\left[
	\begin{array}{c}
		\varepsilon_{n+1}/w \\
 		\varepsilon_n/w
	\end{array}\right]
	= \left(\breve{\mathcal{A}}_n-\frac{C}{2}I\right)
	\left[
	\begin{array}{c}
			\varepsilon_{n+1}/w \\
			\varepsilon_n/w
	\end{array}\right]\,,
\end{equation*}
which gives
\cite[Sec. 3]{magnus-jcam}
\begin{equation}
	\partial_{x}\left[
	\begin{array}{cc}
		P_{n+1} & \varepsilon_{n+1}/w \\
  		P_n & \varepsilon_n/w
	\end{array}\right]
	= \frac{1}{A}\left(\breve{\mathcal{A}}_n-\frac{C}{2}I\right)
	\left[\begin{array}{cc}
		P_{n+1} & \varepsilon_{n+1}/w \\
		P_n & \varepsilon_n/w
	\end{array}\right]\,.
\label{eq:sist-magnus}
\end{equation} 
Systems of the type (\ref{eq:sist-magnus}) are a standard tool when studying semi-classical orthogonal polynomials (see \cite{ble-its,F-W1} amongst many others).
Let us emphasize that, in the non semi-classical Laguerre-Hahn case, in general, explicit formulae for the weights are not available.
Indeed, it is an open problem to get the general form of the weight given a Riccati differential equation for the Stieltjes function (see \cite{zed}).

\subsection{Bilinear Recurrence Relations}
It is useful to write the recurrence relations (\ref{eq:ttrr-Pn}) and (\ref{eq:ttrr-Pn1}) in the matrix form
\begin{equation}
	Y_n=\mathcal{M}_nY_{n-1}\,, \quad \mathcal{M}_n=
	\begin{bmatrix}
		x-\beta_n & -\gamma_n \\
		1 & 0 \\
	\end{bmatrix}\,, \quad n \geq 1\,,
\label{eq:3trr-mat}
\end{equation}
with initial conditions
$Y_0=\begin{bmatrix}
	x-\beta_0 & -\gamma_0 \\
	1 & 0 \\
\end{bmatrix}$.
The matrix $\mathcal{M}_n$ is known as the transfer matrix. As described in \cite{AB-sylv},
the consistency of (\ref{eq:sylvester-Yn-x}) and (\ref{eq:3trr-mat}) yields the equation
\begin{equation}\label{eq:Ansylv}
	\partial_{x}\mathcal{M}_n={\mathcal{A}}_n\mathcal{M}_n-\mathcal{M}_n{\mathcal{A}}_{n-1}\,,\quad n \geq 1\,.
\end{equation}
Equation (\ref{eq:Ansylv}) implies two non-trivial equations from entries $(1,1)$ and $(1,2)$, respectively:
\begin{align}
	A+\gamma_n\frac{\Theta_{n-2}}{\gamma_{n-1}}-\Theta_n&=(x-\beta_n)(l_n-l_{n-1})\,,
\label{eq:pos-11}\\
	l_n+(x-\beta_n)\frac{\Theta_{n-1}}{\gamma_n}&=l_{n-2}+(x-\beta_{n-1})\frac{\Theta_{n-2}}{\gamma_{n-1}}\,.
\label{eq:pos-12}
\end{align}
After basic computations (\ref{eq:pos-12}) implies
\begin{equation}
	l_n+l_{n-1}=-(x-\beta_n)\frac{\Theta_{n-1}}{\gamma_n}\,.
\label{eq:aux-tr-2}
\end{equation}
The multiplication of (\ref{eq:pos-11}) and (\ref{eq:aux-tr-2}) gives us
\begin{equation*}
	-l_n^2+l_{n-1}^2=A\frac{\Theta_{n-1}}{\gamma_{n}}+\Theta_{n-1}\frac{\Theta_{n-2}}{\gamma_{n-1}}-\Theta_{n}\frac{\Theta_{n-1}}{\gamma_{n}}\,.
\end{equation*}
Summing on $n$ and using the initial conditions (\ref{eq:initialcond}), we get
\begin{equation}
	-l_n^2+\Theta_{n}\frac{\Theta_{n-1}}{\gamma_{n}}=\frac{D}{w_0}\left(A+w_0B\right)-\frac{C^2}{4}+A\sum_{k=1}^{n}\frac{\Theta_{k-1}}{\gamma_{k}}\,.
\label{eq:aux-det}
\end{equation}
Note that equation (\ref{eq:aux-tr-2}) reads as
\begin{equation*}
	\operatorname{tr}{\mathcal{A}}_n=0\,, \quad n \geq 0\,,
\label{eq:tr-An}
\end{equation*}
thus it gives us a parameterisation of the matrices ${\mathcal{A}}_n$, given by
\begin{equation*}
	{\mathcal{A}}_n=\frac{1}{A}
	\begin{bmatrix}
	 	l_{n} & \Theta_{n} \\
		-\Theta_{n-1}/\gamma_n &\;\	-l_{n}
	\end{bmatrix}\,.
\end{equation*}
Consequently, recalling the notation (\ref{eq:breve-An}), equation  (\ref{eq:aux-det}) reads as
\begin{equation}\label{eq:det-breveAn}
	\det \breve{\mathcal{A}}_n = \det\breve{\mathcal{A}}_{0}+ A\displaystyle\sum_{k=1}^{n} \frac{\Theta_{k-1}}{\gamma_{k}}\,,\quad n \geq 1\,,
\end{equation}
with $\det \breve{\mathcal{A}}_{0}=\displaystyle\frac{D}{w_0}\left(A+w_0B\right)-C^2/4.$

\subsection{Lax Pairs for Deformed Laguerre-Hahn Orthogonal Polynomials}
Let us now assume that the Stieltjes function depends on a deformation parameter $t$, say $f = f(x,t)$, and it satisfies a Riccati type differential equation
\begin{equation}
	\widehat{A} \partial_t f=\widehat{B}f^2+\widehat{C}f+\widehat{D}\,,
\label{eq:ric-dif-S-t}
\end{equation}
where $\widehat{A}, \widehat{B}, \widehat{C}, \widehat{D}$ are co-prime polynomials in $x$.
Proceeding analogously to Theorem \ref{theo:magnus} we get the results that follows.
\begin{theorem}\label{teo:sylv-t}
Let $f$ be a Stieltjes function given by (\ref{eq:expan-S}),
let $\{P_n\}_{n \geq 0}$ be the corresponding sequence of monic orthogonal polynomials,
and let $\beta_n, \gamma_n$ be its recurrence relation coefficients.  Assuming that $f=f(x,t)$ is such that it satisfies the Riccati-type differential equation (\ref{eq:ric-dif-S-t}),
then the matrices $Y_n$ given in (\ref{eq:sylvester-Yn-x}) satisfy the Sylvester equation
\begin{equation}
	\partial_t Y_n=\mathcal{B}_nY_n-Y_n\widehat{\mathcal{C}}\,, \quad n \geq 0\,,
\label{eq:sylvester-Yn-t}
\end{equation}
where
\begin{equation}
	\mathcal{B}_n = \frac{1}{\widehat{A}}
	\begin{bmatrix}
		 \widehat{l}_{n} & \widehat{\Theta}_{n} \\
		-\widehat{\Theta}_{n-1}/\gamma_n &\;\;
		\widehat{l}_{n-1}+(x-\beta_n)\widehat{\Theta}_{n-1}/\gamma_n
	\end{bmatrix}\,,
		\;\;\widehat{\mathcal{C}} = \frac{1}{\hat{A}}
	\begin{bmatrix}
		\widehat{C}/2 & -\widehat{D}/w_0 \\
		w_0\widehat{B} & -\widehat{C}/2+\partial_{t}(\ln w_0) \widehat{A}
	\end{bmatrix}
\end{equation}
with $\widehat{l}_n, \widehat{\Theta}_n$ polynomials of degrees given by
\begin{gather*}
	\deg(\widehat{l}_n)=\max\{\max\{\deg(\widehat{A})-1, \deg(\widehat{B})-1\}, \deg(\widehat{C})\}\,,
\\
	\deg(\widehat{\Theta}_n)=\max\{\max\{\deg(\widehat{A})-1, \deg(\widehat{B})-2\}, \deg(\widehat{C})-1\}\,,
\end{gather*}
satisfying the initial conditions
\begin{gather*}
	\frac{\widehat{\Theta}_{-1}}{w_0}=\widehat{D}\,, \widehat{\Theta}_0=-\frac{d}{dt}(\beta_0)\widehat{A}+(x-\beta_0)(\widehat{C}/2-\widehat{l}_0)+w_0\widehat{B}\,,\;
\\
	\widehat{l}_{-1}=\widehat{C}/2\,, \;\widehat{l}_0=-\widehat{C}/2-(x-\beta_0)\widehat{D}/w_0+\partial_{t}(\ln w_0) \widehat{A}\,.
\label{eq:initialcond-t}
\end{gather*}
\end{theorem}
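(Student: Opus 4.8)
The plan is to reproduce the argument of Theorem~\ref{theo:magnus} step for step, with $\partial_t$ in place of $\partial_x$ throughout, while keeping track of the two features that distinguish the deformation variable from the spectral variable. First, the zero moment $w_0=w_0(t)$ and the polynomials $P_{n+1},P_n^{(1)}$ all now carry a $t$-dependence, so that $\partial_t w_0\neq0$ (whereas $\partial_x w_0=0$ in the $x$-derivation). Second, $\partial_t$ acts trivially on the powers of $x$ and therefore does \emph{not} lower the $x$-degree of a polynomial, in contrast with $\partial_x$; this is what forces the degree counts for $\widehat\Theta_n$ to differ from their spectral analogues, reflecting $\deg(\widehat A)-1$ in place of $\deg(A)-2$ (while $\deg\widehat l_n$ retains the same form as $\deg l_n$).

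Concretely, I would begin from the Hermite-Pad\'e relation~(\ref{eq:H-P}), $\varepsilon_{n+1}=P_{n+1}f-w_0P_n^{(1)}$, solve it for $f=(\varepsilon_{n+1}+w_0P_n^{(1)})/P_{n+1}$, and insert this into the $t$-Riccati equation~(\ref{eq:ric-dif-S-t}), $\widehat A\partial_tf=\widehat Bf^2+\widehat Cf+\widehat D$. After clearing $P_{n+1}^2$ and collecting the terms free of $\varepsilon_{n+1}$ on one side, the Stieltjes function is eliminated and one is left with the $t$-analogue of~(\ref{eq:aux2-theo-magnus}), namely a polynomial identity whose left-hand side is
\[
	\widehat A\bigl(\partial_tP_n^{(1)}P_{n+1}-\partial_tP_{n+1}P_n^{(1)}\bigr)-\widehat B w_0\bigl(P_n^{(1)}\bigr)^2-\widehat C P_n^{(1)}P_{n+1}-\frac{\widehat D}{w_0}P_{n+1}^2+\widehat A\,\partial_t(\ln w_0)\,P_n^{(1)}P_{n+1}\,,
\]
while its right-hand side is expressed entirely through $\varepsilon_{n+1}$ and $\partial_t\varepsilon_{n+1}$. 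Since $\varepsilon_{n+1}=\mathcal{O}(x^{-n-1})$ and, crucially, $\partial_t\varepsilon_{n+1}=\mathcal{O}(x^{-n-1})$ as well (the $t$-derivative not improving the decay in $x$), this right-hand side is of $x$-degree bounded independently of $n$, so the displayed polynomial is bounded too.

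From here the route is identical in spirit to the spectral case: I would invoke the Liouville-type identity~(\ref{eq:liou}) to rewrite the bounded polynomial, and then use that $P_{n+1}$ and $P_n^{(1)}$ share no zero to conclude that the two sides of the resulting identity are both divisible by $P_{n+1}P_n^{(1)}$. This introduces the auxiliary polynomial $\widehat l_n$ and yields the two difference-differential equations for $\partial_tP_{n+1}$ and $\partial_tP_n^{(1)}$, the $t$-counterparts of~(\ref{eq:eqdifPn})-(\ref{eq:eqdifPn1}). Substituting the three-term recurrences~(\ref{eq:ttrr-Pn}) and~(\ref{eq:ttrr-Pn1}) to re-express these against $P_n,P_{n-1},P_{n-1}^{(1)},P_{n-2}^{(1)}$ then couples the four scalar equations into the single matrix Sylvester equation~(\ref{eq:sylvester-Yn-t}), exactly as Theorem~\ref{teo:sylv-x} was assembled from Theorem~\ref{theo:magnus}. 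The initial conditions follow by evaluating at $n=0$ with $P_1=x-\beta_0$, $P_0=P_0^{(1)}=1$, $P_{-1}^{(1)}=0$; here $\partial_tP_1=-\tfrac{d}{dt}(\beta_0)$ replaces the value $\partial_xP_1=1$ of the spectral case, which is precisely the origin of the term $-\tfrac{d}{dt}(\beta_0)\widehat A$ in $\widehat\Theta_0$.

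I expect the principal difficulty to be the bookkeeping forced by $\partial_t w_0\neq0$, which has no counterpart in the $x$-derivation. Its imprint is the extra contribution $\widehat A\,\partial_t(\ln w_0)\,P_n^{(1)}P_{n+1}$ in the identity above; being already proportional to $P_n^{(1)}P_{n+1}$, it is absorbed into the $\widehat l_n$-bookkeeping and is responsible both for the term $\partial_t(\ln w_0)\widehat A$ in the $(2,2)$ entry of $\widehat{\mathcal C}$ and for the same term in the initial value $\widehat l_0$. The delicate part will be verifying that these inhomogeneous pieces distribute consistently so that $\widehat{\mathcal C}$ and $\mathcal B_n$ take exactly the stated forms, and in particular that the resulting Sylvester equation remains compatible with the transfer-matrix recurrence~(\ref{eq:3trr-mat}).
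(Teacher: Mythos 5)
Your proposal is correct and takes essentially the same approach as the paper, whose proof consists precisely of the remark ``proceeding analogously to Theorem~\ref{theo:magnus}'': you carry out that analogy explicitly, and correctly identify the two genuinely new bookkeeping items --- the extra term $\widehat{A}\,\partial_t(\ln w_0)\,P_n^{(1)}P_{n+1}$ produced by $\partial_t w_0\neq 0$, which generates the $\partial_t(\ln w_0)\widehat{A}$ entries in $\widehat{\mathcal{C}}$ and $\widehat{l}_0$, and the evaluation $\partial_t P_1=-\beta_0'$, which generates the $-\frac{d}{dt}(\beta_0)\widehat{A}$ term in $\widehat{\Theta}_0$. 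One small sharpening: to reach the stated bound $\deg(\widehat{A})-1$ in $\deg(\widehat{\Theta}_n)$ you actually need $\varepsilon_{n+1}=h_{n+1}x^{-n-2}+\mathcal{O}(x^{-n-3})$ (immediate from orthogonality), since the $\mathcal{O}(x^{-n-1})$ estimate you quote would only give $\deg(\widehat{A})$ --- though the paper is equally terse on this point in the proof of Theorem~\ref{theo:magnus}.
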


The matrices $\mathcal{B}_n$ satisfy further identities, given in the next corollary.
\begin{corollary}
The following statements hold:\\
(a)\; the transfer matrix
  $\mathcal{M}_n$ from  (\ref{eq:3trr-mat}) satisfies the equation
\begin{equation}
	\partial_t \mathcal{M}_n=\mathcal{B}_n\mathcal{M}_n-\mathcal{M}_n{\mathcal{B}}_{n-1}\,, \quad n \geq 1\,;
\label{eq:Bnsylv-t}
\end{equation}
(b)\; the matrix
$ \breve{\mathcal{B}}_n:=
	\begin{bmatrix}
		 \widehat{l}_{n} & \widehat{\Theta}_{n} \\
		-\widehat{\Theta}_{n-1}/\gamma_n &\;\;
		 \widehat{l}_{n-1}+(x-\beta_n)\widehat{\Theta}_{n-1}/\gamma_n
	\end{bmatrix} $
satisfies
\begin{gather*}
	\operatorname{tr} \breve{\mathcal{B}}_n = \widehat{A}\,\sum_{k=0}^{n}\partial_{t}\ln(\gamma_k)\,, \quad n \geq 0\,,\label{eq:trBn-t}
\\
	\det \breve{\mathcal{B}}_n = \det \breve{\mathcal{B}}_{0}
	+ \widehat{A}\,\displaystyle\sum_{k=1}^{n}\left( \partial_{t}\ln(\gamma_k)\widehat{l}_{k-1} - \partial_{t}(\beta_k)\frac{\widehat{\Theta}_{k-1}}{\gamma_{k}} \right)\,,
	\quad n \geq 1\,,\;
\label{eq:detBn-t}
\end{gather*}
with
$\det \breve{\mathcal{B}}_0 = \left(-\frac{\displaystyle 1}{\displaystyle w_0}\partial_{t}\beta_0\widehat{A}+\widehat{B}\right)\displaystyle\widehat{D}
	+ \tfrac{1}{2}\partial_{t}(\ln w_0)\widehat{A}\widehat{C}
	- \tfrac{1}{4}\left(\widehat{C}\right)^2\,.$
\end{corollary}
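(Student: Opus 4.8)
The plan is to derive both parts from the compatibility between the deformation system \eqref{eq:sylvester-Yn-t} and the matrix three-term recurrence \eqref{eq:3trr-mat}, in exact parallel with the derivation of \eqref{eq:Ansylv} in the spectral variable. For part (a) I would differentiate $Y_n=\mathcal{M}_nY_{n-1}$ with respect to $t$ to get $\partial_tY_n=(\partial_t\mathcal{M}_n)Y_{n-1}+\mathcal{M}_n\,\partial_tY_{n-1}$, and then substitute \eqref{eq:sylvester-Yn-t} for both $\partial_tY_n$ and $\partial_tY_{n-1}$, together with $Y_n=\mathcal{M}_nY_{n-1}$. The two terms carrying $\widehat{\mathcal{C}}$ cancel identically, leaving $(\mathcal{B}_n\mathcal{M}_n-\partial_t\mathcal{M}_n-\mathcal{M}_n\mathcal{B}_{n-1})Y_{n-1}=0$. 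By \eqref{eq:liou} one has $\det Y_{n-1}=-\prod_{k=1}^{n-1}\gamma_k$, a nonzero constant, so $Y_{n-1}$ is invertible and can be cancelled on the right, giving \eqref{eq:Bnsylv-t}.

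For part (b), set $\breve{\mathcal{B}}_n=\widehat{A}\mathcal{B}_n$ and rewrite \eqref{eq:Bnsylv-t} as $\breve{\mathcal{B}}_n=N_n+\mathcal{M}_n\breve{\mathcal{B}}_{n-1}\mathcal{M}_n^{-1}$, where $N_n:=\widehat{A}(\partial_t\mathcal{M}_n)\mathcal{M}_n^{-1}$ and $\det\mathcal{M}_n=\gamma_n$. Taking traces and using cyclicity, $\operatorname{tr}\breve{\mathcal{B}}_n-\operatorname{tr}\breve{\mathcal{B}}_{n-1}=\operatorname{tr}N_n$; Jacobi's formula $\operatorname{tr}\!\big((\partial_t\mathcal{M}_n)\mathcal{M}_n^{-1}\big)=\partial_t\ln\det\mathcal{M}_n=\partial_t\ln\gamma_n$ identifies the increment as $\widehat{A}\,\partial_t\ln\gamma_n$, and the trace identity follows by telescoping.

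For the determinant I would apply the $2\times2$ identity $\det(X+Y)=\det X+\det Y+\operatorname{tr}(X)\operatorname{tr}(Y)-\operatorname{tr}(XY)$ to $X=N_n$ and $Y=\mathcal{M}_n\breve{\mathcal{B}}_{n-1}\mathcal{M}_n^{-1}$. Since $\partial_t\mathcal{M}_n$, and hence $N_n$, has a vanishing second row, $\det N_n=0$, while conjugation gives $\det Y=\det\breve{\mathcal{B}}_{n-1}$. Using $N_n\mathcal{M}_n=\widehat{A}\,\partial_t\mathcal{M}_n$ and the explicit inverse $\mathcal{M}_n^{-1}$, a short computation yields $\operatorname{tr}(XY)=\widehat{A}\big(\partial_t\beta_n\,\widehat{\Theta}_{n-1}/\gamma_n+\partial_t\ln\gamma_n\,d_{n-1}\big)$, where $d_{n-1}$ is the $(2,2)$ entry of $\breve{\mathcal{B}}_{n-1}$. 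The $d_{n-1}$ contribution cancels against the corresponding part of $\operatorname{tr}(N_n)\operatorname{tr}\breve{\mathcal{B}}_{n-1}=\widehat{A}\,\partial_t\ln\gamma_n\,(\widehat{l}_{n-1}+d_{n-1})$, leaving exactly the stated increment $\widehat{A}\big(\partial_t\ln\gamma_n\,\widehat{l}_{n-1}-\partial_t\beta_n\,\widehat{\Theta}_{n-1}/\gamma_n\big)$, which again telescopes.

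It remains to pin the base values, obtained by substituting the initial conditions of Theorem \ref{teo:sylv-t} for $\widehat{l}_{-1},\widehat{l}_0,\widehat{\Theta}_{-1},\widehat{\Theta}_0$ into $\operatorname{tr}\breve{\mathcal{B}}_0$ and $\det\breve{\mathcal{B}}_0$; this reproduces the quoted expression for $\det\breve{\mathcal{B}}_0$ and supplies the $k=0$ term of the trace sum. I expect the only real difficulty to be this initial-condition bookkeeping — in particular keeping the $n=0$ normalisations ($w_0$, $\partial_t\beta_0$, and the convention for $\gamma_0$) consistent so that the telescoped sums begin cleanly at $k=0$ for the trace and at $k=1$ for the determinant. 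The structural telescoping itself is immediate once \eqref{eq:Bnsylv-t} and Jacobi's formula are in hand.
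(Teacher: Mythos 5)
Your proposal is correct, and it is complete modulo exactly the bookkeeping you flag. For part (a) you follow the paper's own route: the authors simply state that (\ref{eq:Bnsylv-t}) follows from the Lax pair $\{Y_n=\mathcal{M}_nY_{n-1},\ \partial_tY_n=\mathcal{B}_nY_n-Y_n\widehat{\mathcal{C}}\}$, which is precisely your computation — the two $\mathcal{M}_nY_{n-1}\widehat{\mathcal{C}}$ terms cancel, and $Y_{n-1}$ can be cancelled on the right because its determinant is a nonzero constant by (\ref{eq:liou}). For part (b), however, the paper gives no computation at all, deferring to ``similar techniques to \cite[Cor.~2]{AB-MNR-deformed}''; that technique, as mirrored in the paper's Section~\ref{sec:2} treatment of $\breve{\mathcal{A}}_n$ (cf.\ (\ref{eq:pos-11})--(\ref{eq:aux-det})), is entrywise: extract scalar difference relations from the $(1,1)$ and $(1,2)$ components of (\ref{eq:Bnsylv-t}), combine, and sum over $n$. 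Your route is genuinely different and arguably cleaner: writing $\breve{\mathcal{B}}_n=N_n+\mathcal{M}_n\breve{\mathcal{B}}_{n-1}\mathcal{M}_n^{-1}$ with $N_n=\widehat{A}(\partial_t\mathcal{M}_n)\mathcal{M}_n^{-1}$, Jacobi's formula yields the trace increment $\widehat{A}\,\partial_t\ln\gamma_n$ in one line, and the $2\times2$ polarisation identity $\det(X+Y)=\det X+\det Y+\operatorname{tr}X\operatorname{tr}Y-\operatorname{tr}(XY)$ with the rank-one $X=N_n$ yields the determinant increment; I verified your key trace via $\mathcal{M}_n^{-1}(\partial_t\mathcal{M}_n)=\gamma_n^{-1}\left[\begin{smallmatrix}0&0\\ \partial_t\beta_n & \partial_t\gamma_n\end{smallmatrix}\right]$, which gives $\operatorname{tr}(XY)=\widehat{A}\bigl(\partial_t\beta_n\,\widehat{\Theta}_{n-1}/\gamma_n+\partial_t\ln\gamma_n\,d_{n-1}\bigr)$ so that the $d_{n-1}$ terms cancel exactly as you claim, leaving the stated summand. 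The structural argument thus buys a basis-free derivation in which both telescoped increments appear mechanically, at the cost of no extra information compared with the entrywise method.

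Concerning the base values: your anticipated difficulty is real but resolvable, and it is worth noting that the paper's stated initial conditions must be read with care. The $n=0$ case of (\ref{eq:sylvester-Yn-t}) forces the $(2,1)$ entry of $\breve{\mathcal{B}}_0$ to be $-\widehat{D}/w_0$, i.e.\ $\widehat{\Theta}_{-1}/\gamma_0=\widehat{D}/w_0$ under the convention $\gamma_0=w_0$ (cf.\ (\ref{eq:gamma-hn}) and the continued-fraction normalisation); the displayed condition $\widehat{\Theta}_{-1}/w_0=\widehat{D}$ taken literally together with $\gamma_0=w_0$ would be inconsistent with the trace formula. With the correct reading one finds $\operatorname{tr}\breve{\mathcal{B}}_0=\widehat{A}\,\partial_t\ln w_0$, so the trace sum does start cleanly at $k=0$, and expanding $\det\breve{\mathcal{B}}_0=\widehat{l}_0\bigl(\widehat{C}/2+(x-\beta_0)\widehat{D}/w_0\bigr)+\widehat{\Theta}_0\widehat{D}/w_0$ with the quoted $\widehat{l}_0$, $\widehat{\Theta}_0$ produces, after cancellation of all the $(x-\beta_0)$ cross terms, exactly $\bigl(-\tfrac{1}{w_0}\partial_t\beta_0\,\widehat{A}+\widehat{B}\bigr)\widehat{D}+\tfrac12\partial_t(\ln w_0)\widehat{A}\widehat{C}-\tfrac14\widehat{C}^2$, matching the statement. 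With that normalisation pinned down, your proof is sound in every step.
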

\begin{proof}
Equation (\ref{eq:Bnsylv-t}) follows from the Lax pair
\begin{equation*}
	\left\{
	\begin{array}{l}
		{Y}_n = \mathcal{M}_n Y_{n-1}\\
		\partial_t{Y}_n = \mathcal{B}_nY_n-Y_n\widehat{\mathcal{C}}\, , \quad n\geq 1 \,.
	\end{array}\right.
\label{eq:lax-pair1}
\end{equation*}
The equations for the trace and determinant of $ \breve{\mathcal{B}}_n$ are deduced using similar techniques to \cite[Cor. 2]{AB-MNR-deformed}.
\end{proof}

\subsubsection{The Compatibility Conditions }
Here we would  like to stress that, in the Laguerre-Hahn setting, the compatibility conditions (indeed, the Schlesinger equations in (\ref{eq:schle-type}))  follow from the consistency
of the two linear differential equations for $Y_n$ combined with the consistency of the two
differential equations for the Stieltjes function (we refer the interested reader to \cite{AB-MNR-deformed}, where a different approach was used to deduce similar yet more restrictive conditions). The complete result is given as follows.
\begin{theorem}\label{teo:compat}
Let the systems (\ref{eq:sylvester-Yn-x}) and (\ref{eq:sylvester-Yn-t}) hold, with the corresponding Stieltjes function not algebraic.
Then, the following compatibility conditions hold, for all $n \geq 1$:
\begin{equation}
	\partial_t \mathcal{A}_n =\partial_x{\mathcal{B}}_n+{\mathcal{B}}_n{\mathcal{A}_n}-{\mathcal{A}_n}{\mathcal{B}}_n\,.
\label{eq:schle-type}
\end{equation}
\end{theorem}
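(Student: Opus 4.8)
The plan is to read \eqref{eq:schle-type} as the zero-curvature (compatibility) condition of the overdetermined linear system formed by the two Sylvester equations \eqref{eq:sylvester-Yn-x} and \eqref{eq:sylvester-Yn-t}, and then to strip off the spurious right factor by invoking the consistency of the two Riccati equations \eqref{eq:ric-dif-f-x} and \eqref{eq:ric-dif-S-t} for $f$. First I would impose equality of the mixed second derivatives $\partial_t\partial_x Y_n=\partial_x\partial_t Y_n$, substituting $\partial_x Y_n=\mathcal{A}_nY_n-Y_n\mathcal{C}$ and $\partial_t Y_n=\mathcal{B}_nY_n-Y_n\widehat{\mathcal{C}}$ wherever a first derivative of $Y_n$ appears. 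Upon expanding both sides the two cross terms $\mathcal{A}_nY_n\widehat{\mathcal{C}}$ and $\mathcal{B}_nY_n\mathcal{C}$ cancel, leaving the single identity
\begin{equation*}
	\mathcal{S}_nY_n=Y_n\mathcal{R},\qquad \mathcal{S}_n:=\partial_t\mathcal{A}_n-\partial_x\mathcal{B}_n+\mathcal{A}_n\mathcal{B}_n-\mathcal{B}_n\mathcal{A}_n,\qquad \mathcal{R}:=\partial_t\mathcal{C}-\partial_x\widehat{\mathcal{C}}+\mathcal{C}\widehat{\mathcal{C}}-\widehat{\mathcal{C}}\mathcal{C}.
\end{equation*}
Since \eqref{eq:schle-type} is precisely the assertion $\mathcal{S}_n=0$, the task splits into showing that $Y_n$ may be cancelled and that $\mathcal{R}=0$.

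The first point is immediate. By \eqref{eq:liou} we have $\det Y_n=P_{n+1}P_{n-1}^{(1)}-P_n^{(1)}P_n=-\prod_{k=1}^{n}\gamma_k\neq0$, so $Y_n$ is invertible; consequently, once $\mathcal{R}=0$ is established, $\mathcal{S}_nY_n=0$ forces $\mathcal{S}_n=0$, which is exactly \eqref{eq:schle-type}.

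The crux is therefore $\mathcal{R}=0$, and this is where the hypothesis that $f$ is not algebraic enters. Observe first that $\mathcal{R}$ is assembled solely from the coefficient polynomials $A,B,C,D,\widehat A,\widehat B,\widehat C,\widehat D$ and from $w_0$, so its entries are rational in $x$ and carry no dependence on $f$. To read $\mathcal{R}$ off the Stieltjes data, set $u:=\begin{bmatrix}f\\-w_0\end{bmatrix}$; the Hermite-Pad\'e relation \eqref{eq:H-P} gives $\mathcal{Q}_n=Y_nu$, and a direct check (consistent with \eqref{eq:Qn-LH}) shows that \eqref{eq:ric-dif-f-x} and \eqref{eq:ric-dif-S-t} are equivalent to the pair of linear systems
\begin{equation*}
	\partial_x u=\Bigl(\mathcal{C}+\tfrac{Bf+C/2}{A}I\Bigr)u,\qquad \partial_t u=\Bigl(\widehat{\mathcal{C}}+\tfrac{\widehat Bf+\widehat C/2}{\widehat A}I\Bigr)u,
\end{equation*}
the $\partial_t(\ln w_0)\widehat A$ correction in $\widehat{\mathcal{C}}$ accounting precisely for $\partial_t w_0\neq0$. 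Because $f$ is a genuine function of $(x,t)$, the mixed partials of $u$ commute, and subtracting the two second derivatives yields $(\mathcal{R}+sI)u=0$, where $s:=\partial_t\bigl(\tfrac{Bf+C/2}{A}\bigr)-\partial_x\bigl(\tfrac{\widehat Bf+\widehat C/2}{\widehat A}\bigr)$ is a polynomial in $f$ with coefficients rational in $x$.

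Each of the two components of $(\mathcal{R}+sI)u=0$ is then a polynomial identity in $f$ whose coefficients are rational functions of $x$ free of $f$; since $f$ is not algebraic over $\mathbb{C}(x)$, every such coefficient must vanish identically. Collecting these conditions, and using that $\mathcal{R}$ is traceless (as $\operatorname{tr}\mathcal{C}=0$ while $\operatorname{tr}\widehat{\mathcal{C}}=\partial_t\ln w_0$ is independent of $x$), forces both $s=0$ and $\mathcal{R}=0$ entrywise. With $\mathcal{R}=0$ the boxed identity becomes $\mathcal{S}_nY_n=0$, and invertibility of $Y_n$ gives $\mathcal{S}_n=0$, i.e.\ \eqref{eq:schle-type}. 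I expect the main obstacle to be this final bookkeeping: correctly matching the scalar Riccati compatibility to the four entries of $\mathcal{R}$ and verifying that non-algebraicity annihilates \emph{every} power of $f$ occurring in $(\mathcal{R}+sI)u$, including the cubic contributions carried by $s$.
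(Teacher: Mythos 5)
Your proof is correct, and its skeleton coincides with the paper's: equate mixed partials of $Y_n$ to obtain $\mathcal{S}_nY_n=Y_n\mathcal{R}$, prove $\mathcal{R}=0$ from the compatibility of the two Riccati equations together with non-algebraicity of $f$, and cancel $Y_n$ via $\det Y_n=-\prod_{k=1}^{n}\gamma_k\neq 0$ from (\ref{eq:liou}). Where you genuinely diverge is in the middle step. The paper works scalar-wise: it differentiates (\ref{eq:ric-dif-f-x}) in $t$ and (\ref{eq:ric-dif-S-t}) in $x$, substitutes the Riccati equations, and collects powers of $f$ into a quadratic identity; non-algebraicity then yields the three conditions (\ref{eq:coefS2})--(\ref{eq:coefS0}), which are matched one-to-one with the entries of $\mathcal{R}$ (the coefficient of $f^2$ kills the $(2,1)$ entry, the coefficient of $f$ both diagonal entries, the constant term the $(1,2)$ entry), so no trace argument is needed. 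You instead linearise the Riccati flows on the vector $u=(f,-w_0)^{T}$ --- your two linear systems check out, including the $\partial_t(\ln w_0)\widehat{A}$ correction in $\widehat{\mathcal{C}}$, which is precisely what makes the second component reproduce $\partial_t(-w_0)$ --- and apply zero curvature to $u$, obtaining $(\mathcal{R}+sI)u=0$ with $s$ quadratic in $f$ after substitution. The price of this packaging is that coefficient extraction in powers of $f$ only gives $\mathcal{R}_{12}=\mathcal{R}_{21}=0$, $s_2=s_1=0$ and $\mathcal{R}_{11}=\mathcal{R}_{22}=-s_0$, so the supplementary observation that $\mathcal{R}$ is traceless (true: $\operatorname{tr}\mathcal{C}=0$, $\operatorname{tr}\widehat{\mathcal{C}}=\partial_t\ln w_0$ is $x$-independent, and the commutator is automatically traceless) is genuinely needed to force $s_0=0$ and hence $\mathcal{R}=0$ entrywise; you flag and use this correctly. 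What your route buys is conceptual clarity: it exhibits $\mathcal{R}=0$ as the zero-curvature condition of the linear system satisfied by $u$, and via $\mathcal{Q}_n=Y_nu$ (consistent with (\ref{eq:Qn-LH})) it explains structurally why the same matrix $\mathcal{R}$ appears as the right factor in the $Y_n$ compatibility equation (\ref{eq:xt-Yn-compat}); the paper's scalar computation is more pedestrian but shorter, since its three identities annihilate all four entries of $\mathcal{R}$ at once.
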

\begin{proof}
Taking derivatives in (\ref{eq:sylvester-Yn-t}) with respect to $x$ and derivatives in (\ref{eq:sylvester-Yn-x}) with respect to $t$ we get, on account of the consistency
$\partial_t\partial_xY_n=\partial_x\partial_tY_n\,,$
\begin{equation}
\left( \partial_t\mathcal{A}_n-\partial_x{{\mathcal{B}}_n}+{\mathcal{A}_n}{\mathcal{B}}_n-{\mathcal{B}}_n{\mathcal{A}_n} \right)Y_n
 = Y_n\left( \partial_t{\mathcal{C}}-\partial_x\widehat{\mathcal{C}}+\mathcal{C}\widehat{\mathcal{C}}-\widehat{\mathcal{C}}\mathcal{C} \right)\,.
\label{eq:xt-Yn-compat}
\end{equation}
Let us now deduce
\begin{equation}
\partial_t{\mathcal{C}}-\partial_x\widehat{\mathcal{C}}+\mathcal{C}\widehat{\mathcal{C}}-\widehat{\mathcal{C}}\mathcal{C}=0_{2 \times 2}\,.
\label{eq:nullC}
\end{equation}
This follows from the compatibility of the differential equations (\ref{eq:ric-dif-S-t}) and (\ref{eq:ric-dif-f-x}). Indeed,
by taking derivatives in (\ref{eq:ric-dif-S-t}) with respect to $x$ and derivatives in (\ref{eq:ric-dif-f-x}) with respect to $t$ we get, on account of the consistency
$\partial_t\partial_xf=\partial_x\partial_tf\,,$
\begin{multline*}
	\partial_t\left(\frac{B}{A}\right)f^2+2\frac{B}{A}f\partial_tf+\partial_t\left(\frac{C}{A}\right)f+ \left(\frac{C}{A}\right)\partial_tf+ \partial_t\left(\frac{D}{A}\right)
\\
	= \partial_x\left(\frac{\widehat{B}}{\widehat{A}}\right)f^2+2\frac{\widehat{B}}{\widehat{A}}f\partial_xf+\partial_x
	\left(\frac{\widehat{C}}{\widehat{A}}\right)f+ \left(\frac{\widehat{C}}{\widehat{A}}\right)\partial_x f+ \partial_x\left(\frac{\widehat{D}}{\widehat{A}}\right).
\end{multline*}
The use of (\ref{eq:ric-dif-S-t}) and (\ref{eq:ric-dif-f-x}) in the previous equation yields, after simplifications,
\begin{multline*}
	\left[\partial_t\left(\frac{B}{A}\right)- \partial_x\left(\frac{\widehat{B}}{\widehat{A}}\right)+\frac{B\widehat{C}-\widehat{B}C}{A\widehat{A}}\right]f^2+
	\left[\partial_t\left(\frac{C}{A}\right)-\partial_x\left(\frac{\widehat{C}}{\widehat{A}}\right)+\frac{2(B\widehat{D}-\widehat{B}D)}{A\widehat{A}}\right]f
\\
	+ \partial_t\left(\frac{D}{A}\right)-\partial_x\left(\frac{\widehat{D}}{\widehat{A}}\right)+\frac{C\widehat{D}-\widehat{C}D}{A\widehat{A}} = 0\,.
\end{multline*}
As $f$ is not algebraic, then it must hold
\begin{align}
	\partial_t\left(\frac{B}{A}\right)-\partial_x\left(\frac{\widehat{B}}{\widehat{A}}\right)+\frac{B\widehat{C}-\widehat{B}C}{A\widehat{A}} &=0\,,
\label{eq:coefS2}\\
	\partial_t\left(\frac{C}{A}\right)-\partial_x\left(\frac{\widehat{C}}{\widehat{A}}\right)+\frac{2(B\widehat{D}-\widehat{B}D)}{A\widehat{A}} &=0\,,
\label{eq:coefS1}\\
	\partial_t\left(\frac{D}{A}\right)-\partial_x\left(\frac{\widehat{D}}{\widehat{A}}\right)+\frac{C\widehat{D}-\widehat{C}D}{A\widehat{A}} &=0\,.
\label{eq:coefS0}
\end{align}
Now, the $(2,1)$ entry of the matrix $\partial_t{\mathcal{C}}-\partial_x\widehat{\mathcal{C}}+\mathcal{C}\widehat{\mathcal{C}}-\widehat{\mathcal{C}}\mathcal{C}$ is zero by virtue of (\ref{eq:coefS2});
the  $(1,1)$ and $(2,2)$ entries are zero by virtue of (\ref{eq:coefS1}); the $(1,2)$ entry  is zero by virtue of (\ref{eq:coefS0}).
Hence, from (\ref{eq:xt-Yn-compat}), we are left with
\begin{equation*}
	\left( \partial_t\mathcal{A}_n-\partial_x{{\mathcal{B}}_n}+{\mathcal{A}_n}{\mathcal{B}}_n-{\mathcal{B}}_n{\mathcal{A}_n}\right)Y_n=0_{2 \times 2}\,.
\end{equation*}
The term $Y_{n}$ cancels out, as $\det(Y_{n}) \neq 0$ by virtue of (\ref{eq:liou}). 
Thus, we obtain (\ref{eq:schle-type}).
\end{proof}

\section{M\"{o}bius Transformation of the Stieltjes Function $f$ }\label{sec:3}

\subsection{The  System of Matrix Sylvester Equations}

We begin by specialising our weight to the {deformed} Jacobi weight defined in terms of parameters $\alpha, \beta, \mu$ by (\ref{eq:jacobi-w}),
\begin{equation*}
	w(x,t) = x^{\alpha}(1-x)^{\beta}(x- t)^{\mu}\,,
\end{equation*}
defined on a support that joins $0, 1, t$ (see \cite[Sec. 5]{magnus-jcam}). It satisfies the differential equation
\begin{equation*}
	A \partial_x w = C w \,,
\label{eq:der-w-x}
\end{equation*}
with the polynomials $A, C$ given by
\begin{equation}
	A(x)=x(x-1)(x-t)\,, \;\; C(x)=\alpha(x-1)(x-t)+\beta x(x-t)+\mu x(x-1)\,.
\label{eq:A-C-der-w-x}
\end{equation}
Let us consider the sequence of monic orthogonal polynomials,  $\{P_n\}_{n\geq 0}$, corresponding to $w$, as well as their associated polynomials,  $\{P_n^{(1)}\}_{n\geq 0}$.
According to the classification in \cite{maroni}, the weight $w$ and  $\{P_n\}_{n\geq 0}$
are semi-classical of class one as $\deg(A) = 3,\, \deg(C) = 2$.

Note that  $w$ also satisfies the differential equation
\begin{equation*}
	(x-t)\partial_{t}w = -\mu w \,.
\label{eq:der-w-t}
\end{equation*}
Therefore, by integrating with respect to $x$, we have the following relation for the first moments,
\begin{equation*}
	\partial_{t}w_1 = -\mu w_0+t\partial_{t}w_0 \,.
\label{eq:rel-w0-w1}
\end{equation*}
Using the general relation $w_1 = \beta_0 w_0$ in the equation above,
where $\beta_0=-P_1(0)$,  we  obtain
\begin{equation}
	(t-\beta_0)\partial_{t}\ln(w_0) = \partial_{t}\beta_0 + \mu \,.
\label{eq:der-w0}
\end{equation}
The equivalence between an homogeneous differential equation for the weight and the non-homogeneous linear first order differential equation for the corresponding
Stieltjes function is well-known \cite{maroni,shohat}.
The precise differential equations for the Stieltjes function of the weight (\ref{eq:jacobi-w}) are given in the following lemma.

\begin{lemma}
The Stieltjes function of the {deformed} Jacobi weight (\ref{eq:jacobi-w}), henceforth denoted by $f$, satisfies the inhomogeneous differential equations
\begin{align}
	x(x-1)(x-t)\partial_x f&=Cf+D\,,
\label{eq:der-f-x}\\
	(x-t)\partial_t f&=-\mu f+\hat{D}_0\,,
\label{eq:der-f-t}
\end{align}
with $C$ the polynomial given in (\ref{eq:A-C-der-w-x}) and $D, \hat{D}_0$ the polynomials given as follows: $D(x)=d_1x+d_0$, with
\begin{equation*}
	d_1=-(1+\alpha+\beta+\mu)w_0\,, d_0=\left[-(2+\alpha+\beta+\mu)\beta_0+1+\alpha+\beta+\mu+(1+\alpha+\beta)t\right]w_0\,,
\label{eq:d1-d0}
\end{equation*}
and
\begin{equation*}
	\hat{D}_0=\partial_{t}w_0\,.
\label{eq:hat-d}
\end{equation*}
\end{lemma}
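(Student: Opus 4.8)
The plan is to treat the two equations separately, deriving each directly from the corresponding first-order equation for the weight together with the Stieltjes representation (\ref{eq:Sf-w}). That the Stieltjes function must satisfy an inhomogeneous equation of the stated shape is precisely the classical weight/Stieltjes-function equivalence recorded just above the lemma, so the actual content is to pin down the polynomials $C$, $D$ and $\hat{D}_0$ explicitly; $C$ is already fixed by the Pearson equation, so the work concentrates on $D$ and $\hat{D}_0$.

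For the spectral equation I would start from the Pearson equation $A\partial_x w = Cw$ with $A,C$ as in (\ref{eq:A-C-der-w-x}). Differentiating $f(x)=\int_I dy\,w(y)/(x-y)$ under the integral and integrating by parts once gives $\partial_x f=\int_I dy\,w'(y)/(x-y)$, the endpoint contributions at $0,1,t$ vanishing because $w$ itself vanishes there for $\mathrm{Re}(\alpha),\mathrm{Re}(\beta),\mathrm{Re}(\mu)>0$. Multiplying by $A(x)$ and writing $A(x)=A(y)+(A(x)-A(y))$, the first piece becomes $\int_I dy\,C(y)w(y)/(x-y)$ by Pearson, and the further splitting $C(y)=C(x)+(C(y)-C(x))$ isolates $C(x)f(x)$ plus a polynomial; the second piece, after one more integration by parts, is likewise a polynomial since $(A(x)-A(y))/(x-y)$ is a polynomial in $y$. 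This yields $A\partial_x f=Cf+D$ with $\deg D=1$, matching the claimed bound. To read off $d_1,d_0$ I would match the large-$x$ asymptotics of both sides using (\ref{eq:expan-S}), $f\sim\sum_n w_n x^{-n-1}$: collecting the coefficients of $x$ and $x^0$ in $A\partial_x f-Cf$ expresses $d_1,d_0$ through the leading moments $w_0,w_1$, and substituting $w_1=\beta_0 w_0$ (i.e. $\beta_0=w_1/w_0$ from (\ref{eq:3trr})) produces the stated closed forms.

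For the $t$-equation I would use the relation $(x-t)\partial_t w=-\mu w$. Differentiating $f$ in $t$ under the integral, the Leibniz contribution from the moving endpoint at $y=t$ drops out because $w$ vanishes there, leaving $\partial_t f=-\mu\int_I dy\,w(y)/[(y-t)(x-y)]$. The elementary identity $(x-t)/[(y-t)(x-y)]=1/(y-t)+1/(x-y)$ then gives $(x-t)\partial_t f=-\mu[\int_I dy\,w/(y-t)+f]$; recognising $w/(y-t)=-\mu^{-1}\partial_t w$ and commuting the $t$-derivative with the integral identifies $\int_I dy\,w/(y-t)=-\mu^{-1}\partial_t w_0$, whence $(x-t)\partial_t f=-\mu f+\partial_t w_0$, that is $\hat{D}_0=\partial_t w_0$.

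The main obstacle is the careful justification of all the boundary and endpoint terms: both the ordinary boundary terms in the integrations by parts and, for the $t$-equation, the contribution from the $t$-dependent endpoint of the support $I$. All of these are controlled by the vanishing of $w$ at $0,1,t$, which requires the parameter ranges $\mathrm{Re}(\alpha),\mathrm{Re}(\beta),\mathrm{Re}(\mu)>0$, the general case then following by analytic continuation in the parameters (or by deforming $I$ to a closed contour, for which no boundary terms arise at all). The remaining effort is purely the moment bookkeeping needed to extract $d_0$ correctly.
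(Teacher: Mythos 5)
Your derivation is correct and, at its computational core, coincides with the paper's own (one-sentence) proof: the paper simply substitutes the asymptotic expansion (\ref{eq:expan-S}) into (\ref{eq:der-f-x}) and (\ref{eq:der-f-t}) and matches coefficients, taking the existence and shape of the two inhomogeneous equations as known from the weight/Stieltjes-function equivalence cited immediately before the lemma. What you do differently is to prove that equivalence from scratch: differentiation under the integral, integration by parts with the boundary terms at $0,1,t$ killed by the vanishing of $w$ there for $\mathrm{Re}(\alpha),\mathrm{Re}(\beta),\mathrm{Re}(\mu)>0$ (including the Leibniz term from the moving endpoint at $y=t$), and analytic continuation in the parameters afterwards. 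This buys a self-contained statement at the cost of analysis the paper outsources to \cite{maroni,shohat}. Your degree bound $\deg D=1$ is also sound: the $x^2$-coefficient of $(A(x)-A(y))/(x-y)$ equals the (constant) leading coefficient of $A$, so it is independent of $y$ and is annihilated by the final integration by parts (equivalently, it is multiplied by $\int_I w'(y)\,dy=0$).

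One concrete warning. If you actually carry out the moment bookkeeping you defer to the last step, you will obtain
\begin{equation*}
d_0=\left[-(2+\alpha+\beta+\mu)\beta_0+1+\alpha+\mu+(1+\alpha+\beta)t\right]w_0\,,
\end{equation*}
i.e.\ \emph{without} the $\beta$ appearing in the constant block $1+\alpha+\beta+\mu$ of the lemma as printed. The printed $d_0$ is a typo in the paper: the M\"obius substitution (\ref{eq:tilde-f}) forces $\tilde{B}=D/w_0$, and the paper's own formula (\ref{eq:eg-pol-B1}), whose constant block reads $1+\alpha+\mu+(1+\alpha+\beta)t$, as well as the linear coefficient of $\tilde{D}$ in (\ref{eq:eg-pol-D1}), are consistent only with the value displayed above. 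So your closing claim that the asymptotic matching ``produces the stated closed forms'' cannot literally hold for $d_0$ as stated; trust the computation over the printed constant, since everything downstream in the paper does.
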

\begin{proof}
The polynomials $D, \hat{D}_0$ are obtained by substituting the asymptotic expansion (\ref{eq:expan-S}) of $f$ into the differential equations (\ref{eq:der-f-x}) and (\ref{eq:der-f-t}),
respectively .
\end{proof}
\begin{remark}
For further purposes, we note the following relation,
\begin{equation}
	D(t)+t(t-1)\hat{D}_0=0\,,
\label{eq:rel-D-hat-D}
\end{equation}
which follows from the compatibility of equations (\ref{eq:der-f-x})-(\ref{eq:der-f-t}): taking derivatives of (\ref{eq:der-f-x}) with
respect to $t$ and derivatives of (\ref{eq:der-f-t}) with respect to $x$, and using $\partial_t \partial_x f=\partial_x\partial_t f $.
\end{remark}

Following (\ref{eq:tilde-f}), we shall now take the function $\tilde{f}$ to be given by (\ref{eq:til-f}),
\begin{equation*}
	\tilde{f}(x)=x-\beta_0-\frac{w_0}{f(x)}\,.
\end{equation*}
The quantities $w_0$ and $\beta_0$, acting as parameters defining the new function $\tilde{f}$ via (\ref{eq:til-f}),
will appear in the fundamental formulae to be subsequently deduced.

The function $\tilde{f}$ will play a central role in the sequel,
as the corresponding sequence of orthogonal polynomials belongs to the non semi-classical, Laguerre-Hahn class.
Indeed, using standard computations combining (\ref{eq:til-f}) with (\ref{eq:der-f-x}), the Riccati differential equation for $\tilde{f}$ is straightforward,
\begin{equation}
\tilde{A}\partial_x \tilde{f}=\tilde{B}\tilde{f}^2+\tilde{C}\tilde{f}+\tilde{D}\,, \label{eq:ric-f1-x}
 \end{equation}
 with coefficients
\begin{align}
	\tilde{A}(x) &= x(x-1)(x-t)
\label{eq:eg-pol-A1}\\
	\tilde{B}(x) &= -(1+\alpha+\beta+\mu)x- (2+\alpha+\beta+\mu)\beta_0+1+\alpha+\mu+(1+\alpha+\beta)t\,,
\label{eq:eg-pol-B1}\\
	\tilde{C}(x) &= (2+\alpha+\beta+\mu)x^2+[-(2+\alpha+\mu)-(2+\alpha+\beta)t+2\beta_0]x
\nonumber\\
	& \hspace{9.2cm}-\alpha t+2\tilde{B}(0)\beta_0\,,
\\
	\tilde{D}(x) &= [(3+\alpha+\beta+\mu)\beta_0^2-(2+\alpha+\mu+(2+\alpha+\beta)t)\beta_0+(1+\alpha)t]x
\nonumber\\
	&\hspace{1.4cm} -(2+\alpha+\beta+\mu)\beta_0^3+(1+\alpha+\mu+(1+\alpha+\beta)t)\beta_0^2-\alpha t \beta_0\,.
\label{eq:eg-pol-D1}
\end{align}
In order to proceed with our theory of the deformed Laguerre-Hahn polynomials,
we need to identify the role of the deformation parameter within the spectral differential system,
 as the zero $x=t$ of the polynomial $\tilde{A}$ given above.

The Riccati equation in the variable $t$ exhibits a dependence on the parameters $\beta_0$ and $w_0$ from (\ref{eq:til-f}).
Indeed, we have
\begin{equation}
	\hat{A}\partial_{t} \tilde{f}=\hat{B}\tilde{f}^2+\hat{C}\tilde{f}+\hat{D}\,,
\label{eq:ric-f1-t}
\end{equation}
with  coefficients
\begin{align}
	\hat{A}(x) &= x-t\,,
\\
	\hat{B} &= \partial_{t}\ln w_0\,,
\\
	\hat{C}(x) &= -\partial_{t}(\ln w_0)x+(2\beta_0-t)\partial_{t}\ln(w_0)+\mu \,,
\\
	\hat{D} &= -(\beta_0-t)\partial_{t}\beta_0\,.
\end{align}

  The next step is to deduce the differential systems satisfied by the orthogonal polynomials corresponding to $\tilde{f}$.
In order to simplify the text, we will use the following notations: we denote by $\{\tilde{P}_n\}_{n\geq 0}$
the sequence of monic orthogonal polynomials corresponding to $\tilde{f}$, and we denote by $(\tilde{\beta}_n)_{n\geq 0}, (\tilde{\gamma}_n)_{n\geq 0}$ its
sequences of recurrence relation coefficients. Let $\tilde{Y}_n$ be defined by (\ref{eq:Yn-til}),
\begin{equation*}
	\tilde{Y}_n=\left[
	\begin{matrix}
	  \tilde{P}_{n+1} & \tilde{P}_n^{(1)} \\
	  \tilde{P}_n & \tilde{P}_{n-1}^{(1)} \\
	\end{matrix}\right]\,.
\label{eq:tildeYn}
\end{equation*}
Upon utilising equations (\ref{eq:ric-f1-x}) and (\ref{eq:ric-f1-t}) we develop the results of Theorems \ref{teo:sylv-x} and \ref{teo:sylv-t},
now applied to the Stieltjes function $\tilde{f}$ and the 
corresponding sequence $\{\tilde{Y}_n\}_{n \geq 0}$.
We have the matrix Sylvester equations given in the following lemma.

\begin{lemma}\label{lemma:data-systems}
Let $\tilde{f}$ be the Stieltjes function defined by (\ref{eq:til-f}), and let $\{\tilde{P}_n\}_{n\geq 0}$ be the corresponding  sequence of monic orthogonal polynomials.
Under the previous notations, the following differential systems hold, for all $n \geq 0$:
\begin{align}
	\partial_x {\tilde{Y}}_n &= {\tilde{\mathcal{A}}}_n\tilde{Y}_n-\tilde{Y}_n{\tilde{\mathcal{C}}}\,,
\label{eq:eg-sylvester-Yn-x}\\
	\partial_t \tilde{Y}_n &= \tilde{\mathcal{B}}_n\tilde{Y}_n-\tilde{Y}_n\hat{\mathcal{C}}\,,
\label{eq:eg-sylvester-Yn-t}
\end{align}
where
\begin{equation}
	{\tilde{\mathcal{A}}}_n=\frac{1}{\tilde{A}}
	\begin{bmatrix}
		 \tilde{l}_{n} & \tilde{{\Theta}}_{n} \\
		-\tilde{\Theta}_{n-1}/{\tilde{\gamma}}_n &\;\;
		-\tilde{l}_{n}
	\end{bmatrix},
		\;\;{\tilde{\mathcal{C}}}=\frac{1}{\tilde{A}}
	\begin{bmatrix}
		 \tilde{C}/2 & -\tilde{D}/\tilde{w}_0 \\
		\tilde{w}_0\tilde{B} & -\tilde{C}/2
	\end{bmatrix},
\label{eq:eg-mat-An}
\end{equation}
and
\begin{equation}
	\tilde{\mathcal{B}}_n=\frac{1}{\hat{A}}
	\begin{bmatrix}
		 \hat{l}_{n} & \hat{\Theta}_{n} \\
		-\hat{\Theta}_{n-1}/{\tilde{\gamma}}_n &\;\;
		\hat{l}_{n-1}+(x-{\tilde{\beta}}_n)\hat{\Theta}_{n-1}/{\tilde{\gamma}}_n
	\end{bmatrix},
\label{eq:eg-mat-Bn}
\end{equation}
\begin{equation}
	\hat{\mathcal{C}}=\frac{1}{\hat{A}}
	\begin{bmatrix}
		 \hat{C}/2 & -\hat{D}/\tilde{w}_0 \\
		\tilde{w}_0\hat{B} & -\hat{C}/2+\partial_{t}(\ln \tilde{w}_0)\hat{A}
	\end{bmatrix},
\label{eq:eg-mat-hatC1}
\end{equation}
with $\tilde{A}, \tilde{B}, \tilde{C}, \tilde{D}$ the polynomials in (\ref{eq:ric-f1-x}), $\hat{A}, \hat{B}, \hat{C}, \hat{D}$ the polynomials in (\ref{eq:ric-f1-t}),
and $\tilde{w}_0$ denoting the zero moment in the asymptotic expansion (\ref{eq:expan-S1}) of $\tilde{f}$.

The entries of $\tilde{\mathcal{A}}_n$ and $\tilde{\mathcal{B}}_n$ are polynomials in the variable $x$, and given by
\begin{equation*}
	\tilde{l}_n(x)=\tilde{l}_{n,2}x^2+\tilde{l}_{n,1}x+\tilde{l}_{n,0}\,, \;\;
	\tilde{\Theta}_{n}(x)=\tilde{\Theta}_{n,1}x+\tilde{\Theta}_{n,0}\,, \;\;
	\hat{l}_n(x)=\hat{l}_{n,1}x+\hat{l}_{n,0}\,, \;\;
	\hat{\Theta}_{n}(x)=\hat{\Theta}_{n}\,,
\end{equation*}
where, for all $n \geq 1,$
\begin{align}
	\tilde{l}_{n,2} =&(\nu_n-1)/2\,,
\label{eq:ln2}\\
	\tilde{l}_{n,1} =&\sum_{j=0}^{n}\tilde{\beta}_j-\tfrac{1}{2}(\nu_n-1)(1+t)+\tfrac{1}{2}(\beta+\mu t)+\beta_0\,,\label{eq:ln1}
\\
	\tilde{l}_{n,0} =&\sum_{j=0}^{n}(\tilde{\beta}^2_j-(1+t)\tilde{\beta}_j+2\tilde{\gamma}_j)-\tfrac{1}{2}(t+1)(\alpha+\mu t)
	+\tfrac{1}{2}(\nu_n-1)t+\tfrac{1}{2}(\alpha+\mu t^2)
\nonumber\\
	&\hspace{2.7cm}+\nu_n\tilde{\gamma}_{n+1}+\beta_0(\beta_0-1-\beta-t)-\tfrac{1}{2}(\beta+\mu)t-\tfrac{1}{2}(\alpha+\mu t^2)\,,
\label{eq:ln0}\\
	\tilde{\Theta}_{n,1} =&-\nu_n\tilde{\gamma}_{n+1}\,,
\label{eq:thetan1}\\
	\tilde{\Theta}_{n,0} =&-\vartheta_n(t)\tilde{\gamma}_{n+1}\,,
\label{eq:thetan0}\\
	\hat{l}_{n,1} =&-\tfrac{1}{2}\partial_{t}\ln w_0\,,
\label{eq:hat-ln1}\\
	\hat{l}_{n,0} =&\tfrac{1}{2}\left((2\beta_0-t)\partial_{t}\ln w_0+\mu\right)-\sum_{j=0}^{n}\partial_{t}\tilde{\beta}_j
\label{eq:hat-ln0}\\
	{\hat{\Theta}_{n}} =&\left(\partial_{t}\ln w_0+\sum_{j=0}^{n+1}\partial_{t}\ln \tilde{\gamma}_j\right)\tilde{\gamma}_{n+1}\,,
\label{eq:hat-thetan}
\end{align}
with
\begin{align}
	\nu_n& = 2n+5+\alpha+\beta+\mu\,,
\label{eq:nu}\\
	\vartheta_n(t)& = \nu_n(\tilde{\beta}_{n+1}-1-t)+2\sum_{j=0}^{n}\tilde{\beta}_j+\tilde{\beta}_{n+1}+\mu t+\beta+2\beta_0\,.
\label{eq:vartheta}
\end{align}
The following initial conditions hold:
\begin{gather*}
	\frac{\tilde{\Theta}_{-1}}{\tilde{w}_0}=\tilde{D}\,,
	\tilde{\Theta}_0=\tilde{A}+(x-\tilde{\beta}_0)(\tilde{C}/2-\tilde{l}_0)+\tilde{w}_0\tilde{B}\,,
\\
	\tilde{l}_{-1}=\tilde{C}/2\,, \;\tilde{l}_0=-\tilde{C}/2-(x-\tilde{\beta}_0)\tilde{D}/\tilde{w}_0\,,
\\
	\frac{\hat{\Theta}_{-1}}{\tilde{w}_0}=\hat{D}\,, \hat{\Theta}_0=-\partial_{t}(\tilde{\beta}_0)\hat{A}+(x-\tilde{\beta}_0)(\hat{C}/2-\hat{l}_0)+\tilde{w}_0\hat{B}\,,\;
\\
	\hat{l}_{-1}=\hat{C}/2\,, \;\hat{l}_0=-\hat{C}/2-(x-\tilde{\beta}_0)\hat{D}/\tilde{w}_0+\partial_{t}(\ln \tilde{w}_0) \hat{A}\,.
\label{eq:initialcond-t}
\end{gather*}
\end{lemma}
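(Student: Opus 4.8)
The plan is to obtain Lemma~\ref{lemma:data-systems} by specialising the two master Sylvester theorems of Section~\ref{sec:2} to the Stieltjes function $\tilde f$ and then determining the entries of the resulting matrices explicitly. First I would apply Theorem~\ref{teo:sylv-x} to $\tilde f$, which by (\ref{eq:ric-f1-x}) obeys a Riccati equation with the explicit data (\ref{eq:eg-pol-A1})--(\ref{eq:eg-pol-D1}). This yields at once the system (\ref{eq:eg-sylvester-Yn-x}) with $\tilde{\mathcal C}$ as displayed (the substitution $A,B,C,D\mapsto\tilde A,\tilde B,\tilde C,\tilde D$, $w_0\mapsto\tilde w_0$), together with the difference--differential equations (\ref{eq:eqdifPn})--(\ref{eq:eqdifPn1}) for $\tilde P_{n+1},\tilde P_n^{(1)}$. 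The matrix $\tilde{\mathcal A}_n$ inherits the shape of $\breve{\mathcal A}_n$ in (\ref{eq:breve-An}); its trace-free form, with $(2,2)$-entry $-\tilde l_n$, is exactly the bilinear identity (\ref{eq:aux-tr-2}) applied to the $\tilde P_n$. In the same way Theorem~\ref{teo:sylv-t} applied to (\ref{eq:ric-f1-t}) produces (\ref{eq:eg-sylvester-Yn-t}) with $\tilde{\mathcal B}_n$ and $\hat{\mathcal C}$; here the trace is \emph{not} zero, which is why the $(2,2)$-entry of $\tilde{\mathcal B}_n$ keeps the form $\hat l_{n-1}+(x-\tilde\beta_n)\hat\Theta_{n-1}/\tilde\gamma_n$. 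The initial conditions listed at the end of the statement are precisely (\ref{eq:initialcond}) and its $t$-analogue read off for the tilde and hat data.

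The degrees follow from the bounds in Theorems~\ref{theo:magnus} and~\ref{teo:sylv-t}. Inserting $\deg\tilde A=3$, $\deg\tilde B=1$, $\deg\tilde C=2$, $\deg\tilde D=1$ gives $\deg\tilde l_n=\max\{\max\{3,1\}-1,2\}=2$ and $\deg\tilde\Theta_n=1$, while $\deg\hat A=1$, $\deg\hat B=0$, $\deg\hat C=1$, $\deg\hat D=0$ give $\deg\hat l_n=1$ and $\deg\hat\Theta_n=0$. This justifies the stated parameterisations $\tilde l_n=\tilde l_{n,2}x^2+\tilde l_{n,1}x+\tilde l_{n,0}$, $\tilde\Theta_n=\tilde\Theta_{n,1}x+\tilde\Theta_{n,0}$, $\hat l_n=\hat l_{n,1}x+\hat l_{n,0}$, and $\hat\Theta_n$ constant.

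The substantive step is to pin down the coefficients (\ref{eq:ln2})--(\ref{eq:hat-thetan}). The leading coefficient of $\tilde l_n$ drops out immediately by comparing the top-order ($x^{n+3}$) terms of (\ref{eq:eqdifPn}): since $\tilde P_{n+1}$ is monic of degree $n+1$ one finds $\tilde l_{n,2}=(n+1)+\tfrac12(2+\alpha+\beta+\mu)=(\nu_n-1)/2$, with $\nu_n$ as in (\ref{eq:nu}). I would then proceed by descending orders. Matching the $x^{n+2}$ coefficient of (\ref{eq:eqdifPn}), using that the coefficient of $x^n$ in $\tilde P_{n+1}$ equals $-\sum_{j=0}^n\tilde\beta_j$ (from (\ref{eq:ttrr-Pn})), produces $\tilde l_{n,1}$ and is exactly what generates the partial sum $\sum_{j=0}^n\tilde\beta_j$ in (\ref{eq:ln1}). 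Next, the full polynomial $\tilde\Theta_n$ comes cleanly out of (\ref{eq:aux-tr-2}) (valid for any such $x$-system, applied to the $\tilde P_n$): writing $\tilde\Theta_n=-\tilde\gamma_{n+1}(\tilde l_{n+1}+\tilde l_n)/(x-\tilde\beta_{n+1})$ and carrying out the division, the $x^2$ and $x$ coefficients of $\tilde l_{n+1}+\tilde l_n$ give $\tilde\Theta_{n,1}=-\nu_n\tilde\gamma_{n+1}$ and $\tilde\Theta_{n,0}=-\vartheta_n\tilde\gamma_{n+1}$ with $\vartheta_n$ as in (\ref{eq:vartheta}) (the vanishing remainder being a built-in consistency check). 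Finally $\tilde l_{n,0}$ is obtained either by matching the $x^{n+1}$ coefficient of (\ref{eq:eqdifPn}) — where the third coefficient of $\tilde P_{n+1}$ supplies the sum $\sum_{j=0}^n(\tilde\beta_j^2-(1+t)\tilde\beta_j+2\tilde\gamma_j)$ of (\ref{eq:ln0}) — or from the determinant identity (\ref{eq:det-breveAn}). The $t$-coefficients (\ref{eq:hat-ln1})--(\ref{eq:hat-thetan}) are handled identically: degree consistency of the $t$-analogue of (\ref{eq:eqdifPn}) forces $\hat l_{n,1}=-\tfrac12\partial_t\ln w_0$ (the would-be $x^{n+2}$ term must cancel), while the trace and determinant identities for $\breve{\mathcal B}_n$ from the preceding corollary telescope to feed in the sums $\sum_j\partial_t\tilde\beta_j$ and $\sum_j\partial_t\ln\tilde\gamma_j$ appearing in $\hat l_{n,0}$ and $\hat\Theta_n$.

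Since every structural statement reduces to the already-established Theorems~\ref{teo:sylv-x}--\ref{teo:sylv-t} and to the transfer-matrix consistency (\ref{eq:Ansylv}) with its consequences (\ref{eq:pos-11})--(\ref{eq:aux-tr-2}) and (\ref{eq:det-breveAn}), the conceptual content is light and the only genuine difficulty is bookkeeping: carrying the many constants built from $\alpha,\beta,\mu,t,\beta_0,\tilde w_0$ correctly through the telescoped sums, in particular the several partially cancelling constant terms in $\tilde l_{n,0}$ and the constant part of $\vartheta_n$. I expect this to be the error-prone part, and I would guard against it by checking the formulae against the $n=0$ initial conditions and against the scalar identity (\ref{eq:det-breveAn}), which $-\tilde l_n^2+\tilde\Theta_n\tilde\Theta_{n-1}/\tilde\gamma_n$ must reproduce.
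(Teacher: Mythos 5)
Your proposal is correct and follows essentially the same route as the paper: the authors likewise specialise Theorems~\ref{teo:sylv-x} and~\ref{teo:sylv-t} to the Riccati equations (\ref{eq:ric-f1-x}) and (\ref{eq:ric-f1-t}) for $\tilde f$, and then extract the coefficients (\ref{eq:ln2})--(\ref{eq:hat-thetan}) by matching powers of $x$ using the expansions $\tilde P_{n+1}=x^{n+1}-(\tilde\eta_n+\tilde\beta_0)x^n+(\tilde\kappa_n+\tilde\beta_0\tilde\eta_n-\tilde\gamma_1)x^{n-1}+\dots$ and $\tilde P_n^{(1)}=x^n-\tilde\eta_nx^{n-1}+\tilde\kappa_nx^{n-2}+\dots$, where $\tilde\eta_n=\sum_{k=1}^n\tilde\beta_k$. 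Your minor variation of routing $\tilde\Theta_n$ through the trace identity (\ref{eq:aux-tr-2}) and the hat-coefficients through the trace formula for $\breve{\mathcal{B}}_n$ is equivalent bookkeeping (and checks out, e.g.\ it reproduces $\tilde\Theta_{n,1}=-\nu_n\tilde\gamma_{n+1}$ and $\vartheta_n$ exactly), not a genuinely different argument.
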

\begin{proof}
The coefficients (\ref{eq:ln2})-(\ref{eq:hat-thetan}) follow from the differential equations in (\ref{eq:eg-sylvester-Yn-x}) and (\ref{eq:eg-sylvester-Yn-t}), using the expansions
\begin{eqnarray*}
	\tilde{P}_{n+1}(x)& =& x^{n+1}-(\tilde{\eta}_n+\tilde{\beta}_0)x^n+(\tilde{\kappa}_n+\tilde{\beta}_0\tilde{\eta}_n-\tilde{\gamma}_1)x^{n-1}+\dots+P_{n+1}(0)\,, \phantom{00}
\label{eq:exp-Pn} \\
	\tilde{P}_n^{(1)}(x)& =& x^n-\tilde{\eta}_nx^{n-1}+\tilde{\kappa}_nx^{n-2}+\dots+P_n^{(1)}(0)\,,
\label{eq:exp-Pn-1}
\end{eqnarray*}
where
\begin{equation*}
	\tilde{\eta}_n=\sum_{k=1}^{n}\tilde{\beta}_k\,,\;\;
	\tilde{\kappa}_n=\sum_{1\leq i < j\leq n}^n\tilde{\beta}_i\tilde{\beta}_j-\sum_{k=2}^{n}\tilde{\gamma}_k\,.
\label{eq:eta-nu}
\end{equation*}
\end{proof}

The data from $\hat{l}_n, \hat{\Theta}_n$ will be a fundamental tool to deduce the Toda equations
in Theorem \ref{teo:todaeqs} and to deduce the derivatives in Lemma \ref{lemma:derivatives}.
We now give further relations to be used later on.
\begin{corollary}
Under the previous notations, we have the following equation for $\hat{\Theta}_n$:
\begin{equation}
	\hat{\Theta}_n=
	-\sum_{j=1}^{n}\partial_{t}\tilde{\gamma}_j
	+t\sum_{j=1}^{n}\partial_{t}\tilde{\beta}_j
	-\tfrac{1}{2}\sum_{j=1}^{n}\partial_{t}\tilde{\beta}^2_j
	+(t-\tilde{\beta}_0)\partial_{t}\tilde{\beta}_0
	+\tilde{w}_0\partial_{t}\ln w_0\,.
\label{eq:another-hat-thetan}
\end{equation}
\end{corollary}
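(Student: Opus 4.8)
The plan is to bypass the explicit product formula \eqref{eq:hat-thetan}---differencing it in $n$ would call on the Toda flow, which is only derived afterwards---and instead to read everything off the transfer-matrix consistency equation. By \eqref{eq:Bnsylv-t} (part (a) of the Corollary to Theorem \ref{teo:sylv-t}), specialised to the $\tilde{f}$-system, the transfer matrix $\tilde{\mathcal{M}}_n$ (of the form \eqref{eq:3trr-mat} with $\beta_n,\gamma_n$ replaced by $\tilde{\beta}_n,\tilde{\gamma}_n$) satisfies $\partial_t\tilde{\mathcal{M}}_n=\tilde{\mathcal{B}}_n\tilde{\mathcal{M}}_n-\tilde{\mathcal{M}}_n\tilde{\mathcal{B}}_{n-1}$, with $\tilde{\mathcal{B}}_n=\hat{A}^{-1}\breve{\mathcal{B}}_n$ as in \eqref{eq:eg-mat-Bn}. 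First I would clear the factor $\hat{A}=x-t$ and compare both sides entrywise as polynomials in $x$. The $(2,1)$ and $(2,2)$ entries collapse to $0=0$, so only the $(1,1)$ and $(1,2)$ entries carry information.

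Matching the coefficient of $x$ in the $(1,2)$ entry, and using that $\hat{l}_{n,1}=-\tfrac12\partial_t\ln w_0$ from \eqref{eq:hat-ln1} is independent of $n$ (so the leading parts of $\hat{l}_n$ and $\hat{l}_{n-2}$ cancel) while $\hat{\Theta}_n$ is a constant, I would obtain
\[
	\hat{\Theta}_{n-1}=\frac{\tilde{\gamma}_n}{\tilde{\gamma}_{n-1}}\,\hat{\Theta}_{n-2}+\partial_t\tilde{\gamma}_n .
\]
Matching the constant term in the $(1,1)$ entry, and using $\hat{l}_{n,0}-\hat{l}_{n-1,0}=-\partial_t\tilde{\beta}_n$ read off from \eqref{eq:hat-ln0}, I would obtain
\[
	\hat{\Theta}_n=\frac{\tilde{\gamma}_n}{\tilde{\gamma}_{n-1}}\,\hat{\Theta}_{n-2}+(t-\tilde{\beta}_n)\partial_t\tilde{\beta}_n .
\]

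Subtracting the first of these from the second cancels the common term $\tilde{\gamma}_n\hat{\Theta}_{n-2}/\tilde{\gamma}_{n-1}$ and leaves the first-order telescoping identity $\hat{\Theta}_n-\hat{\Theta}_{n-1}=(t-\tilde{\beta}_n)\partial_t\tilde{\beta}_n-\partial_t\tilde{\gamma}_n$, valid for $n\geq 1$. Summing this from $1$ to $n$ and writing $(t-\tilde{\beta}_j)\partial_t\tilde{\beta}_j=t\,\partial_t\tilde{\beta}_j-\tfrac12\partial_t\tilde{\beta}_j^2$ then reproduces the stated right-hand side, provided the base value is $\hat{\Theta}_0=(t-\tilde{\beta}_0)\partial_t\tilde{\beta}_0+\tilde{w}_0\partial_t\ln w_0$.

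The one genuinely delicate point is pinning down this base value. I would extract it from the initial condition $\hat{\Theta}_0=-\partial_t(\tilde{\beta}_0)\hat{A}+(x-\tilde{\beta}_0)(\hat{C}/2-\hat{l}_0)+\tilde{w}_0\hat{B}$ of Lemma \ref{lemma:data-systems}: the degree bound there forces $\hat{\Theta}_0$ to be a constant, so its $x^2$- and $x$-parts must cancel identically, and the constant is isolated most cleanly by evaluating at $x=t$, where $\hat{A}$ and the $\partial_t(\ln\tilde{w}_0)\hat{A}$ term of $\hat{l}_0$ drop out. Substituting $\hat{B}=\partial_t\ln w_0$ together with $\hat{C}(t)$ and $\hat{D}$ from \eqref{eq:ric-f1-t}, and invoking \eqref{eq:der-w0} together with the moment relations tying the $\tilde{f}$-data to $w_0,\beta_0$, should collapse the evaluation to $(t-\tilde{\beta}_0)\partial_t\tilde{\beta}_0+\tilde{w}_0\partial_t\ln w_0$. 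The remaining labour---the entrywise coefficient matching and this base-case reduction---is routine; the structural heart of the argument is the cancellation that turns the two relations coupling $\hat{\Theta}_n$ to $\hat{\Theta}_{n-2}$ into a single first-order telescoping recurrence.
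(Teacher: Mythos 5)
Your route is correct but genuinely different from the paper's. The paper proves \eqref{eq:another-hat-thetan} in one shot: it extracts the coefficient of $x^{n}$ from the $(1,1)$ component of the Sylvester system \eqref{eq:eg-sylvester-Yn-t}, using the expansions \eqref{eq:exp-Pn}--\eqref{eq:exp-Pn-1}, so the sums over $j$ and the terms $(t-\tilde{\beta}_0)\partial_t\tilde{\beta}_0+\tilde{w}_0\partial_t\ln w_0$ all emerge simultaneously from $\partial_t\tilde{\eta}_n$, $\partial_t\tilde{\kappa}_n$ and the data \eqref{eq:hat-ln0}; no recursion in $n$ and no base case are needed. You instead work at the level of the transfer-matrix compatibility \eqref{eq:Bnsylv-t}: I checked your two coefficient extractions, and both recurrences are right --- the $(1,2)$ entry at order $x$ gives $\hat{\Theta}_{n-1}=(\tilde{\gamma}_n/\tilde{\gamma}_{n-1})\hat{\Theta}_{n-2}+\partial_t\tilde{\gamma}_n$ (using that $\hat{l}_{n,1}$ is $n$-independent), the $(1,1)$ entry at order $x^0$ gives $\hat{\Theta}_{n}=(\tilde{\gamma}_n/\tilde{\gamma}_{n-1})\hat{\Theta}_{n-2}+(t-\tilde{\beta}_n)\partial_t\tilde{\beta}_n$ (using $\hat{l}_{n,0}-\hat{l}_{n-1,0}=-\partial_t\tilde{\beta}_n$ from \eqref{eq:hat-ln0}), the $(2,1)$, $(2,2)$ entries are vacuous, and the subtraction--telescoping then reproduces the stated sums exactly. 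What your approach buys is structural transparency (a first-order difference equation for $\hat{\Theta}_n$ whose summand is visibly $(t-\tilde{\beta}_j)\partial_t\tilde{\beta}_j-\partial_t\tilde{\gamma}_j$); what it costs is that all the nontrivial content gets pushed into the base value $\hat{\Theta}_0$, which the paper's method never has to isolate.

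That base case is where your sketch has a real soft spot. Evaluating the initial-condition formula at $x=t$ and substituting $\hat{B}$, $\hat{C}(t)$, $\hat{D}$ from \eqref{eq:ric-f1-t} together with \eqref{eq:der-w0} reduces the claim $\hat{\Theta}_0=(t-\tilde{\beta}_0)\partial_t\tilde{\beta}_0+\tilde{w}_0\partial_t\ln w_0$ precisely to the identity $(t-\tilde{\beta}_0)(t-\beta_0)\beta_0'=\tilde{w}_0\bigl(\tilde{\beta}_0'+2\beta_0'+\mu\bigr)$, which is nothing other than \eqref{eq:w-0-1} --- and the paper \emph{derives} \eqref{eq:w-0-1} later \emph{from} the corollary you are proving, so citing the paper's machinery here would be circular, and "invoking the moment relations" quietly hides exactly this computation (it is doable from $\partial_tw_n=t\,\partial_tw_{n-1}-\mu w_{n-1}$ and \eqref{eq:moments}, but it is the whole labour). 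A clean, noncircular repair stays inside your own framework: take the $(1,1)$ component of \eqref{eq:eg-sylvester-Yn-t} at $n=0$, i.e.\ $(x-t)\partial_t\tilde{P}_1=\hat{l}_0\tilde{P}_1+\hat{\Theta}_0-\tilde{P}_1\hat{C}/2-\tilde{w}_0\hat{B}$ with $\tilde{P}_1=x-\tilde{\beta}_0$, and compare constant terms. Since the next-to-leading coefficient argument that yields \eqref{eq:hat-ln0} works verbatim at $n=0$ (giving $\hat{l}_{0,0}=\hat{C}_0/2-\tilde{\beta}_0'$, with $\hat{C}_0$ the constant term of $\hat{C}$), one gets
\begin{equation*}
	t\,\tilde{\beta}_0'=-\tilde{\beta}_0\hat{l}_{0,0}+\hat{\Theta}_0+\tfrac{1}{2}\tilde{\beta}_0\hat{C}_0-\tilde{w}_0\,\partial_t\ln w_0
	\quad\Longrightarrow\quad
	\hat{\Theta}_0=(t-\tilde{\beta}_0)\tilde{\beta}_0'+\tilde{w}_0\,\partial_t\ln w_0\,,
\end{equation*}
with no appeal to \eqref{eq:w-0-1} or to hidden moment identities. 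With that substitution your proof is complete and self-contained.
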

\begin{proof}The identity for $\hat{\Theta}_n$ follows from the coefficient of $x^n$ in the equation from the $(1,1)$ entry in (\ref{eq:eg-sylvester-Yn-t}), simplified using standard computations.
\end{proof}

\subsection{Fundamental Formulae and Derivatives of Related Quantities}
In this section we give relations satisfied by the quantities $\tilde{l}_{n,1}, \tilde{l}_{n,0}$ and  $\tilde{\Theta}_{n,0}$ that define the polynomial entries of the matrix
$\tilde{\mathcal{A}}_n$ for the system (\ref{eq:eg-sylvester-Yn-x}). Such relations will be a fundamental tool in the derivation of the sixth Painlev\'e equation in Section \ref{sec:3.3}.
Note that the poles of the matrix $\tilde{\mathcal{A}}_n$ are the zeroes of the polynomial $\tilde{A}$ given in (\ref{eq:eg-pol-A1}).

\subsubsection{Toda-type Equations for the Recurrence Coefficients}

We start by obtaining relations between the polynomial entries of the matrix of the systems (\ref{eq:eg-sylvester-Yn-x}) and (\ref{eq:eg-sylvester-Yn-t}).
The first result shows how $\hat{l}_n, \hat{\Theta}_n$ and $\tilde{l}_n, \tilde{\Theta}_n$ are related when evaluated at the pole $x=t$.

\begin{lemma}
Under the previous notations, the polynomial entries in the matrices (\ref{eq:eg-mat-An}) and  (\ref{eq:eg-mat-Bn}) satisfy
\begin{equation}
	\hat{l}_n(t)=-\frac{\tilde{l}_n(t)}{t(t-1)}\,, \;\;\; \hat{\Theta}_n=-\frac{\tilde{\Theta}_n(t)}{t(t-1)}\,.
\label{eq:rel-tilde-hat-pol}
\end{equation}
\end{lemma}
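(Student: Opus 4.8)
The plan is to exploit the compatibility identity relating the two Riccati equations for $\tilde{f}$, namely the $x$-equation \eqref{eq:ric-f1-x} and the $t$-equation \eqref{eq:ric-f1-t}, evaluated at the common pole $x=t$. The key structural fact is that both matrix systems \eqref{eq:eg-sylvester-Yn-x} and \eqref{eq:eg-sylvester-Yn-t} govern the \emph{same} matrix $\tilde{Y}_n$, so the polynomial entries $\tilde{l}_n,\tilde{\Theta}_n$ on the one hand and $\hat{l}_n,\hat{\Theta}_n$ on the other are not independent; they are linked through the compatibility conditions of Theorem \ref{teo:compat}. Since $\hat{A}(x)=x-t$ vanishes to first order at $x=t$ while $\tilde{A}(x)=x(x-1)(x-t)$ vanishes there with the extra factor $t(t-1)$, I expect the relation \eqref{eq:rel-tilde-hat-pol} to emerge precisely from matching the residues of $\tilde{\mathcal{A}}_n$ and $\tilde{\mathcal{B}}_n$ at $x=t$.

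First I would write out the compatibility condition $\partial_t\tilde{\mathcal{A}}_n=\partial_x\tilde{\mathcal{B}}_n+\tilde{\mathcal{B}}_n\tilde{\mathcal{A}}_n-\tilde{\mathcal{A}}_n\tilde{\mathcal{B}}_n$ from \eqref{eq:schle-type}, and multiply through by $\tilde{A}=x(x-1)(x-t)$ to clear the pole. The matrix $\tilde{\mathcal{A}}_n$ has a simple pole at $x=t$ with residue obtained by evaluating $\tfrac{1}{x(x-1)}\,[\text{numerator}]$ at $x=t$, i.e. with the factor $\tfrac{1}{t(t-1)}$; similarly $\tilde{\mathcal{B}}_n=\tfrac{1}{x-t}\,\breve{\mathcal{B}}_n$ has residue $\breve{\mathcal{B}}_n|_{x=t}$, whose relevant entries are $\hat{l}_n(t)$ and $\hat{\Theta}_n$ (the latter being constant in $x$). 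Taking residues at $x=t$ on both sides of the compatibility equation, the commutator terms $[\tilde{\mathcal{B}}_n,\tilde{\mathcal{A}}_n]$ produce the leading contributions, and equating the $(1,1)$ and $(1,2)$ components should force the residue of $\tilde{\mathcal{B}}_n$ to be proportional to the residue of $\tilde{\mathcal{A}}_n$ with the scalar factor $-\tfrac{1}{t(t-1)}$, which is exactly \eqref{eq:rel-tilde-hat-pol}.

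An alternative and perhaps cleaner route, which I would try in parallel, is to use the difference-differential equations \eqref{eq:eqdifPn}-\eqref{eq:eqdifPn1} directly for both the $x$- and $t$-derivatives of $\tilde{P}_{n+1}$ and $\tilde{P}_n^{(1)}$, and then apply the elementary chain-rule observation that at the point $x=t$ the total derivative of a quantity like $\tilde{P}_{n+1}(t,t)$ mixes $\partial_x$ and $\partial_t$. Writing $\partial_x\tilde{f}$ at $x=t$ via \eqref{eq:ric-f1-x} and $\partial_t\tilde{f}$ via \eqref{eq:ric-f1-t}, and using that $\hat{A}(t)=0$ together with $\tilde{A}(x)=t(t-1)\,\hat{A}(x)+O((x-t)^2)$ near $x=t$, one compares the coefficients multiplying $\tilde{P}_n$ and $\tilde{P}_n^{(1)}$ in the two representations. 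The coefficient of $\tilde{P}_n$ yields the $\tilde{\Theta}_n(t)$ versus $\hat{\Theta}_n$ relation and the coefficient of $\tilde{P}_{n+1}$ (through $\tilde{l}_n-\tilde{C}/2$ versus $\hat{l}_n-\hat{C}/2$) yields the $\tilde{l}_n(t)$ versus $\hat{l}_n(t)$ relation, with the factor $t(t-1)$ appearing from the ratio $\tilde{A}/\hat{A}$ at $x=t$.

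The main obstacle will be controlling the non-leading terms when extracting residues: the commutator $\tilde{\mathcal{B}}_n\tilde{\mathcal{A}}_n-\tilde{\mathcal{A}}_n\tilde{\mathcal{B}}_n$ contains products of two quantities each with a pole at $x=t$, so a naive residue calculation mixes the simple and double poles. I expect the cleanest bookkeeping is to work with the \emph{scalar} difference-differential equations rather than the full matrix commutator, thereby avoiding the double-pole cancellations entirely, and to track only the $\tilde{P}_n$ and $\tilde{P}_{n+1}$ coefficients at $x=t$, where the identities \eqref{eq:eqdifPn}-\eqref{eq:eqdifPn1} read off $\tilde{l}_n(t),\tilde{\Theta}_n(t)$ directly and their $t$-analogues read off $\hat{l}_n(t),\hat{\Theta}_n$. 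The constancy of $\hat{\Theta}_n$ in $x$ (recorded in Lemma \ref{lemma:data-systems}) is what makes the second identity in \eqref{eq:rel-tilde-hat-pol} an equality of scalars rather than polynomials, and I would flag this as the point requiring the most care.
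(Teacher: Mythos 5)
Your guiding intuition --- that (\ref{eq:rel-tilde-hat-pol}) is really the residue identity
$\left.\mathcal{R}es\,\tilde{\mathcal{B}}_n\right|_{x=t}=-\left.\mathcal{R}es\,\tilde{\mathcal{A}}_n\right|_{x=t}$ at the common pole $x=t$ --- is exactly how the paper frames the lemma, but neither of your two routes actually closes, and the ingredient that does close it is missing from both. For the Schlesinger route: writing $\tilde{\mathcal{A}}_n=\sum_j \tilde{\mathcal{A}}_{n,j}/(x-x_j)$ and $\tilde{\mathcal{B}}_n=B_\infty+B_3/(x-t)$ and collecting the coefficient of $(x-t)^{-2}$ in (\ref{eq:eg-compat}) gives $\tilde{\mathcal{A}}_{n,3}=-B_3+[B_3,\tilde{\mathcal{A}}_{n,3}]$, i.e.\ $S=[S,\tilde{\mathcal{A}}_{n,3}]$ for $S:=\tilde{\mathcal{A}}_{n,3}+B_3$. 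This does \emph{not} force $S=0$ componentwise, contrary to your ``equating the $(1,1)$ and $(1,2)$ components should force'' step: one needs $1\notin\operatorname{spec}\bigl(S\mapsto[S,\tilde{\mathcal{A}}_{n,3}]\bigr)$, and since $\det\tilde{\mathcal{A}}_{n,3}=\varphi(t)/\bigl(t^2(t-1)^2\bigr)=-\mu^2/4$ by (\ref{eq:eg-det}) and (\ref{eq:quant-varphi}), that spectrum is $\{0,\pm\mu\}$, so the argument fails outright in the resonant cases $\mu=\pm1$ (which are not excluded for the weight (\ref{eq:jacobi-w})) and even generically requires a spectral argument you never supply. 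Your scalar route has a more basic defect: at $x=t$ both $\tilde{A}$ and $\hat{A}$ vanish, so the derivative terms in the two difference-differential relations drop out entirely (no chain-rule mixing of $\partial_x$ and $\partial_t$ occurs), leaving two linear relations satisfied by the single vector $\bigl(\tilde{P}_{n+1}(t),\tilde{P}^{(1)}_n(t),\tilde{P}_n(t)\bigr)$; two linear forms vanishing on one vector need not be proportional, so ``comparing the coefficients'' of $\tilde{P}_{n+1}$ and $\tilde{P}_n$ presupposes exactly what is to be proved.

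What the paper actually does is different and pinpoints the missing input. From (\ref{eq:eg-sylvester-Yn-t}) one has $\tilde{\mathcal{B}}_n=\partial_t(\tilde{Y}_n)\tilde{Y}_n^{-1}+\tilde{Y}_n\hat{\mathcal{C}}\tilde{Y}_n^{-1}$, and since $\det\tilde{Y}_n=-\prod_k\tilde{\gamma}_k$ is independent of $x$ the first term is entire, whence $\left.\mathcal{R}es\,\tilde{\mathcal{B}}_n\right|_{x=t}=\tilde{Y}_n(t)\,(\hat{A}\hat{\mathcal{C}})(t)\,\tilde{Y}_n^{-1}(t)$; similarly, evaluating the $x$-system at $x=t$ (or using the bilinear formula (\ref{eq:ln-def})) gives $t(t-1)\left.\mathcal{R}es\,\tilde{\mathcal{A}}_n\right|_{x=t}=\tilde{Y}_n(t)\,(\tilde{A}\tilde{\mathcal{C}})(t)\,\tilde{Y}_n^{-1}(t)$. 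The lemma is therefore equivalent to scalar identities relating $(\hat{B},\hat{C}(t),\hat{D})$ to $-\bigl(\tilde{B}(t),\tilde{C}(t),\tilde{D}(t)\bigr)/\bigl(t(t-1)\bigr)$, and these are \emph{not} formal consequences of the pair of Riccati equations for $\tilde{f}$: they carry the moment and weight data of the underlying Jacobi system, which the paper feeds in through $D(t)+t(t-1)\hat{D}_0=0$ of (\ref{eq:rel-D-hat-D}) (the compatibility relation of the semi-classical equations (\ref{eq:der-f-x})--(\ref{eq:der-f-t}) for $f$, not of the $\tilde{Y}_n$ systems) together with (\ref{eq:der-w0}). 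That inherited identity is the crux of the proof, and its absence is the genuine gap in your proposal: without it, the claimed proportionality of the two residue matrices at $x=t$ is unproven on either of your routes.
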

\begin{proof}
The relation (\ref{eq:rel-tilde-hat-pol}) is a direct consequence of the following equality for the residue matrices at $x=t$:
\begin{equation}
	\left. \mathcal{R}es\; \tilde{\mathcal{B}}_n\right|_{x=t}=-\left. \mathcal{R}es\; \tilde{\mathcal{A}}_n\right|_{x=t}.
\label{eq:iquality-res}
\end{equation}
Equation (\ref{eq:iquality-res}) is deduced as follows.
Firstly, recall that $\tilde{\mathcal{B}}_n$ has one single pole at $x=t$.
To compute the residue of $\tilde{\mathcal{B}}_n$ at $x=t,$ note that from (\ref{eq:eg-sylvester-Yn-t}) we have
\begin{equation*}
	\tilde{\mathcal{B}}_n=\partial_t(\tilde{Y}_n)\tilde{Y}_n^{-1}+(\tilde{Y}_n)\hat{\mathcal{C}}\tilde{Y}_n^{-1}\,.
\end{equation*}
Noting that $\partial_t(\tilde{Y}_n)\tilde{Y}_n^{-1}$ is a matrix of polynomial entries, thus entire, then the residue of $\tilde{\mathcal{B}}_n$ is given by
\begin{equation*}
	\left. \mathcal{R}es\; \tilde{\mathcal{B}}_n\right|_{x=t}=\left. \mathcal{R}es\; \tilde{Y}_n\hat{\mathcal{C}}\tilde{Y}_n^{-1}\right|_{x=t}\,.
\end{equation*}
As $\tilde{Y}_n\hat{\mathcal{C}}\tilde{Y}_n^{-1}$ is a matrix of rational entries with a single pole at $x=t$, then
\begin{equation*}
	\left. \mathcal{R}es\; \tilde{Y}_n\hat{\mathcal{C}}\tilde{Y}_n^{-1}\right|_{x=t}=\tilde{Y}_n(t)\hat{\hat{\mathcal{C}}}(t)\tilde{Y}_n^{-1}(t)\,,\quad \hat{\hat{\mathcal{C}}}
	=	\begin{bmatrix}
			\hat{C}/2 & -\hat{D}/\tilde{w}_0 \\
			\tilde{w}_0\hat{B} & -\hat{C}/2+\partial_{t}(\ln \tilde{w}_0)\hat{A}
		\end{bmatrix}.
\end{equation*}
For matters of simplicity, we only give details for the computations concerning the $(1,1)$ entry.
We have, in the account of $\det \tilde{Y}_n=-\prod_{k=}^{n}\tilde{\gamma}_k$,
\begin{multline}
	\left[\tilde{Y}_n(t)\hat{\hat{\mathcal{C}}}(t)\tilde{Y}_n^{-1}(t)\right]_{(1,1)}
	 = -\frac{1}{\prod_{k=}^{n}\tilde{\gamma}_k}\left\{\left(\tilde{P}_{n+1}(t)\tilde{P}^{(1)}_{n-1}(t)+\tilde{P}_{n}(t)\tilde{P}^{(1)}_{n}(t)\right)\tfrac{1}{2}\hat{C}(t)
	 \right.
\\
	\left.
	+ \tilde{P}_{n+1}(t)\tilde{P}_{n}(t)\frac{\hat{D}}{\tilde{w}_0}+\tilde{w}_0\tilde{P}_{n}(t)\tilde{P}^{(1)}_{n-1}(t)\hat{B}\right\}\,.
\label{eq:aux1-res}
\end{multline}
We now compare (\ref{eq:aux1-res}) with the residue at $x=t$ of the $(1,1)$ entry in $\tilde{\mathcal{A}}_n$. Note that
\begin{equation*}
	\left[\left. \mathcal{R}es\; \tilde{\mathcal{A}}_n\right|_{x=t}\right]_{(1,1)}=\frac{\tilde{l}_n(t)}{t(t-1)}\,.
\end{equation*}
The computation of $\tilde{l}_n$, in accordance with the definition in (\ref{eq:ln-def}), yields
\begin{multline}
	\tilde{l}_n(t)=-\frac{1}{\prod_{k=}^{n}\tilde{\gamma}_k}\left\{\left(\tilde{P}_{n+1}(t)\tilde{P}^{(1)}_{n-1}(t)
	+\tilde{P}_{n}(t)\tilde{P}^{(1)}_{n}(t)\right)\tfrac{1}{2}\tilde{C}(t)
	\right.
\\
	\left.
	+ \tilde{P}_{n+1}(t)\tilde{P}_{n}(t)\frac{\tilde{D}(t)}{w_0^{(1)}}+w_0^{(1)}\tilde{P}_{n}(t)\tilde{P}^{(1)}_{n-1}(t)\tilde{B}(t)\right\}\,.
\label{eq:aux2-res}
\end{multline}
Taking into account the data contained in $\tilde{B}, \tilde{C}, \tilde{D}, \hat{B}, \hat{C}$ and $\hat{D},$ from equations (\ref{eq:aux1-res}) and (\ref{eq:aux2-res}), we get
\begin{multline*}
	\left[\left. \mathcal{R}es\; \tilde{\mathcal{A}}_n\right|_{x=t}\right]_{(1,1)}+\left[\left. \mathcal{R}es\; \tilde{\mathcal{B}}_n\right|_{x=t}\right]_{(1,1)}
\\
	=-\frac{1}{w_0\prod_{k=}^{n}\tilde{\gamma}_k}\left\{-(t-\beta_0)\left(\tilde{P}_{n+1}(t)\tilde{P}^{(1)}_{n-1}(t)+\tilde{P}_{n}(t)\tilde{P}^{(1)}_{n}(t)\right)
	\right.
\\
	\left.
	+\frac{(t-\beta_0)^2}{\tilde{w}_0} \tilde{P}_{n+1}(t)\tilde{P}_{n}(t)+\tilde{w}_0\tilde{P}_{n}(t)\tilde{P}^{(1)}_{n-1}(t)\right\}\left(\hat{D}_0+\frac{D(t)}{t(t-1)}\right)\,,
\end{multline*}
with $D, \hat{D}_0$ the polynomials given in (\ref{eq:der-f-x}) and (\ref{eq:der-f-t}), respectively.
As $\hat{D}_0+\frac{D(t)}{t(t-1)}=0$ (recalling (\ref{eq:rel-D-hat-D})), then we conclude that
\begin{equation*}
	\left[\left. \mathcal{R}es\; \tilde{\mathcal{A}}_n\right|_{x=t}\right]_{(1,1)}+\left[\left. \mathcal{R}es\; \tilde{\mathcal{B}}_n\right|_{x=t}\right]_{(1,1)} = 0.
\end{equation*}
Similar computations lead to
\begin{equation*}
	\left[\left. \mathcal{R}es\; \tilde{\mathcal{A}}_n\right|_{x=t}\right]_{(i,j)}+\left[\left. \mathcal{R}es\; \tilde{\mathcal{B}}_n\right|_{x=t}\right]_{(i,j)} = 0, \quad i,j =1, 2\,,
\end{equation*}
hence (\ref{eq:iquality-res}) follows.
\end{proof}

As a consequence, we can show that the derivatives of the recurrence coefficients are given in terms of the functions
$\tilde{\Theta}_n, \tilde{l}_n$ via the following Toda-type equations.
\begin{theorem}\label{teo:todaeqs}
Under the conditions and notations of Lemma \ref{lemma:data-systems}, the recurrence relation coefficients  $(\tilde{\beta}_n)_{n\geq 0}, (\tilde{\gamma}_n)_{n\geq 0}$
satisfy the following Toda-type equations, for all $n \geq 1$ :
\begin{align}
	t(t-1)\frac{d}{dt}\tilde{\beta}_n&=\tilde{\beta_n}(\tilde{\beta}_n-1)+(\nu_{n-1}+2)\tilde{\gamma}_{n+1}-(\nu_{n-1}-2)\tilde{\gamma}_n\,,\label{eq:toda-beta}
\\
	t(t-1)\frac{d}{dt}\ln \tilde{\gamma}_n&=-2+(\nu_{n-1}+1)\tilde{\beta}_{n}-(\nu_{n-1}-3)\tilde{\beta}_{n-1}\,.\label{eq:toda-gamma}
\end{align}
\end{theorem}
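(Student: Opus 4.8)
The plan is to extract both Toda equations from the single residue identity \eqref{eq:rel-tilde-hat-pol}, which expresses the entries $\hat{l}_n(t)$ and $\hat{\Theta}_n$ of the $t$-system matrix $\tilde{\mathcal{B}}_n$ directly through the $x$-system polynomials $\tilde{l}_n,\tilde{\Theta}_n$ evaluated at the moving pole $x=t$. The decisive structural observation is that, by Lemma \ref{lemma:data-systems}, the quantities $\hat{l}_n(t)$ and $\hat{\Theta}_n/\tilde{\gamma}_{n+1}$ carry the \emph{cumulative} sums $\sum_{j=0}^{n}\partial_t\tilde{\beta}_j$ and $\partial_t\ln w_0+\sum_{j=0}^{n+1}\partial_t\ln\tilde{\gamma}_j$ respectively; hence forming a first difference in $n$ will telescope these sums down to the single derivatives $\partial_t\tilde{\beta}_n$ and $\partial_t\ln\tilde{\gamma}_n$, which is precisely what \eqref{eq:toda-beta}--\eqref{eq:toda-gamma} assert.

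First I would derive \eqref{eq:toda-beta}. Evaluating \eqref{eq:hat-ln1}--\eqref{eq:hat-ln0} at $x=t$ and eliminating $\partial_t\ln w_0$ via the moment relation \eqref{eq:der-w0} collapses $\hat{l}_n(t)$ to $-\partial_t\beta_0-\tfrac{1}{2}\mu-\sum_{j=0}^{n}\partial_t\tilde{\beta}_j$. Subtracting the same expression at index $n-1$ annihilates the $n$-independent terms and gives $\hat{l}_n(t)-\hat{l}_{n-1}(t)=-\partial_t\tilde{\beta}_n$, while \eqref{eq:rel-tilde-hat-pol} supplies $\hat{l}_n(t)-\hat{l}_{n-1}(t)=-[\tilde{l}_n(t)-\tilde{l}_{n-1}(t)]/(t(t-1))$. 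It then remains to compute $\tilde{l}_n(t)-\tilde{l}_{n-1}(t)$ from \eqref{eq:ln2}--\eqref{eq:ln0}: using $\nu_n-\nu_{n-1}=2$ the differences of $\tilde{l}_{n,2}$ and $\tilde{l}_{n,1}$ are immediate, and after the cancellations $t^2-(1+t)t+t=0$ and $\tilde{\beta}_n t-(1+t)\tilde{\beta}_n=-\tilde{\beta}_n$ one is left with $\tilde{\beta}_n(\tilde{\beta}_n-1)+\nu_n\tilde{\gamma}_{n+1}-(\nu_{n-1}-2)\tilde{\gamma}_n$; writing $\nu_n=\nu_{n-1}+2$ reproduces \eqref{eq:toda-beta}.

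Next I would treat \eqref{eq:toda-gamma} in the same spirit. Dividing \eqref{eq:hat-thetan} by $\tilde{\gamma}_{n+1}$ and equating with $-\tilde{\Theta}_n(t)/(t(t-1)\tilde{\gamma}_{n+1})$, then inserting $\tilde{\Theta}_{n,1}=-\nu_n\tilde{\gamma}_{n+1}$ and $\tilde{\Theta}_{n,0}=-\vartheta_n(t)\tilde{\gamma}_{n+1}$ from \eqref{eq:thetan1}--\eqref{eq:thetan0}, yields $\partial_t\ln w_0+\sum_{j=0}^{n+1}\partial_t\ln\tilde{\gamma}_j=(\nu_n t+\vartheta_n(t))/(t(t-1))$. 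Taking the difference of consecutive $n$ isolates $\partial_t\ln\tilde{\gamma}_{n+1}$ on the left, and on the right I would use $\nu_n-\nu_{n-1}=2$ together with the explicit form \eqref{eq:vartheta} to get $\vartheta_n(t)-\vartheta_{n-1}(t)=(\nu_n+1)\tilde{\beta}_{n+1}-(\nu_{n-1}-1)\tilde{\beta}_n-2(1+t)$; the added $2t$ cancels the $-2t$ inside, and the shift $n\mapsto n-1$ (with $\nu_{n-2}-1=\nu_{n-1}-3$) produces \eqref{eq:toda-gamma}. I expect the main obstacle to be computational rather than conceptual: the difference $\tilde{l}_{n,0}-\tilde{l}_{n-1,0}$ built from the lengthy coefficient \eqref{eq:ln0} must be reduced carefully, and throughout one must distinguish the fixed Jacobi parameters $\beta_0,w_0$ from the running coefficients $\tilde{\beta}_n,\tilde{\gamma}_n$, invoking \eqref{eq:der-w0} at exactly the step where $\partial_t\ln w_0$ would otherwise obstruct the telescoping.
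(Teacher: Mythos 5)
Your proposal is correct and follows essentially the same route as the paper: the paper's proof likewise differences the residue identity (\ref{eq:rel-tilde-hat-pol}) in $n$, obtaining $t(t-1)\bigl(\hat{l}_n(t)-\hat{l}_{n-1}(t)\bigr)=\tilde{l}_{n-1}(t)-\tilde{l}_n(t)$ and $t(t-1)\bigl(\hat{\Theta}_n/\tilde{\gamma}_{n+1}-\hat{\Theta}_{n-1}/\tilde{\gamma}_n\bigr)=\tilde{\Theta}_{n-1}(t)/\tilde{\gamma}_n-\tilde{\Theta}_n(t)/\tilde{\gamma}_{n+1}$, and then inserts the explicit data of Lemma \ref{lemma:data-systems}, which is exactly your telescoping computation (your detailed cancellations, including $\vartheta_n(t)-\vartheta_{n-1}(t)=(\nu_n+1)\tilde{\beta}_{n+1}-(\nu_{n-1}-1)\tilde{\beta}_n-2(1+t)$ and the index shift using $\nu_{n-2}-1=\nu_{n-1}-3$, check out). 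Your invocation of (\ref{eq:der-w0}) to eliminate $\partial_t\ln w_0$ is harmless but superfluous, since those $n$-independent terms cancel in the difference anyway.
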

\begin{proof}
Because of (\ref{eq:rel-tilde-hat-pol}), equations (\ref{eq:toda-beta})-(\ref{eq:toda-gamma}) are obtained, respectively, from
\begin{align*}
	t(t-1)\left(\hat{l}_n(t)-\hat{l}_{n-1}(t)\right)&=\tilde{l}_{n-1}(t)-\tilde{l}_{n}(t)\,,
\\
	t(t-1)\left(\frac{\hat{\Theta}_n}{\tilde{\gamma_{n+1}}}-\frac{\hat{\Theta}_{n-1}}{\tilde{\gamma_{n}}}\right)&=\frac{\tilde{\Theta}_{n-1}}{\tilde{\gamma}_n}(t)-
	\frac{\tilde{\Theta}_{n}}{\tilde{\gamma}_{n+1}}(t)\,,
\end{align*}
by using the data given in Lemma \ref{lemma:data-systems}.
\end{proof}

\subsubsection{The Schlesinger-type Equations}
We shall now analyse the compatibility conditions that result from the consistency of the two systems for $\tilde{Y}_n$, (\ref{eq:eg-sylvester-Yn-x}) and (\ref{eq:eg-sylvester-Yn-t}).
The compatibility conditions stated in Theorem \ref{teo:compat} are now given by
\begin{equation}
	\partial_t \tilde{\mathcal{A}}_n =\partial_x\tilde{\mathcal{B}}_n+\tilde{\mathcal{B}}_n\tilde{\mathcal{A}}_n-\tilde{\mathcal{A}}_n\tilde{\mathcal{B}}_n\,.
\label{eq:eg-compat}
\end{equation}
Denoting the poles of $\tilde{\mathcal{A}}_n$  in (\ref{eq:eg-mat-An}) by $x_1=0, x_2=1, x_3=t,$ then we may write
\begin{equation}
\tilde{\mathcal{A}}_n=\sum_{j=1}^{3}\frac{\tilde{\mathcal{A}}_{n,j}}{x-x_j}\,,\label{eq:frac-An}
\end{equation}
where the residue matrices are
\begin{equation*}
	{\tilde{\mathcal{A}}}_{n,j}=\frac{1}{\partial_x \tilde{A}(x_j)}
	\begin{bmatrix}
		 \tilde{l}_{n}(x_j) & \tilde{{\Theta}}_{n}(x_j) \\
		-\frac{\tilde{\Theta}_{n-1}}{{\tilde{\gamma}}_n}(x_j) &\;\;
		-\tilde{l}_{n}(x_j)
	\end{bmatrix}\,, \quad j=1, 2, 3\,.
\end{equation*}
The matrix $\tilde{\mathcal{B}}_n$, as implied by (\ref{eq:eg-mat-Bn})  and (\ref{eq:iquality-res}), has the form
\begin{equation}
	\tilde{\mathcal{B}}_n=\tilde{\mathcal{B}}_{n,\infty}- \frac{\tilde{\mathcal{A}}_{n,3}}{x-t}\,,
\label{eq:frac-Bn-hat}
\end{equation}
with residue matrix
\begin{equation*}
	\tilde{\mathcal{B}}_{n,\infty}=
	\begin{bmatrix}
		 \hat{l}_{n,1} & 0 \\
		 0 &\;\;\hat{l}_{n-1,1}+\frac{\hat{\Theta}_{n-1}}{\tilde{\gamma}_n}
	\end{bmatrix}.
\end{equation*}

Due to the forms of (\ref{eq:frac-An}) and (\ref{eq:frac-Bn-hat}), then (\ref{eq:eg-compat}) becomes a Schlesinger-type equation
\begin{equation}
\sum_{j=1}^{3}\frac{\partial_t\tilde{\mathcal{A}}_{n,j}}{x-x_j}=\sum_{j=1}^{3}\frac{[\hat{\mathcal{\mathcal{B}}}_{n\infty}\,,\,\tilde{\mathcal{A}}_{n,j}]}{x-x_j}+
\frac{[{\mathcal{\mathcal{A}}}_{n,1}\,,\,\tilde{\mathcal{A}}_{n,3}]}{(x-x_1)(x-x_3)}+\frac{[{\mathcal{\mathcal{A}}}_{n,2}\,,\,\tilde{\mathcal{A}}_{n,3}]}{(x-x_2)(x-x_3)}\,.\label{eq:eg-schle}
\end{equation}

The above equation (\ref{eq:eg-schle}) will now be used to obtain the derivatives of the quantities $\tilde{l}_{n,1}, \tilde{l}_{n,0}$ and $\Theta_{n,0}/\tilde{\gamma}_{n+1}$
that parameterise the matrix  $\tilde{\mathcal{A}}_n$. 
\begin{lemma}\label{lemma:derivatives}
Under the previous notations, the following equations hold, for all $n \geq 1$:
\begin{align}
	\partial_{t}\tilde{l}_{n,0}&=\frac{\tilde{l}_{n,0}}{t}-\frac{\tilde{\gamma}_{n+1}(\nu_n\vartheta_{n-1}-\vartheta_n\nu_{n-1})}{t(t-1)}\,,
\label{eq:der-ln0}\\
	\partial_{t}\tilde{l}_{n,1}&=\frac{\nu_n -1}{2(t-1)}+\frac{\tilde{l}_{n,1}}{t-1}+\frac{\tilde{l}_{n,0}}{t(t-1)}\,,
\label{eq:der-ln1}\\
	\partial_{t}\vartheta_n&=\frac{-\vartheta_n-\vartheta_n^2+2\xi_n}{t(t-1)}\,,
\label{eq:der-vartheta}
\end{align}
with \begin{equation}
\xi_n=\vartheta_n\tilde{l}_{n,1}-\nu_n\tilde{l}_{n,0}\,.\label{eq:eg-xin}
\end{equation}
\end{lemma}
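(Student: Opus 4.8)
The plan is to extract the three derivative formulas from the Schlesinger-type equation \eqref{eq:eg-schle} by matching residues at the three finite poles $x_1=0$, $x_2=1$, $x_3=t$. Since \eqref{eq:eg-schle} is an identity between rational functions of $x$ whose only singularities are simple and double poles at $x_1,x_2,x_3$, I would first expand the partial-fraction structure of both sides. The left-hand side $\sum_j (x-x_j)^{-1}\partial_t\tilde{\mathcal{A}}_{n,j}$ has only simple poles, but note crucially that the residue positions $x_j$ themselves depend on $t$ (in particular $x_3=t$), so $\partial_t\tilde{\mathcal{A}}_n$ must be computed using the chain rule: differentiating $\tilde{\mathcal{A}}_{n,3}/(x-t)$ with respect to $t$ produces both a term $\partial_t\tilde{\mathcal{A}}_{n,3}/(x-t)$ and a double-pole term $\tilde{\mathcal{A}}_{n,3}/(x-t)^2$. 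The right-hand side already carries the double poles explicitly through the cross-commutator terms $[\tilde{\mathcal{A}}_{n,1},\tilde{\mathcal{A}}_{n,3}]/((x-x_1)(x-x_3))$ and $[\tilde{\mathcal{A}}_{n,2},\tilde{\mathcal{A}}_{n,3}]/((x-x_2)(x-x_3))$, which must be re-expanded in partial fractions to separate the $(x-t)^{-2}$ contribution from the simple-pole contributions.

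The main computational engine is then residue-matching. First I would match the coefficients of $(x-0)^{-1}$ and $(x-1)^{-1}$: these give the evolution of the residue matrices $\tilde{\mathcal{A}}_{n,1}$ and $\tilde{\mathcal{A}}_{n,2}$ in terms of commutators with $\tilde{\mathcal{B}}_{n,\infty}$ and with $\tilde{\mathcal{A}}_{n,3}$. Reading off the matrix entries of these residue equations, and using the explicit parameterisations $\tilde{l}_n(x)=\tfrac{1}{2}(\nu_n-1)x^2+\tilde{l}_{n,1}x+\tilde{l}_{n,0}$ and $\tilde{\Theta}_n(x)/\tilde{\gamma}_{n+1}=-\nu_n x-\vartheta_n$ from Lemma \ref{lemma:data-systems}, each scalar equation becomes a relation among $\tilde{l}_{n,1}$, $\tilde{l}_{n,0}$, $\vartheta_n$, $\nu_n$, $\tilde{\gamma}_{n+1}$ and their $t$-derivatives. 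To isolate $\partial_t\tilde{l}_{n,0}$, I would evaluate the residue relation at $x=0$, using $\tilde{l}_n(0)=\tilde{l}_{n,0}$ and $\tilde{A}'(0)=t$; the combination $\nu_n\vartheta_{n-1}-\vartheta_n\nu_{n-1}$ in \eqref{eq:der-ln0} should emerge precisely from the $(1,2)$–$(2,1)$ entries of the commutator $[\tilde{\mathcal{A}}_{n,1},\tilde{\mathcal{A}}_{n,3}]$, which couple $\tilde{\Theta}_n$ evaluated at the two poles. For \eqref{eq:der-ln1}, the factor $(t-1)^{-1}$ and the explicit term $(\nu_n-1)/(2(t-1))$ signal that the relevant information sits at the pole $x=1$ (where $\tilde{A}'(1)=1-t$) combined with the diagonal part of $\tilde{\mathcal{B}}_{n,\infty}$.

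The cleanest route to \eqref{eq:der-vartheta} is to match the double-pole coefficient at $x=t$: equating the $(x-t)^{-2}$ terms forces $\partial_t\tilde{\mathcal{A}}_{n,3}$ (arising from the chain-rule shift of the pole) against the leading parts of the two cross-commutators, giving the Riccati-like evolution of $\tilde{\mathcal{A}}_{n,3}$, hence of $\vartheta_n=-\tilde{\Theta}_{n,0}/\tilde{\gamma}_{n+1}$ restricted through the $(1,2)$ entry. The quadratic $-\vartheta_n^2$ term should come from the self-coupling of $\tilde{\mathcal{A}}_{n,3}$ through the commutators, while the auxiliary quantity $\xi_n=\vartheta_n\tilde{l}_{n,1}-\nu_n\tilde{l}_{n,0}$ defined in \eqref{eq:eg-xin} is exactly the off-diagonal combination of $\tilde{l}_n$ and $\tilde{\Theta}_n$ that appears when one expands the $(1,2)$ entry of $[\tilde{\mathcal{A}}_{n,1}+\tilde{\mathcal{A}}_{n,2},\tilde{\mathcal{A}}_{n,3}]$. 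I expect the main obstacle to be bookkeeping: correctly accounting for the moving pole $x_3=t$ in $\partial_t\tilde{\mathcal{A}}_n$ (the double-pole term is easy to drop by mistake) and carefully separating the partial-fraction expansion of the cross-commutator terms so that the simple-pole residues at $x=t$ are not confused with those at $x=0,1$. Once the residue equations are correctly assembled, the three formulas follow by reading off individual matrix entries and substituting the explicit low-order coefficients from Lemma \ref{lemma:data-systems}; no genuinely new structural input is needed beyond \eqref{eq:eg-schle} and the parameterisations already established.
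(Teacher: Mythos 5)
Your plan for \eqref{eq:der-ln0} and \eqref{eq:der-ln1} is essentially the paper's proof: one matches the simple-pole residues of the Schlesinger-type equation \eqref{eq:eg-schle} at $x=0$ and $x=1$ and reads off the $(1,1)$ entries. Since $\tilde{\mathcal{B}}_{n,\infty}$ is diagonal, the diagonal part of $[\tilde{\mathcal{B}}_{n,\infty},\tilde{\mathcal{A}}_{n,j}]$ vanishes, so the $(1,1)$ residue equation at $x=0$ equates $\partial_t\bigl(\tilde{l}_{n,0}/t\bigr)$ with $-\tfrac{1}{t}$ times the $(1,1)$ entry of $[\tilde{\mathcal{A}}_{n,1},\tilde{\mathcal{A}}_{n,3}]$, and your identification of the combination $\nu_n\vartheta_{n-1}-\nu_{n-1}\vartheta_n$ with the off-diagonal cross terms of that commutator is correct; similarly at $x=1$, using $\tilde{A}'(1)=1-t$ and $\tilde{l}_n(1)=\tfrac{1}{2}(\nu_n-1)+\tilde{l}_{n,1}+\tilde{l}_{n,0}$.

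The genuine gap is your route to \eqref{eq:der-vartheta}. Matching the $(x-t)^{-2}$ coefficients is vacuous: differentiating $\tilde{\mathcal{A}}_{n,3}/(x-t)$ in $t$ produces the double-pole term $\tilde{\mathcal{A}}_{n,3}/(x-t)^2$ \emph{without} any $\partial_t$ acting on the residue matrix, and this cancels identically against $\partial_x\tilde{\mathcal{B}}_n=\tilde{\mathcal{A}}_{n,3}/(x-t)^2$ coming from \eqref{eq:frac-Bn-hat} on the right of \eqref{eq:eg-compat} --- that cancellation is precisely why \eqref{eq:eg-schle} contains only simple poles, the cross-commutator terms $[\tilde{\mathcal{A}}_{n,1},\tilde{\mathcal{A}}_{n,3}]/((x-x_1)(x-x_3))$ and $[\tilde{\mathcal{A}}_{n,2},\tilde{\mathcal{A}}_{n,3}]/((x-x_2)(x-x_3))$ having no $(x-t)^{-2}$ part since $x_1,x_2\neq x_3$. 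So equating double poles yields only $\tilde{\mathcal{A}}_{n,3}=\tilde{\mathcal{A}}_{n,3}$, and no differential equation; likewise the claimed ``self-coupling of $\tilde{\mathcal{A}}_{n,3}$'' cannot produce the $-\vartheta_n^2$ term, because $[\tilde{\mathcal{A}}_{n,3},\tilde{\mathcal{A}}_{n,3}]=0$. There is a second obstruction your plan does not address: by \eqref{eq:thetan0}, $\vartheta_n$ enters the residue matrices only through the product $\tilde{\Theta}_{n,0}=-\tilde{\gamma}_{n+1}\vartheta_n$, so any single $(1,2)$ residue equation delivers $\partial_t(\tilde{\gamma}_{n+1}\vartheta_n)$, and $\partial_t\ln\tilde{\gamma}_{n+1}$ must be eliminated before $\partial_t\vartheta_n$ can be isolated. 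The paper does this by combining the $(1,2)$ components of \eqref{eq:eg-schle} at the \emph{two} residues $x=0$ and $x=1$ and invoking the data \eqref{eq:hat-thetan} for $\hat{\Theta}_n$, which supplies the sum $\sum_j\partial_t\ln\tilde{\gamma}_j$; it is this elimination (not a double-pole computation) that generates the quadratic $-\vartheta_n^2$ and the combination $\xi_n=\vartheta_n\tilde{l}_{n,1}-\nu_n\tilde{l}_{n,0}$. The rest of your bookkeeping is sound, so the proof is repaired by replacing the double-pole step with this two-residue combination at $x=0,1$ together with \eqref{eq:hat-thetan}.
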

\begin{proof}
Equations (\ref{eq:der-ln0})-(\ref{eq:der-vartheta}) are obtained as follows:
(\ref{eq:der-ln0}) follows from the equation  in the  $(1,1)$ entry of  (\ref{eq:eg-schle}) at the residue $x = 0$;
 (\ref{eq:der-ln1}) follows from the equation  in the $(1,1)$ entry of (\ref{eq:eg-schle}) at the residue $x = 1$;
 (\ref{eq:der-vartheta}) follows from the equation in the $(1,2)$ entry of (\ref{eq:eg-schle}) at the residue $x = 0$
combined with the equation in the  $(1,2)$ entry at the residue $x = 1$ and where the data (\ref{eq:hat-thetan}) was used.
\end{proof}

\subsubsection{Bilinear Relations: the Determinant of the Matrix $\tilde{\mathcal{A}}_n$}
Let us analyse the equations arising from the determinant of the matrix $\tilde{\mathcal{A}}_n$ (see (\ref{eq:aux-det}) and (\ref{eq:det-breveAn})),
which in the present situation are written as
\begin{equation}
	-\tilde{l}_n^2+\tilde{\Theta}_{n}\frac{\tilde{\Theta}_{n-1}}{\tilde{\gamma}_{n}}=\varphi+\tilde{A}\sum_{k=1}^{n}\frac{\tilde{\Theta}_{k-1}}{\tilde{\gamma}_{k}}\,,
\label{eq:eg-det}
\end{equation}
with
\begin{equation}
	\varphi=\frac{\tilde{D}}{\tilde{w}_0}\left(\tilde{A}+\tilde{w}_0\tilde{B}\right)-\frac{\tilde{C}^2}{4}\,.
\label{eq:eg-det-xi}
\end{equation}
Evaluating (\ref{eq:eg-det}) at the zeroes of $\tilde{A}$ we have, respectively,
\begin{align}
	\frac{\tilde{\Theta}_{n-1}(0)}{\tilde{\gamma}_n}&=\frac{\tilde{l}^2_{n}(0)+\varphi(0)}{\tilde{\Theta}_{n}(0)}\,,\label{eq:eg-det-x=0}\\
	\frac{\tilde{\Theta}_{n-1}(1)}{\tilde{\gamma}_n}&=\frac{\tilde{l}^2_{n}(1)+\varphi(1)}{\tilde{\Theta}_{n}(1)}\,,\label{eq:eg-det-x=1}\\
	\frac{\tilde{\Theta}_{n-1}(t)}{\tilde{\gamma}_n}&=\frac{\tilde{l}^2_{n}(t)+\varphi(t)}{\tilde{\Theta}_{n}(t)}\,. \label{eq:eg-det-x=t}
\end{align}
Also, for subsequent purposes, by utilising (\ref{eq:thetan1}) and (\ref{eq:thetan0}), we shall write (\ref{eq:eg-det-x=0}) and (\ref{eq:eg-det-x=1}) respectively, as
\begin{align}
	\tilde{\gamma}_{n+1}\vartheta_{n-1} &= \frac{\tilde{l}^2_{n,0}+\varphi(0)}{\vartheta_n}\,,
\label{eq:auxaux-det-x=0}\\
	\tilde{\gamma}_{n+1}\nu_{n-1} &=
	\frac{\tilde{l}^2_{n}(1)+\varphi(1)}{\nu_n+\vartheta_n}-\tilde{\gamma}_{n+1}\vartheta_{n-1}\,.
\label{eq:auxaux-det-x=1}
\end{align}

\begin{lemma}
Under the previous notations, the following equation holds, for all $n \geq 1$:
\begin{multline}
	\tilde{l}_{n,0}=-\frac{\varphi(0)}{t(\nu_n-1)}+\frac{\vartheta_n\varphi(1)}{(t-1)(\nu_n-1)(\nu_n+\vartheta_n)}-\frac{\vartheta_n\varphi(t)}{t(t-1)(\nu_n-1)(\nu_nt+\vartheta_n)}
\\
	-\frac{\left[(\nu_n-1)\vartheta_n[t(t+1)\nu_n+(t^2+t+1)\vartheta_n]/4+[t\nu_n+(t+1)\vartheta_n]\xi_n+\xi_n^2/(\nu_n-1)\right]}{(\nu_n+\vartheta_n)(\nu_n t+\vartheta_n)}\, ,
\label{eq:eg-ident-ln0}
\end{multline}
with $\xi_n$ given by (\ref{eq:eg-xin}).
\end{lemma}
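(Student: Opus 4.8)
The plan is to extract the formula purely algebraically from the three evaluations of the determinant relation (\ref{eq:eg-det}) at the zeroes $x=0,1,t$ of $\tilde{A}$, namely (\ref{eq:eg-det-x=0})--(\ref{eq:eg-det-x=t}), together with the explicit parameterisation $\tilde{\Theta}_n(x)/\tilde{\gamma}_{n+1}=-\nu_n x-\vartheta_n$ and $\tilde{l}_{n,2}=(\nu_n-1)/2$ coming from (\ref{eq:ln2}), (\ref{eq:thetan1}), (\ref{eq:thetan0}). Substituting $\tilde{\Theta}_n(0)=-\vartheta_n\tilde{\gamma}_{n+1}$, $\tilde{\Theta}_n(1)=-(\nu_n+\vartheta_n)\tilde{\gamma}_{n+1}$, $\tilde{\Theta}_n(t)=-(\nu_n t+\vartheta_n)\tilde{\gamma}_{n+1}$ and the analogous expressions carrying index $n-1$ (with $\nu_{n-1}=\nu_n-2$), the three evaluations become bilinear relations in the two auxiliary unknowns $\tilde{\gamma}_{n+1}$ and $\vartheta_{n-1}$; two of these are already recorded as (\ref{eq:auxaux-det-x=0}) and (\ref{eq:auxaux-det-x=1}), while the $x=t$ evaluation supplies the third,
\begin{equation*}
	\nu_{n-1}t\,\tilde{\gamma}_{n+1}=\frac{\tilde{l}_n(t)^2+\varphi(t)}{\nu_n t+\vartheta_n}-\tilde{\gamma}_{n+1}\vartheta_{n-1}\,.
\end{equation*}

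The elimination then proceeds in two steps. First, (\ref{eq:auxaux-det-x=0}) fixes the product $P:=\tilde{\gamma}_{n+1}\vartheta_{n-1}=(\tilde{l}_{n,0}^2+\varphi(0))/\vartheta_n$. Substituting this into (\ref{eq:auxaux-det-x=1}) and into the $x=t$ relation above turns them into two expressions for $\nu_{n-1}\tilde{\gamma}_{n+1}$ and $\nu_{n-1}t\,\tilde{\gamma}_{n+1}$; forming $t$ times (\ref{eq:auxaux-det-x=1}) minus the $x=t$ relation eliminates $\tilde{\gamma}_{n+1}$ as well, leaving the single scalar identity
\begin{equation*}
	(t-1)\frac{\tilde{l}_{n,0}^2+\varphi(0)}{\vartheta_n}=t\,\frac{\tilde{l}_n(1)^2+\varphi(1)}{\nu_n+\vartheta_n}-\frac{\tilde{l}_n(t)^2+\varphi(t)}{\nu_n t+\vartheta_n}\,.
\end{equation*}

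To reduce this to a formula for $\tilde{l}_{n,0}$, I would use the definition (\ref{eq:eg-xin}) of $\xi_n$ to substitute $\tilde{l}_{n,1}=(\xi_n+\nu_n\tilde{l}_{n,0})/\vartheta_n$, so that $\tilde{l}_n(1)=\tfrac12(\nu_n-1)+\xi_n/\vartheta_n+\tilde{l}_{n,0}(\nu_n+\vartheta_n)/\vartheta_n$ and $\tilde{l}_n(t)=\tfrac12(\nu_n-1)t^2+\xi_n t/\vartheta_n+\tilde{l}_{n,0}(\nu_n t+\vartheta_n)/\vartheta_n$. The crucial point is that with this substitution the coefficient of $\tilde{l}_{n,0}^2$ on the right-hand side equals $(t-1)/\vartheta_n$, exactly matching the coefficient on the left, so the quadratic terms cancel identically and the identity becomes \emph{linear} in $\tilde{l}_{n,0}$. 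Solving the resulting linear equation and clearing the common denominator $(\nu_n+\vartheta_n)(\nu_n t+\vartheta_n)$ produces (\ref{eq:eg-ident-ln0}).

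I expect the main obstacle to be purely computational: once the $\tilde{l}_{n,0}^2$ terms have cancelled one must carefully collect the coefficient of $\tilde{l}_{n,0}$ and the $\xi_n$-dependent constant term over the common denominator, and then verify that the constant part reorganises into the three separated $\varphi$-terms together with the bracketed expression carrying the $\xi_n^2/(\nu_n-1)$ contribution displayed on the right of (\ref{eq:eg-ident-ln0}). No further structural input is required---the Toda equations of Theorem \ref{teo:todaeqs} and the Schlesinger-type relations of Lemma \ref{lemma:derivatives} play no role here---so the entire difficulty lies in the bookkeeping of this elimination and the final simplification.
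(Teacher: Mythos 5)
Your proposal is correct and is essentially the paper's own proof: your linear combination $t\cdot(x{=}1)-(x{=}t)-(t-1)\cdot(x{=}0)$ eliminating $\tilde{\gamma}_{n+1}$ and $\vartheta_{n-1}$ is precisely the paper's partial-fraction identity $\frac{\tilde{\Theta}_{n-1}(0)}{t}+\frac{\tilde{\Theta}_{n-1}(1)}{1-t}+\frac{\tilde{\Theta}_{n-1}(t)}{t(t-1)}=0$ for the degree-one polynomial $\tilde{\Theta}_{n-1}$, combined with the determinant evaluations (\ref{eq:eg-det-x=0})--(\ref{eq:eg-det-x=t}), so both arguments reduce to the identical scalar identity. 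The remaining steps you describe --- substituting $\tilde{l}_{n,1}=(\xi_n+\nu_n\tilde{l}_{n,0})/\vartheta_n$, noting the cancellation of the $\tilde{l}_{n,0}^2$ terms (your coefficient $(t-1)/\vartheta_n$ on both sides checks out, and the surviving linear coefficient $(\nu_n-1)t(1-t)/\vartheta_n$ together with the $\xi_n$-dependent constants does reorganise into the displayed right-hand side) and solving the linear equation --- match the paper's passage from (\ref{eq:ln-aux3}) to (\ref{eq:aux-aux-ln0}) and the final solve, merely with the $\nu_n,\vartheta_n$ parameterisation inserted from the outset rather than carried as $\tilde{\Theta}_n$-evaluations.
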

\begin{proof} As the
polynomial $\tilde{\Theta}_n$ has degree one, it satisfies the following identity, for all $n \geq 1$:
\begin{equation*}
\frac{\tilde{\Theta}_{n-1}(0)}{t}+\frac{\tilde{\Theta}_{n-1}(1)}{1-t}+\frac{\tilde{\Theta}_{n-1}(t)}{t(t-1)}=0\,. \label{eq:ln0-aux1}
\end{equation*}
Thus, from (\ref{eq:eg-det-x=0})-(\ref{eq:eg-det-x=t}) we have
\begin{equation*}
	\frac{\tilde{l}^2_{n}(0)+\varphi(0)}{t\tilde{\Theta}_{n}(0)}-\frac{\tilde{l}^2_{n}(1)+\varphi(0)}{(t-1)\tilde{\Theta}_{n}(1)}
	+\frac{\tilde{l}^2_{n}(t)+\varphi(t)}{t(t-1)\tilde{\Theta}_{n}(t)}=0\,.
\label{eq:ln0-aux2}
\end{equation*}
The previous equation can be written as follows:
\begin{multline}
	\tilde{l}_{n,2}^2\left[\frac{t^3\tilde{\Theta}_n(1)-\tilde{\Theta}_n(t)}{(t-1)\tilde{\Theta}_n(1)\tilde{\Theta}_n(t)}\right]+
	\tilde{l}_{n,1}^2\left[\frac{t\tilde{\Theta}_n(1)-\tilde{\Theta}_n(t)}{(t-1)\tilde{\Theta}_n(1)\tilde{\Theta}_n(t)}\right]
\\
	+\tilde{l}_{n,0}^2\left[\frac{(t-1)\tilde{\Theta}_n(1)\tilde{\Theta}_n(t)-t\tilde{\Theta}_n(0)\tilde{\Theta}_n(t)
	+\tilde{\Theta}_n(0)\tilde{\Theta}_n(1)}{t(t-1)\tilde{\Theta}_n(0)\tilde{\Theta}_n(1)\tilde{\Theta}_n(t)}\right]
\\
	+2\tilde{l}_{n,2}\tilde{l}_{n,1}\left[\frac{t^2\tilde{\Theta}_n(1)-\tilde{\Theta}_n(t)}{(t-1)\tilde{\Theta}_n(1)\tilde{\Theta}_n(t)}\right]
	+2\tilde{l}_{n,2}\tilde{l}_{n,0}\left[\frac{t\tilde{\Theta}_n(1)-\tilde{\Theta}_n(t)}{(t-1)\tilde{\Theta}_n(1)\tilde{\Theta}_n(t)}\right]
	+2\tilde{l}_{n,1}\tilde{l}_{n,0}\left[\frac{\tilde{\Theta}_n(1)-\tilde{\Theta}_n(t)}{(t-1)\tilde{\Theta}_n(1)\tilde{\Theta}_n(t)}\right]
\\
	+\frac{\varphi(0)}{t\tilde{\Theta}_n(0)}-\frac{\varphi(1)}{(t-1)\tilde{\Theta}_n(1)}+\frac{\varphi(t)}{t(t-1)\tilde{\Theta}_n(t)}=0\,.
\label{eq:ln-aux3}
\end{multline}
Using the identities
\begin{align*}
	\frac{\tilde{\Theta}_{n}(1)-\tilde{\Theta}_{n}(t)}{t-1}&=-\tilde{\Theta}_{n,1}\,,\\
	\frac{t\tilde{\Theta}_{n}(1)-\tilde{\Theta}_{n}(t)}{t-1}&=\tilde{\Theta}_{n}(0)\,,\\
	\frac{t^2\tilde{\Theta}_{n}(1)-\tilde{\Theta}_{n}(t)}{t-1}&=t\tilde{\Theta}_{n}(0)+\tilde{\Theta}_{n}(t)\,,\\
	\frac{t^3\tilde{\Theta}_{n}(1)-\tilde{\Theta}_{n}(t)}{t-1}&=t^2\tilde{\Theta}_{n}(0)+(t+1)\tilde{\Theta}_{n}(t)\,,\\
	\frac{(t-1)\tilde{\Theta}_{n}(1)\tilde{\Theta}_{n}(t)-t\tilde{\Theta}_{n}(0)\tilde{\Theta}_{n}(t)+\tilde{\Theta}_{n}(0)\tilde{\Theta}_{n}(1)}{t(t-1)}&=\tilde{\Theta}_{n,1}^2\,,
\end{align*}
then (\ref{eq:ln-aux3}) can be simplified as
\begin{multline}
	\frac{2\tilde{l}_{n,2}\tilde{l}_{n,0}}{\tilde{\Theta}_{n}(0)}
	+\tilde{l}_{n,2}^2\tilde{\Theta}_n(0)\frac{[t^2\tilde{\Theta}_{n}(0)+(t+1)\tilde{\Theta}_{n}(t)]}{\tilde{\Theta}_{n}(0)\tilde{\Theta}_{n}(1)\tilde{\Theta}_{n}(t)}
	+2\tilde{l}_{n,2}\xi_n\frac{[t\tilde{\Theta}_{n}(0)+\tilde{\Theta}_{n}(t)]}{\tilde{\Theta}_{n}(0)\tilde{\Theta}_{n}(1)\tilde{\Theta}_{n}(t)}
\\
	+\frac{\xi_n^2}{\tilde{\Theta}_{n}(0)\tilde{\Theta}_{n}(1)\tilde{\Theta}_{n}(t)}
	+\frac{\varphi(0)}{t\tilde{\Theta}_n(0)}
	-\frac{\varphi(1)}{(t-1)\tilde{\Theta}_n(1)}
	+\frac{\varphi(t)}{t(t-1)\tilde{\Theta}_n(t)}=0 \,,
\label{eq:aux-aux-ln0}
\end{multline}
with $\xi_n$ defined in (\ref{eq:eg-xin}).
Solving (\ref{eq:aux-aux-ln0}) for $\tilde{l}_{n,0}$ yields (\ref{eq:eg-ident-ln0}).
\end{proof}

\subsection{Derivation of P$_{\textrm{VI}}$ }\label{sec:3.3}
Now, once the derivatives of $\vartheta_n$ and $\xi_n$ have been computed, our remaining task is to deduce the Painlev\'e equation
and to obtain the transformation that identifies the Hamiltonian equations for the P$_{\textrm{VI}}$ system.
We now give the proof of Theorem \ref{teo:P6}. Recall that dash means derivative with respect to $t$.

\begin{proof}
Firstly, let us note that, in terms of $q$ defined in (\ref{eq:hamilt-q}),
 \begin{equation*}
	q=-\vartheta_n/\nu_n\,,
\end{equation*}
we have the following identities:
equation (\ref{eq:der-ln0}) is given, taking into account (\ref{eq:auxaux-det-x=0})-(\ref{eq:auxaux-det-x=1}), by
\begin{equation}
	\tilde{l}'_{n,0} = \frac{1}{t}\tilde{l}_{n,0}
	-\frac{1}{t(t-1)}\left[ (\tilde{l}^2_{n,0}+\varphi(0))\displaystyle\left(\frac{q-1}{q}\right)+(\tilde{l}^2_{n}(1)+\varphi(1))\displaystyle\left(\frac{q}{1-q}\right) \right] \,,
\label{eq:der-ln0-q}
\end{equation}
equation (\ref{eq:der-vartheta}) by
\begin{equation}
	t(t-1)q'=-q+\nu_nq^2-2\xi_n/\nu_n\,,
\label{eq:der-vartheta-q}
\end{equation}
and equation (\ref{eq:eg-ident-ln0}) by
\begin{multline}
	\tilde{l}_{n,0} = -\frac{\varphi(0)}{(\nu_n-1)t} - \frac{\varphi(1)}{(\nu_n-1)(t-1)}\left(\frac{q}{1-q}\right)+\frac{\varphi(t)}{(\nu_n-1)t(t-1)}\left(\frac{q}{t-q}\right)
\\
	-\frac{\left[-(\nu_n-1)q+[t(t+1)-(t^2+t+1)q]/4+[t-(t+1)q]\left(\xi_n/\nu_n\right)+\left(\xi_n/\nu_n\right)^2/(\nu_n-1)\right]}{(1-q)(t-q)}\,.
\label{eq:eg-ln0-q}
\end{multline}
Taking derivatives of $\xi_n/\nu_n=-q\tilde{l}_{n,1}-\tilde{l}_{n,0}$ (cf. (\ref{eq:eg-xin})) we have 
\begin{equation}
	\left(\frac{\xi_n}{\nu_n}\right)'=-q'\tilde{l}_{n,1}-q\tilde{l}'_{n,1}-\tilde{l}'_{n,0}\,.
\label{eq:dt-xin-over-nu}
\end{equation}
The use of (\ref{eq:der-ln1}) and (\ref{eq:der-ln0-q}) in (\ref{eq:dt-xin-over-nu}) gives us
\begin{multline*}
	\left(\frac{\xi_n}{\nu_n}\right)'
	= -q'\tilde{l}_{n,1}-q\frac{(\nu_n-1)}{2(t-1)}-q\frac{\tilde{l}_{n,1}}{t}-q\frac{\tilde{l}_{n,0}}{t(t-1)}-\frac{\tilde{l}_{n,0}}{t}
\\
	+\frac{\tilde{l}^2_{n,0}+\varphi(0)}{t(t-1)}\displaystyle\left(\frac{q-1}{q}\right) + \frac{\tilde{l}^2_{n}(1)+\varphi(1)}{t(t-1)}\left(\frac{q}{1-q}\right)\,.
\end{multline*}
Using $\tilde{l}_{n,1}=-\displaystyle\frac{\xi_n/\nu_n+\tilde{l}_{n,0}}{q}$ as well as (\ref{eq:der-vartheta-q}) and (\ref{eq:eg-ln0-q})  in the previous equation, we get
\begin{multline}
	\left(\frac{\xi_n}{\nu_n}\right)' =
	\left(\frac{1}{1-q}+\frac{1}{-q}+\frac{1}{t-q}\right)t(t-1)\left(\frac{q'}{2}\right)^2
	+\left(2\nu_nq-1-\frac{t(t-1)}{t-q}\right)\left(\frac{q'}{2}\right)
\\
	+ \frac{\varphi(1)}{t(t-1)^2}\frac{q(t-q)}{1-q}-\frac{\varphi(0)}{t^2(t-1)}\frac{(1-q)(t-q)}{q}
	-\left(\frac{1}{4}+\frac{\varphi(t)}{t^2(t-1)^2}\right)\frac{q(1-q)}{t-q}
\\
	-\frac{q(1-q)(t-q)}{4t(t-1)}\,.
\label{eq:der-xi-q}
\end{multline}
Let us now compute the second derivative of $q$. By differentiating (\ref{eq:der-vartheta-q}) we get
\begin{equation}
	t(t-1)q''=-2tq'+2\nu_nq q'-2\left(\frac{\xi_n}{\nu_n}\right)'\,.
\label{eq:2ndderq}
\end{equation}
Employing (\ref{eq:der-xi-q}) in (\ref{eq:2ndderq}) and taking into account that $\varphi$ as defined by (\ref{eq:eg-det-xi}) has the evaluations
\begin{equation}
	\;\;\frac{\varphi(0)}{t^2}=-\frac{\alpha^2}{4}\,,
	\;\;\frac{\varphi(1)}{(t-1)^2}=-\frac{\beta^2}{4}\,,
	\;\;\frac{\varphi(t)}{t^2(t-1)^2} = -\frac{\mu^2}{4}\,,\,
\label{eq:quant-varphi}
\end{equation}
we finally obtain
\begin{multline}
	q'' = \frac{1}{2}\left(\frac{1}{q}+\frac{1}{q-1}+\frac{1}{q-t}\right)\left({q'}\right)^2
	-\left(\frac{1}{t}+\frac{1}{t-1}+\frac{1}{q-t}\right)q'
\\
 	+\frac{q(q-1)(q-t)}{t^2(t-1)^2}\left[\frac{1}{2}-\frac{\alpha^2 t}{2q^2}+\frac{\beta^2(t-1)}{2(q-1)^2}+\frac{(1-\mu^2)t(t-1)}{2(q-t)^2}\right]\,.
\label{eq:P6-q}
\end{multline}
Thus, $q(t)$ satisfies (\ref{eq:P6}) with parameters
$(\delta_1, \delta_2, \delta_3, \delta_4)= (\frac{1}{2}, -\frac{1}{2}\alpha^2, \frac{1}{2}\beta^2, \frac{1}{2}(1-\mu^2))$.
Correspondingly, the parameters $v_k$ in the Hamiltonian (\ref{eq:H6}) are such that
\begin{equation*}
	 (v_1-v_2)^2=1\,, \;  (v_3+v_4)^2= \alpha^2\,,\; (v_3-v_4)^2=\beta^2\,, \; (1-v_1-v_2)^2=\mu^2\,,
\end{equation*}
which reads as (\ref{eq:parameter-varphi}).
From the first of the Hamiltonian equations 
(\ref{eq:hamilton-eqs}) combined with (\ref{eq:der-vartheta-q}) and taking into account that $\xi_n/\nu_n=-q\tilde{l}_{n,1}-\tilde{l}_{n,0}$,
we deduce the equation for $p(t)$ as given by (\ref{eq:2nd-hamilt-p}).
  \end{proof}

It remains to specify the initial conditions for the differential equation to uniquely characterize $q(t)$. This will be done in the next subsection.
\subsubsection{Boundary Values for the Laguerre-Hahn P$_{\textrm{VI}}$ Solution}

We first begin by establishing the initial conditions of the differential equation satisfied by $q$. 
\begin{lemma}
The function $q(t)$ defined by (\ref{eq:hamilt-q}), solution of the differential equation (\ref{eq:P6}) with parameters
$(\delta_1, \delta_2, \delta_3, \delta_4)= (\frac{1}{2}, -\frac{1}{2}\alpha^2, \frac{1}{2}\beta^2, \frac{1}{2}(1-\mu^2))$,
is unique via the specification of 
$\beta_0$ and $\tilde{\beta}_0$ in the equations
\begin{align}
	q(0)&=-\left((2+\alpha+\beta+\mu)\tilde{\beta}_0(0)+2\beta_0(0)-3-\alpha-\mu\right)/\nu_n\,,
\label{eq:ic-0}\\
	q(1)&=-\left((2+\alpha+\beta+\mu)\tilde{\beta}_0(1)+2\beta_0(1)-\nu_n-3-\alpha\right)/\nu_n\,,
\label{eq:ic-1}
\end{align}
as well as in
\begin{align}
	q'(0)&=-\left(-g_n(0)+(2+\alpha+\beta+\mu)\tilde{\beta}'_0(0)+2{\beta}'_0(0)-\nu_n+\mu\right)/\nu_n\,,
\label{eq:ic-vartheta0}\\
	q'(1)&=-\left(g_n(1)+(2+\alpha+\beta+\mu)\tilde{\beta}'_0(1)+2{\beta}'_0(1)-\nu_n+\mu\right)/\nu_n\,,
\label{eq:ic-vartheta1}
\end{align}
where $g_n$ is defined in terms of the moments in the asymptotic expansion (\ref{eq:expan-S1}) by
\begin{equation}
	g_n(t)=\left(\ln \left(\tilde{\Delta}_{n+2}/\tilde{\Delta}_{n+1}\right)-\ln \tilde{w}_{0}\right)'\,,
\label{eq:gn}
\end{equation}
where $\tilde{\Delta}_n:=\det[\tilde{w}_{j+k-2}]_{j, k=1,\dots, n}\,,$ with $\tilde{w}_n$ defined in terms of the moments of the weight (\ref{eq:jacobi-w}) via relation (\ref{eq:moments}).
\end{lemma}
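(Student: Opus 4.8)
The plan is to locate all four boundary quantities at the two fixed singular points $t=0$ and $t=1$ of P$_{\textrm{VI}}$, where the modified Jacobi weight \eqref{eq:jacobi-w} collapses to a classical Jacobi weight ($x^{\alpha+\mu}(1-x)^{\beta}$ at $t=0$, and a constant multiple of $x^{\alpha}(1-x)^{\beta+\mu}$ at $t=1$). Since $q=-\vartheta_n/\nu_n$ by \eqref{eq:hamilt-q} with $\nu_n$ constant in $t$, everything reduces to evaluating $\vartheta_n$ and $\vartheta_n'$ at $t=0,1$, and the whole difficulty is to remove the running sum $\sum_{j=0}^{n}\tilde{\beta}_j$ appearing in the definition \eqref{eq:vartheta}.

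The central step is a telescoping identity coming from the Toda equation \eqref{eq:toda-gamma}. Summing $t(t-1)\partial_t\ln\tilde{\gamma}_k$ over $k=1,\dots,n+1$, the $\tilde{\beta}$-contributions collapse because the coefficients $\nu_{k-1}+1$ and $\nu_{k-1}-3$ combine so that every interior $\tilde{\beta}_j$ acquires weight $2$, leaving only the endpoint terms:
\[
t(t-1)g_n=(\nu_n+1)\tilde{\beta}_{n+1}+2\sum_{j=1}^{n}\tilde{\beta}_j-(2+\alpha+\beta+\mu)\tilde{\beta}_0-2(n+1).
\]
Here I first identify $g_n$ from \eqref{eq:gn} with $\sum_{k=1}^{n+1}\partial_t\ln\tilde{\gamma}_k$, using $\tilde{\Delta}_{n+2}/\tilde{\Delta}_{n+1}=\tilde{h}_{n+1}$, $\tilde{\gamma}_k=\tilde{h}_k/\tilde{h}_{k-1}$ and $\tilde{h}_0=\tilde{w}_0$ (the analogue of \eqref{eq:gamma-hn}). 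Feeding this back into \eqref{eq:vartheta} eliminates the sum and yields a closed relation of the shape $\vartheta_n=t(t-1)g_n+\mathcal{R}(t)$, where $\mathcal{R}(t)$ is explicit in $t$ and affine in the two lowest recurrence coefficients $\beta_0(t)$ and $\tilde{\beta}_0(t)$ only. This is exactly the form exhibiting $\beta_0$ and $\tilde{\beta}_0$ as the data that pin the solution down.

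The four statements then follow by specialisation. Because the $\tilde{\gamma}_k$ remain smooth and nonvanishing as the weight degenerates, $g_n$ is regular at $t=0,1$, so $t(t-1)g_n$ drops out there and $q(0),q(1)$ are read off from $\mathcal{R}(0),\mathcal{R}(1)$, giving \eqref{eq:ic-0}--\eqref{eq:ic-1}. For the derivatives I differentiate the same identity, $\vartheta_n'=(2t-1)g_n+t(t-1)g_n'+\mathcal{R}'(t)$; at the two endpoints $t(t-1)g_n'$ again vanishes while $2t-1$ equals $-1$ at $t=0$ and $+1$ at $t=1$, which is precisely what produces the $-g_n(0)$ and $+g_n(1)$ contributions in \eqref{eq:ic-vartheta0}--\eqref{eq:ic-vartheta1}, the remaining terms coming from $\mathcal{R}'$ evaluated through $\beta_0',\tilde{\beta}_0'$. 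As an independent cross-check one may instead insert the explicit classical-Jacobi recurrence coefficients (using $\tilde{\beta}_j=\beta_{j+1}$, so that $\sum_{j=0}^{n}\tilde{\beta}_j$ telescopes via the partial-fraction form of the Jacobi $\beta_j$) and recover the same values.

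The main obstacle is the bookkeeping of the telescoping sum, namely verifying that the non-autonomous coefficients in \eqref{eq:toda-gamma} leave a constant interior weight and only the boundary terms $\tilde{\beta}_0,\tilde{\beta}_{n+1}$, together with the limiting analysis at $t=0,1$: one must justify the regularity of $g_n$ (equivalently of each $\tilde{\gamma}_k$, via nonvanishing of the degenerate Hankel determinants) so that both $t(t-1)g_n$ and $t(t-1)g_n'$ vanish there, and check that the companion Toda equation \eqref{eq:toda-beta} for $\tilde{\beta}$ is compatible with, rather than over-determining, these limits. Securing the identification $g_n=\sum_{k=1}^{n+1}\partial_t\ln\tilde{\gamma}_k$ through the Hankel determinants $\tilde{\Delta}_m$ is the other point that must be made precise before the specialisation goes through.
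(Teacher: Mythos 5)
Your proposal is correct and follows essentially the same route as the paper: summing the Toda equation (\ref{eq:toda-gamma}) telescopically over $k=1,\dots,n+1$ (via $\nu_{k-1}-3=\nu_{k-2}-1$), exploiting the vanishing of the factor $t(t-1)$ at $t=0,1$, and identifying $\sum_{k=1}^{n+1}\partial_t\ln\tilde{\gamma}_k$ with $g_n$ through $\tilde{\gamma}_k=\tilde{h}_k/\tilde{h}_{k-1}$, $\tilde{h}_0=\tilde{w}_0$ and the Hankel determinants $\tilde{\Delta}_m$. The only difference is organisational: you fold everything into the single global identity $\vartheta_n=t(t-1)g_n+\mathcal{R}(t)$ and differentiate it once (the $(2t-1)$ factor producing $-g_n(0)$ and $+g_n(1)$), whereas the paper performs the equivalent calculation twice, once evaluating the Toda equation at the endpoints and once differentiating it before evaluating.
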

\begin{proof}
Taking $t=t_{0}$ where  $t_{0}=0$ or $t_{0}=1$ in (\ref{eq:toda-gamma}) we get, respectively,
\begin{equation*}
	-2+(\nu_{n-1}+1)\tilde{\beta}_{n}(t_{0})-(\nu_{n-1}-3)\tilde{\beta}_{n-1}(t_{0})=0\,.
\end{equation*}
Using  $\nu_{n-1}-3=\nu_{n-2}-1, \; n \geq 1,$ and summing from one to $n+1$ we get
\begin{equation*}
	\nu_{n}\tilde{\beta}_{n+1}(t_{0})-\nu_{-1}\tilde{\beta}_{0}(t_{0})+2\sum_{k=1}^{n}\tilde{\beta}_{k}(t_{0})+\tilde{\beta}_{n+1}(t_{0})+\tilde{\beta}_{0}(t_{0})=2n+2\,.
\end{equation*}
Therefore,  $\vartheta_n$ given by (\ref{eq:vartheta}) satisfies (\ref{eq:ic-0})-(\ref{eq:ic-1}).

Taking derivatives in (\ref{eq:toda-gamma}) and using $\nu_{n-1}-3=\nu_{n-2}-1, \; n \geq 1,$ we get
\begin{equation*}
	(2t-1)\partial_{t}\ln \tilde{\gamma}_n+t(t-1)\partial_{t}^2\ln \tilde{\gamma}_n
	=\nu_{n-1}\tilde{\beta}'_n-\nu_{n-2}\tilde{\beta}'_{n-1}+\tilde{\beta}'_{n}+\tilde{\beta}'_{n-1}\,.
\end{equation*}
Summing from one to $n+1$ and using the telescopic sum we get, respectively, 
\begin{align*}
	-\sum_{k=1}^{n+1}\partial_{t}\ln \tilde{\gamma}_k (0) &=
	\nu_n\tilde{\beta}'_{n+1}(0)-\nu_{-1}\tilde{\beta}'_{0}(0)+2\sum_{k=1}^{n}\tilde{\beta}'_k(0)+\tilde{\beta}'_{0}(0)+\tilde{\beta}'_{n+1}(0)\,,
\\
	\sum_{k=1}^{n+1}\partial_{t}\ln \tilde{\gamma}_k (1) &=
	\nu_n\tilde{\beta}'_{n+1}(1)-\nu_{-1}\tilde{\beta}'_{0}(1)+2\sum_{k=1}^{n}\tilde{\beta}'_k(1)+\tilde{\beta}'_{0}(1)+\tilde{\beta}'_{n+1}(1)\,.
\end{align*}
Therefore, $\vartheta_n$ satisfies
\begin{align}
	\vartheta'_n(0)&=-\sum_{k=1}^{n+1}\partial_{t}\ln \tilde{\gamma}_k(0)+(2+\alpha+\beta+\mu)\tilde{\beta}'_0(0)+2{\beta}'_0(0)-\nu_n+\mu\,, \label{eq:vartheta-ic-0}
\\
	\vartheta'_n(1)&=\sum_{k=1}^{n+1}\partial_{t}\ln \tilde{\gamma}_k(1)+(2+\alpha+\beta+\mu)\tilde{\beta}'_0(1)+2{\beta}'_0(1)-\nu_n+\mu\,.\label{eq:vartheta-ic-1}
\end{align}
It remains to compute $\sum_{k=1}^{n+1}\partial_{t}\ln \tilde{\gamma}_k$. Recalling  (\ref{eq:gamma-hn}), we now have
\begin{equation*}
	\ln \tilde{\gamma}_k=\ln \tilde{h}_k-\ln \tilde{h}_{k-1}\,,
\end{equation*}
thus, using the telescopic sum, we get
\begin{equation*}
	\sum_{k=1}^{n+1}\partial_{t}\ln \tilde{\gamma}_k=\partial_{t}\left(\ln \tilde{h}_{n+1}-\ln \tilde{h}_{0}\right)\,.
\end{equation*}
As a consequence,  in terms of the moments in the asymptotic expansion (\ref{eq:expan-S1}) of  $\tilde{f}$, we have
\begin{equation}
	\sum_{k=1}^{n+1}\partial_{t}\ln \tilde{\gamma}_k=\partial_{t}\left(\ln \left(\tilde{\Delta}_{n+2}/\tilde{\Delta}_{n+1}\right)-\ln \tilde{w}_{0}\right)\,,\label{eq:gn}
\end{equation}
where $\tilde{\Delta}_n$ is the corresponding Hankel determinant, given by  $\tilde{\Delta}_n=\det[\tilde{w}_{j+k-2}]_{j, k=1, \dots, n}\,$.
Therefore, in the account of (\ref{eq:vartheta-ic-0})-(\ref{eq:gn}) we obtain (\ref{eq:ic-vartheta0})-(\ref{eq:ic-vartheta1}).
\end{proof}

Furthermore,  $\tilde{w}_0$ and $\tilde{\beta}_0$ may be expressed in terms of the parameters $\beta_0, w_0$, as follows.
\begin{lemma}
The following relations hold:
\begin{gather}
	\tilde{w}_0=-\beta_0^2+\frac{(2+\alpha+\beta+\mu+(2+\alpha+\beta)t)\beta_0-(1+\alpha)t}{3+\alpha+\beta+\mu}\,,
\label{eq:w01-via-beta0}\\
	\frac{(\tilde{\beta}_0-t)(\beta_0-t)\beta'_0}{(\beta_0-t)(\ln w_0)' -  \beta'_0 -\tilde{\beta}'_0}
	=\beta_0^2-\frac{(2+\alpha+\beta+\mu+(2+\alpha+\beta)t)\beta_0-(1+\alpha)t}{3+\alpha+\beta+\mu}\,. \label{eq:tildebeta-beta}
\end{gather}
\end{lemma}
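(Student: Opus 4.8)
The plan is to obtain both identities by matching coefficients in the $x\to\infty$ asymptotic expansions \eqref{eq:expan-S} and \eqref{eq:expan-S1}, using the two differential equations already in hand: the inhomogeneous Pearson equation \eqref{eq:der-f-x} for $f$ for the first identity, and the $t$-Riccati equation \eqref{eq:ric-f1-t} for $\tilde f$ for the second. Throughout I would use $w_1=\beta_0 w_0$ and the standard formula $\tilde\beta_0=\tilde w_1/\tilde w_0$ for the leading recurrence coefficient (cf.\ \eqref{eq:3trr}).

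For \eqref{eq:w01-via-beta0} I would first specialise the moment relation \eqref{eq:moments} to $n=0$, giving $\tilde w_0=(w_2-\beta_0 w_1)/w_0$, which on substituting $w_1=\beta_0 w_0$ becomes $\tilde w_0=w_2/w_0-\beta_0^2$. It then remains to express the second moment $w_2$ in terms of $w_0$ and $\beta_0$. This I would extract from \eqref{eq:der-f-x} by inserting the expansion \eqref{eq:expan-S} of $f$ and reading off the coefficient of $x^{-1}$; since $D$ has degree one it does not contribute at this order, and with $C$ as in \eqref{eq:A-C-der-w-x} one is left with a single linear relation of the form $(3+\alpha+\beta+\mu)\,w_2=\{[\,\cdots\,]\beta_0-(1+\alpha)t\}\,w_0$. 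Solving for $w_2/w_0$ and subtracting $\beta_0^2$ produces exactly \eqref{eq:w01-via-beta0}.

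For \eqref{eq:tildebeta-beta} I would begin by observing that, by the identity \eqref{eq:w01-via-beta0} just established, the right-hand side equals $-\tilde w_0$; thus it suffices to prove $(\tilde\beta_0-t)(\beta_0-t)\beta_0'=\tilde w_0\big[\tilde\beta_0'+\beta_0'-(\beta_0-t)(\ln w_0)'\big]$. To this end I would substitute \eqref{eq:expan-S1} into \eqref{eq:ric-f1-t} and match the two leading orders, noting that $\hat B\tilde f^2$ begins at order $x^{-2}$ and so is inert here. The coefficient of $x^0$ gives $\tilde w_0'=-(\ln w_0)'\tilde w_0-(\beta_0-t)\beta_0'$, and the coefficient of $x^{-1}$ gives $\tilde w_1'-t\tilde w_0'=-(\ln w_0)'\tilde w_1+[(2\beta_0-t)(\ln w_0)'+\mu]\tilde w_0$. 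Writing $\tilde w_1=\tilde\beta_0\tilde w_0$, differentiating, and eliminating every $\tilde w_0'$ by the order-$x^0$ identity reduces the order-$x^{-1}$ relation (after cancelling the common term $-(\ln w_0)'\tilde\beta_0\tilde w_0$) to $\tilde\beta_0'\tilde w_0+(t-\tilde\beta_0)(\beta_0-t)\beta_0'=2(\beta_0-t)(\ln w_0)'\tilde w_0+\mu\tilde w_0$. Finally I would invoke \eqref{eq:der-w0} in the form $\mu=-(\beta_0-t)(\ln w_0)'-\beta_0'$, so that the $(\ln w_0)'\tilde w_0$ terms collapse; rearranging then gives the displayed factorised identity, and dividing by $(\beta_0-t)(\ln w_0)'-\beta_0'-\tilde\beta_0'$ yields $-\tilde w_0$, i.e.\ \eqref{eq:tildebeta-beta}.

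I expect the main obstacle to be organisational rather than conceptual: in the second identity the order-$x^0$ relation must be fed into the order-$x^{-1}$ relation in precisely the right places, and \eqref{eq:der-w0} must be used at exactly the right moment, since an ill-timed substitution leaves a residual $\mu$ or a stray $(\ln w_0)'$ that hides the factorisation. I would also record the genericity assumptions $\tilde w_0\neq0$ (needed both for $\tilde\beta_0=\tilde w_1/\tilde w_0$ and for the final division) and $(\beta_0-t)(\ln w_0)'-\beta_0'-\tilde\beta_0'\neq0$, under which the quotient in \eqref{eq:tildebeta-beta} is well defined.
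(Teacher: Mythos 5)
Your proposal is correct, and it reaches both identities by a route that is genuinely different from the paper's, though in the same spirit of coefficient extraction. For (\ref{eq:w01-via-beta0}) the paper never touches $w_2$: it equates two expressions for the coefficient of $x$ in $\tilde{D}$, namely $d_{1,1}=-(3+\alpha+\beta+\mu)\tilde{w}_0$, obtained by feeding the expansion (\ref{eq:expan-S1}) into the Riccati equation (\ref{eq:ric-f1-x}), against the explicit value read off from (\ref{eq:eg-pol-D1}); you instead work one level down, with $f$ itself, extracting $w_2$ from the order-$x^{-1}$ coefficient of (\ref{eq:der-f-x}) and then invoking the $n=0$ case of (\ref{eq:moments}) --- equivalent content, packaged without any reference to $\tilde{D}$. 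For (\ref{eq:tildebeta-beta}) the divergence is more substantial: the paper obtains the intermediate formula (\ref{eq:w-0-1}) from the coefficient of $x^n$ in the $(1,2)$ component of the $t$-Sylvester system (\ref{eq:eg-sylvester-Yn-t}) together with (\ref{eq:another-hat-thetan}), i.e.\ it routes through the polynomial/matrix machinery, whereas you derive the same relation directly from the two leading orders of the $t$-Riccati equation (\ref{eq:ric-f1-t}): your order-$x^0$ and order-$x^{-1}$ relations, $\tilde{w}_1=\tilde{\beta}_0\tilde{w}_0$, and the substitution $\mu=-(\beta_0-t)(\ln w_0)'-\beta_0'$ from (\ref{eq:der-w0}) reproduce exactly $-\tilde{w}_0=(\tilde{\beta}_0-t)(\beta_0-t)\beta_0'\big/\left[(\beta_0-t)(\ln w_0)'-\beta_0'-\tilde{\beta}_0'\right]$, after which both arguments conclude identically by combining with (\ref{eq:w01-via-beta0}). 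Your route is more elementary and self-contained (Stieltjes expansions and moments only, no $\hat{\Theta}_n$ data), at the cost of not exhibiting the connection to the Sylvester system that the paper exploits elsewhere; the paper's route recycles objects it has already computed. One point of care: your computation, like the paper's own proof text and (\ref{eq:eg-pol-D1}), yields the coefficient $2+\alpha+\mu+(2+\alpha+\beta)t$ multiplying $\beta_0$, whereas the lemma display prints $2+\alpha+\beta+\mu+(2+\alpha+\beta)t$; the extra $\beta$ in the statement is a typo, so your ``exactly'' should be read against the corrected form. Your genericity remarks ($\tilde{w}_0\neq0$, which is the Hankel condition $\tilde{\Delta}_1\neq0$, and the nonvanishing of the denominator in the quotient) are apt and are left implicit in the paper.
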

\begin{proof}
Equation (\ref{eq:w01-via-beta0}) follows from the two expressions for the linear term of the polynomial $\tilde{D}$ in (\ref{eq:eg-pol-D1}), say $d_{1,1}$:
on the one hand, the expression that is obtained by using the asymptotic expansion (\ref{eq:expan-S1}) in (\ref{eq:ric-f1-x}) gives
\begin{equation*}
	d_{1,1}=-(3+\alpha+\beta+\mu)\tilde{w}_0\,,
\end{equation*}
whilst on the other hand, from (\ref{eq:eg-pol-D1}) we have
\begin{equation*}
	d_{1,1}=(3+\alpha+\beta+\mu)\beta_0^2-(2+\alpha+\mu+(2+\alpha+\beta)t)\beta_0+(1+\alpha)t\,.
\end{equation*}

The coefficient of $x^n$ in the equations in the $(1,2)$ entry of  (\ref{eq:eg-sylvester-Yn-t}) gives us, by virtue of (\ref{eq:another-hat-thetan}),
\begin{equation*}
	\tilde{w}_0=\frac{[({\beta}_0-t)(\ln w_0)'+\mu ]({\beta}_0-t)(\tilde{\beta}_0-t)}{2(\beta_0-t)(\ln w_0)' + \mu -\tilde{\beta}'_0}\,.
\label{eq:w0-aux}
\end{equation*}
Using (\ref{eq:der-w0}) in the previous equation we get 
\begin{equation}
	\tilde{w}_0=-\frac{(\tilde{\beta}_0-t)(\beta_0-t)\beta'_0}{(\beta_0-t)(\ln w_0)' - \beta'_0 -\tilde{\beta}'_0}\,.
\label{eq:w-0-1}
\end{equation}
The equation (\ref{eq:tildebeta-beta}) is then a consequence of (\ref{eq:w-0-1}) and (\ref{eq:w01-via-beta0}).
\end{proof}

\subsubsection{Relation with the Semi-classical Case}

Now that the Painlev\'e equation P$_{\textrm{VI}}$ for the sequences of orthogonal polynomials corresponding to $\tilde{f}$  was derived, and given that
$\tilde{f}$ is a M\"{o}bius transformation of $f$ (see (\ref{eq:til-f})),
\begin{equation*}
	\tilde{f}(x)=x-\beta_0-\frac{w_0}{f(x)}\,,
\end{equation*}
a natural question that can be posed is to relate the P$_{\textrm{VI}}$ solution just derived with the Painlev\'e solution that governs the sequences of polynomials corresponding to $f$.
Recall that $f$ is the Stieltjes transform of the {deformed} Jacobi weight $w$ defined by (\ref{eq:jacobi-w}).
The sequences of orthogonal polynomials corresponding to $w$,
 indeed taken orthonormal, $\{p_n\}_{n \geq 0}$,  were subject of analysis, e.g.,  in \cite{magnus-jcam}.
In \cite[pp. 227-228]{magnus-jcam} it is deduced that $q$ defined by the only zero of the $(1,2)$ entry of the matrix of the differential system
\begin{equation*}
	\frac{d}{dx}\left[
	\begin{array}{cc}
		p_{n} & \varepsilon_{n+1}/w \\
		p_{n-1} & \varepsilon_n/w
	\end{array}
	\right]=
	\frac{1}{A}
	\begin{bmatrix}
		\Omega_{n}-C/2 & -a_n\Theta_{n} \\
		a_n\Theta_{n-1} &\; -\Omega_{n}-C/2
	\end{bmatrix}\left[
	\begin{array}{cc}
		P_{n+1} & \varepsilon_{n+1}/w \\
		P_n & \varepsilon_n/w
	\end{array}\right]\,
\end{equation*}
(here, we are using the notation $A, C$ for the polynomials $W, V$ of \cite{magnus-jcam}, respectively, and $a_n$ are the recurrence relation coefficients defined by $a_n^2=\gamma_n$),
satisfies a P$_{\textrm{VI}}$ equation (\ref{eq:P6}) with parameters
$(\delta_1, \delta_2, \delta_3, \delta_4)= (\frac{1}{2}, -\frac{1}{2}\alpha^2, \frac{1}{2}\beta^2, \frac{1}{2}(1-\mu^2)).$
Hence, the Laguerre-Hahn si\-tua\-tion {\it{versus}} the semi-classical one differs in the boundary conditions.
On this topic we refer the interested reader to \cite[Sec. 2]{F-W1}, and the techniques used in the proof of Proposition 2.1, pages 12219-12220,
relying in certain identities resulting from expansions of Toeplitz determinants.

\section{Conclusions}\label{sec:4}
\setcounter{equation}{0}
We have demonstrated with an important example that the natural deformation of a Laguerre-Hahn class system, which is directly related to the deformed Jacobi weight,
is governed by solutions to the sixth Painlev\'e equation with a parameter (or monodromy exponents) set identical to the solution for this semi-classical weight.
The two solutions only differ in their boundary data, or equivalently the off-diagonal elements of their connection matrices.

However there is an interesting question posed by our result. We have only an implicit characterisation of this system by the fact that its Stieltjes transform satisfies a
certain Riccati differential equation - we have not given an explicit solution for this transform nor for the weight itself.
The construction of the transform or better still that of the weight, even in special cases, would be a major achievement.
And we would speculate that this problem bears some relation to the problem of constructing the weight for the associated Jacobi polynomials,
a task performed by Grosjean \cite{gros2} and Wimp \cite{Wim_1987}, via a deformation of this situation.
From a different point of view Nevai \cite{Nev_1984} has given a construction for associated polynomials in a general setting but the relationship of his method to the Riccati equation has yet to be explored.

Interestingly there is an apparent connection of this theme with the diagonal spin-spin correlations of the Ising model on the anisotropic square lattice.
These are known to be characterised by a particular solution of P$_{\rm VI}$, through the work of Jimbo and Miwa \cite{JM_1980}, and are also related to the Picard solution.
The correlations can be written as Toeplitz determinants of the first, second and third kinds of elliptic integrals with the deformation variable $t$ corresponding to the temperature,
and these determinants possess a simple algebraic symbol and are therefore semi-classical - see \cite{McCoy+Wu_1973}.
An observation made by Mangazeev and Guttmann \cite{MG_2010} was that these solutions have a one parameter generalisation
(denoted by $\lambda$) which enters only through the boundary conditions.
Assuming the Toda lattice equations appropriate to P$_{\rm VI}$ they observed that for explicit examples of small spin separations they could evaluate these as
Jacobi Elliptic functions for general $t$ and $\lambda$, with the value $ \lambda=1 $ recovering the standard model.
One of the authors (NSW) has confirmed that these explicit examples satisfy the other standard differential-difference relations of P$_{\rm VI}$.
Furthermore, it has been shown that the deformation differential equation for the Stieltjes function has the Riccati form with the quadratic term only if $ \lambda \neq 1 $.
It is also known that these solutions have a perturbation expansion about $ \lambda=0 $ which coincides with the form-factor expansion of the correlations.
However it is not known what symbol or weight yields these $\lambda$-correlations, nor the general formula for these at arbitrary separation.

\section*{Acknowledgments}
The work of MNR is partially supported by CMA-UBI (grants UID-B-MAT/00212/2020 and UID-P-MAT/00212/2020), funded by FCT.
The work of MNR is also partially supported by Dirección General de Investigación e Innovación, Consejería de Educación e Investigación of the Comunidad de Madrid (Spain), and Universidad de Alcalá under grant CM/JIN/2021-014, Proyectos de I+D para Jóvenes Investigadores de la Universidad de Alcalá 2021.
NSW would like to acknowledge the support of grant UID/MAT/00324/2019  provided by  FCT/Portugal
during the visit to CMUC, Universidade de Coimbra, in October-November 2019.
The authors wish to thank the referees for their careful revision of the manuscript
and in particular their comments and suggestions which have improved the presentation of the manuscript.

\end{document}